\renewcommand{\appendixname}{Anexo}
\theoremstyle{plain}
\newtheorem{theorem}{Theorem}
\newtheorem*{proposition*}{Proposition}
\theoremstyle{remark}
\newtheorem{lemma}{Lemma}
\newtheorem{corollary}{Corollary}
\newtheorem{definition}{Definition}
\newtheorem{example}{Example}
\DeclareSymbolFontAlphabet{\mathbbol}{bbold}
\DeclareSymbolFontAlphabet{\mathbb}{AMSb}
\newcommand{\step}[1]{\textbf{\textit{Step #1.}}}
\newcommand\thefontsize{\f@size pt}
\newcommand{\qtrace}{\Tr}
\newcommand{\CFI}{\mathcal{F}}
\newcommand{\QFI}{\mathcal{J}}
\newcommand{\QFIsp}{{\mathcal{J}_{\vert \vert, \theta}}}
\newcommand{\QFIext}{{\mathcal{J}_{\mathrm{Ext}, \theta}}}
\newcommand{\QFItr}{{\mathcal{J}_{\mathrm{Tr},\theta}}}
\newcommand{\sQFI}{\mathcal{I}_\theta}
\newcommand{\CFImax}{{\mathcal{F}_\theta}_\mathrm{max}}
\def\ANU{Centre for Quantum Computation and Communication Technology,
Department of Quantum Science and Technology, Australian National University, Canberra, ACT 2601, Australia}
\def\ASTAR{Quantum Innovation Centre (Q.InC), Agency for Science Technology and Research (A\!*STAR), 2 Fusionopolis Way, Innovis \#08-03, Singapore 138634, Singapore}
\def\QIST{Center for Quantum Technology, Korea Institute of Science and Technology, Seoul 02792, Republic of Korea}
\def\KUST{Division of Quantum Information, KIST School, Korea University of Science and Technology, Seoul 02792, Republic of Korea}
\def\CQT{Centre for Quantum Technologies, National University of Singapore,
3 Science Drive 2, Singapore 117543.}
\let\oldsec=\section
\let\oldsubsec=\subsection
\let\oldsubsubsec=\subsubsection
\renewcommand\section[1]{%
  \par
  \vspace{1.5ex}%
  {\raggedright\normalfont\large\bfseries #1\par}%
  \vspace{1ex}%
}
\renewcommand\subsection[1]{%
  \par
  \vspace{1.5ex}%
  {\raggedright\normalfont\bfseries #1\par}%
  \vspace{1ex}%
}
\renewcommand\subsubsection[1]{%
  \noindent
  {\bfseries #1.}%
}
\begin{document}

\title{Precision Bounds for Characterising Quantum Measurements}

\author{Aritra Das}
\email{aritra.das@anu.edu.au}
\affiliation{\ANU}
\author{Simon K. Yung}
\affiliation{\ANU}
\author{Lorc\'{a}n~O.~Conlon}
\affiliation{\ASTAR}
\affiliation{\CQT}
\author{\"{O}zlem Erk{\i}l{\i}\c{c}}
\affiliation{\ANU}
\author{Angus Walsh}
\affiliation{\ANU}
\author{Yong-Su Kim}
\affiliation{\QIST}
\affiliation{\KUST}
\author{Ping K. Lam}
\affiliation{\ANU}
\affiliation{\ASTAR}
\affiliation{\CQT}
\author{Syed M. Assad}
\affiliation{\ASTAR}
\author{Jie Zhao}
\email{jie.zhao@anu.edu.au}
\affiliation{\ANU}

\date{\today}

\begin{abstract}
Quantum measurements, alongside quantum states and processes, form a cornerstone of quantum information processing. However, unlike states and processes, their efficient characterisation remains relatively unexplored.
We resolve this asymmetry by introducing a comprehensive framework for
efficient detector estimation that
reveals the fundamental limits to extractable parameter information
and errors arising in detector analysis - the \emph{detector quantum Fisher information}.
Our development eliminates
the need to optimise for the best probe state,
while highlighting
aspects of detector analysis that fundamentally differ from quantum state estimation.
Through proofs,
examples and
experimental validation,
we demonstrate the relevance and robustness of our proposal
for current quantum detector technologies.
By formalising a dual perspective to state estimation,
our framework
completes and connects the
triad of efficient state, process, and detector tomography,
advancing quantum information theory with broader implications for
emerging technologies
reliant on precisely calibrated measurements.
\end{abstract}

\maketitle

Measurements hold a special place in quantum mechanics,
bridging abstract
quantum states and real-world classical observations.
This probabilistic transition
from the quantum to the classical, originally postulated by the Born rule~\cite{vonNeumann},
typically washes out key quantum features like superposition and entanglement, thereby limiting the information that can be extracted from a quantum system~\cite{Alfredo20}. The modern theory of quantum measurements has evolved well beyond the `observe-and-collapse' paradigm~\cite{Wiseman2009,NC10,LSP+11},
encompassing generalised measurements such as
weak measurements that blur the line between observation and interaction~\cite{Aharonov1988,Ritchie1991,Pryde2005,Hosten2008,LSP+11}.
Experimental practice has lent further credence to the utility of these generalised  measurements~\cite{Pryde2005,Hosten2008,LSP+11,KBR+11}, for instance by showing the advantage of entangling measurements for the precise characterisation of quantum states~\cite{Hou2018,Mansouri2022,Conlon2023,Conlon2023b,Conlon2023c}.

The measurement of a quantum state constitutes one part of the triad of quantum states, processes and detectors that forms the basis of any quantum information protocol~\cite{Feito2009}. From quantum estimation theory, the Quantum Fisher Information (QFI) provides a metric for the distinguishability of parametrised quantum states and processes, leading to bounds on estimation errors known as quantum Cram\'{e}r-Rao bounds (CRBs)~\cite{Paris2009,Meyer2021}. Strikingly, despite the fundamental and practical importance of quantum detectors, whether similar bounds exist for general quantum detectors remains an open question. In this paper we answer this question. More precisely, we ask whether  information-theoretic precision bounds exist for the accurate estimation of detector parameters.

The aforementioned disparity is particularly surprising given the
dual nature of states and measurements in quantum theory~\cite{Alfredo20}---a symmetry
suggesting their informational properties should be balanced. However, while provably optimal state estimation protocols are known, existing approaches to detector estimation overlook the efficiency of the process~\cite{Luis1999,MLEQM01}, and thus fail to inform optimal estimation strategies. Optimal detector estimation strategies are essential to extract maximal benefit from future quantum devices. For example, precise characterisations of state preparation and measurement (SPAM) errors on quantum computing platforms hinge on reliable detector reconstructions~\cite{LWH+21}. Large uncertainties in reconstructed detectors propagate into SPAM error estimates~\cite{Chen2019} that could limit platform performance and impede effective error mitigation. Similarly, photonic experiments rely on photodetectors~\cite{Lloyd2008,Lundeen2008,Piacentini2015,Schapeler2021,Endo2021} whose precise calibration must be known a priori. Although coherent state probes are one option for photodetector tomography~\cite{Lundeen2008,Feito2009,Piacentini2015}, it remains unclear to what extent quantum properties of light
such as entanglement and squeezing
could improve this precision~\cite{DAriano2004}.
These questions underscore the need
for a general framework
for efficient and high-precision reconstruction of the measurement operators.

In this work, we formulate and prove
the maximum information extractable
from probing unknown quantum measurements
and call it the \emph{detector quantum Fisher information}
(DQFI). The DQFI leads to fundamental limits,
that is, the quantum Cram\'{e}r-Rao bounds (QCRBs),
on the uncertainties in locally estimating parameters of measurement operators~\cite{Helstrom1967,Helstrom1968,Helstrom1969, Helstrom1974,Luis1999,MLEQM01},
thereby setting a performance
benchmark for past and future detector tomography experiments. In addition to providing the first fundamental bound for detector estimation, the DQFI enables us to draw the following physical insights: 1) We draw a direct analogy between the DQFI and the well-known state quantum Fisher information (SQFI) finding some similarities and some surprising differences. 2) We apply the DQFI to noisy on-off single-photon
detectors~\cite{Lundeen2008,Feito2009,Piacentini2015} and implement the first provably-optimal detector estimation experiment
using the IBM platform. 3)
We show that the proposed DQFI framework provides an alternative approach for optimal sensing of a quantum process~\cite{Meyer2021},
which requires both optimal probe states and optimal measurements (Fig.~\ref{fig:tomographytriad}(c)).
In practice, our alternative approach
could lead to more appropriate benchmarks
in settings where probe preparation capabilities
by far exceed control over the measurement device.

Overall, our work reveals surprising connections between the theories of state and detector tomography while highlighting the unique aspects of detector analysis.
These findings complete the triad of optimal state, process, and detector tomography
(Fig.~\ref{fig:tomographytriad})~\cite{Lundeen2008,Feito2009}, paving the way for a more balanced role of states and measurements in precision quantum metrology.

\begin{figure}[t]
    \centering
    \includegraphics[width=\columnwidth]{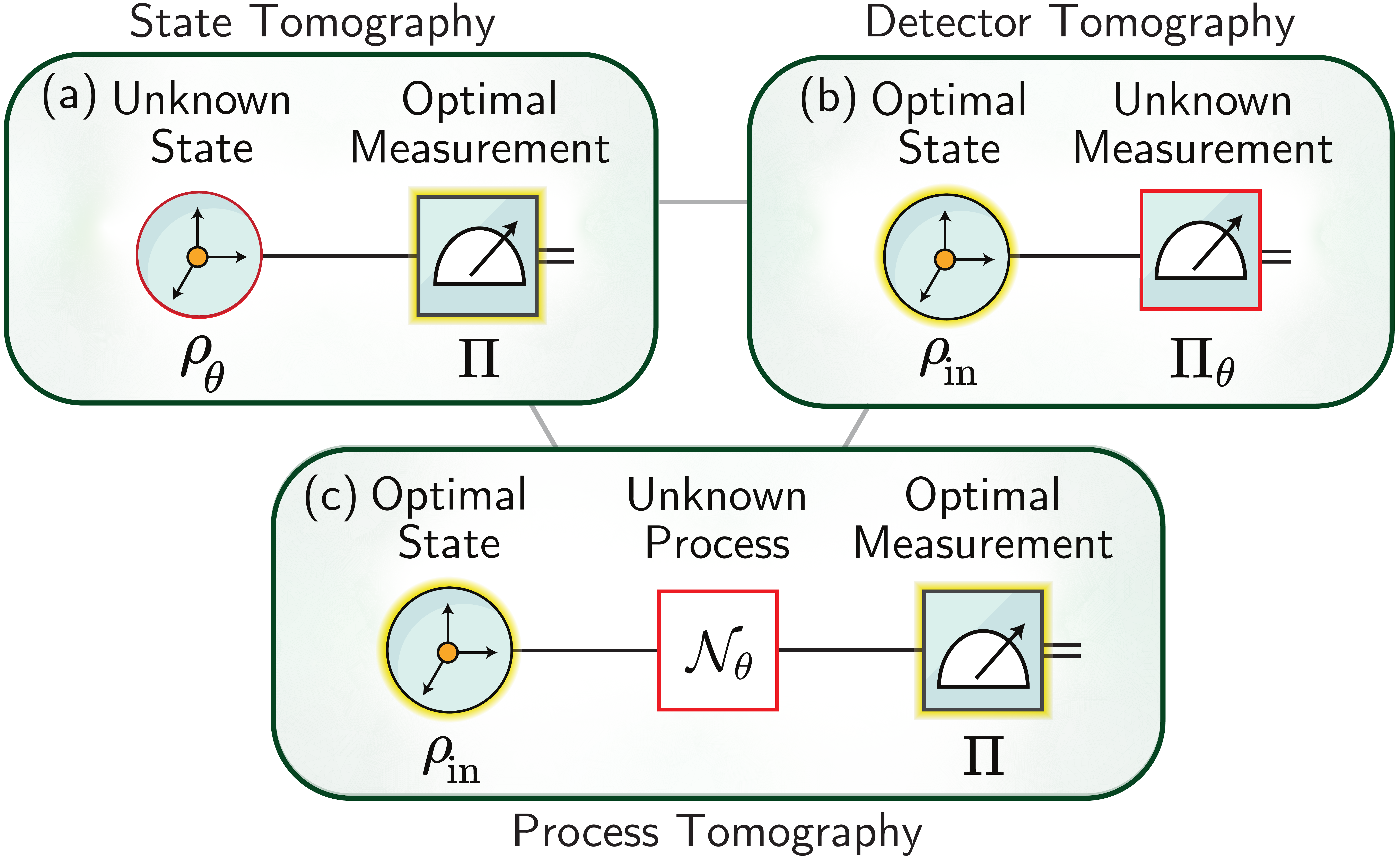}
    \caption{%
    \textbf{Triad of quantum state,
    detector and process tomography.}
     {\bf a,} Characterising an unknown quantum state
    invokes the SQFI, which specifies
    the optimal measurements.
    {\bf b,} Characterising an unknown measurement
    invokes the DQFI, which specifies
    the optimal probe states.
    {\bf c,} Characterising an unknown process
    invokes the process QFI, or maximum of
    output-state SQFI
    over all process-input states,
    simultaneously requiring the optimal
    probe states and measurements.}
    \label{fig:tomographytriad}
\end{figure}

\section{Introduction to the framework}
\label{sec:Background}

\noindent
Before introducing precision bounds
for the estimation of quantum measurements,
we briefly recap the
concept of
efficiency and how it leads to precision bounds
in quantum state
estimation~\cite{Helstrom1967,Helstrom1968,Helstrom1969,
Helstrom1974,Wootters81,BC94,Paris2009}.
The task here (Fig.~\ref{fig:tomographytriad}(a))
is to
estimate parameters~$\theta$
of an unknown state~$\rho_\theta$,
by first measuring the state
via a~$m$-outcome POVM,~$\Pi \equiv \{\pi_1, \dots, \pi_m\}$,
and then processing the
measured probabilities,~${p_\theta(j)} = \qtrace(\rho_\theta \pi_j)$,
to get the parameter estimates~\cite{Paris2009}.
The \emph{efficiency} or effectiveness
of the measurement~$\Pi$
in extracting parameter information
depends on how sensitive its measured statistics
are to changes in the parameter value.
In the local setting, where the unknown
parameters are close to some known true
values~\cite{Helstrom1968,Paris2009},
efficiency
is quantified by the
classical Fisher information (CFI) of the
distribution~\cite{Paris2004,Paris2009,Pezze17,Pezze18},
\begin{equation}
\label{eq:CFIdef}
    \CFI_\theta\left[\rho_\theta, \Pi\right] \coloneqq \sum_{j\in[m]}
    \frac{(\partial_\theta p_\theta(j))^2}{p_\theta(j)} = \mathbb{E}_{p_\theta}\left [ (\partial_\theta \log p_\theta)^2 \right ] \, ,
\end{equation}
capturing the variance
of the parameter-derivative~$\nicefrac{{\partial_\theta p_\theta}}{p_\theta}$
of the distribution
(abbreviating~$\partial_\theta
\coloneqq \frac{\partial}{\partial \theta}$
and~$[m]\coloneqq \{1, \dots, m\}$).
The minimum mean-squared error (MSE)
of parameter estimates achievable
(on average per trial)
using a measurement~$\Pi$ is given by~$1/\CFI_\theta$,
as per the classical CRB (CCRB)~\cite{Helstrom1969}.
Accordingly, optimal measurements are those
that maximise the
CFI in Eq.~\eqref{eq:CFIdef},
thus minimising the CCRB and
yielding the most
precise estimates~\cite{Paris2009,
HayashiOuyang2023,Jun2024}.
However,
maximising this
non-linear quantity is challenging in
practice,
even more so in the multi-parameter case~\cite{HayashiOuyang2023,Jun2024},
so
the state QFI (SQFI)~$\sQFI$ of state~$\rho_\theta$
was introduced as an upper bound to
the maximum CFI over all measurements~\cite{Helstrom1967,Helstrom1968,Helstrom1969}.
The SQFI is defined as~$\sQFI[\rho_\theta] \coloneqq \Tr(\rho_\theta L_\theta^2) = \mathbb{E}_{\rho_\theta} [L_\theta^2]$,
and satisfies
\begin{equation}
    \label{eq:stateestCFIQFI}
      \max_\Pi \CFI_\theta[\rho_\theta, \Pi] \leq \sQFI[\rho_\theta]
\end{equation}
for any valid~$\rho_\theta$,
with equality holding in several cases~\cite{BC94,Paris2009}.
The symmetric logarithmic derivative (SLD)
operator~${L_\theta}$ appearing in
the definition generalises the
classical logarithmic derivative~${\partial_\theta} {\log p_\theta}$
in Eq.~\eqref{eq:CFIdef}
implicitly via~\mbox{$L_\theta \rho_\theta {+} \rho_\theta L_\theta {=} 2 \partial_\theta \rho_\theta$}.
Following
Eq.~\eqref{eq:stateestCFIQFI},
the QCRB given by~$1/\sQFI$ lower-bounds
the MSE of estimates
for any measurement~$\Pi$,
thus setting a precision bound~\cite{Helstrom1969}.

\section{Results}
\label{sec:Methods}
\noindent
In the following, we introduce a general framework
for bounding estimation precision for
parameters of quantum measurements.
To start off, we consider detector models
with a single unknown parameter; these are
relevant for practical single-photon detectors
and noisy qubit detectors.
Then, we
generalise the framework to multi-parameter scenarios, addressing
general quantum detector tomography.

\subsection{Single-parameter detector estimation}
\noindent
The task here
(Fig.~\ref{fig:tomographytriad}(b))
is to characterise an
unknown $m$-outcome measurement
$\Pi_\theta \equiv \{{\pi_j}_\theta\}_{j\in[m]}$
by first probing it with a known quantum
state~$\rho_\mathrm{in}$, and then processing the
measured probabilities,~${p_\theta(j)} = \qtrace(\rho_\mathrm{in} {\pi_j}_\theta)$,
to obtain parameter estimates.
The \emph{efficiency} of the probe
state~$\rho_\mathrm{in}$ in extracting
parameter information from~$\Pi_\theta$
is determined by the CFI,~$\CFI_\theta[\rho_\mathrm{in}, \Pi_\theta]$,
which leads to the CCRB,~$1/\CFI_\theta$,
lower-bounding MSE when using this probe.
Thus, the optimal or most precise probe state,~$\rho^\mathrm{opt}$,
is the one maximising the CFI,
i.e.,
\begin{equation}
\label{eq:maxCFIdef}
\begin{aligned}
    \rho^\mathrm{opt}[\Pi_\theta] &\coloneqq \arg \max_{\rho_\mathrm{in}} \CFI_\theta[\rho_\mathrm{in}, \Pi_\theta] \, , \\
    {\CFI_\theta}_\mathrm{max}[\Pi_\theta] &\coloneqq \max_{\rho_\mathrm{in}} \CFI_\theta[\rho_\mathrm{in}, \Pi_\theta] \, .
\end{aligned}
\end{equation}
The maximisation in Eq.~\eqref{eq:maxCFIdef}
has no known analytical solution,
though numerical techniques
from channel literature may be applicable.
Specifically, by translating
the detector POVM to Kraus operators
for quantum-classical maps,
the maximisation for~${\CFI_\theta}_\mathrm{max}$
can, in principle, be converted to a maximisation over all equivalent
Kraus operators
through a
numerical semi-definite program~\cite{RDD2012}.
In practice, however, this approach is computationally demanding
due to the large number of Kraus operators for a measurement channel,
and leaves the question of the optimal probe~$\rho^\mathrm{opt}$
unanswered.
The lack of efficiently computable
precision bounds
for the minimum MSE,~$1/{\CFI_\theta}_\mathrm{max}$,
for detector estimation
reveals a gap in our understanding of
the information content of measurements
and resulting fundamental limits on
estimation errors.

We now clarify how the domain
of maximisation in Eq.~\eqref{eq:maxCFIdef}
corresponds to
various probing strategies for detector estimation.
Let~$\mathcal{H}_d$ denote the~$d$-dimensional Hilbert space
of probe states and~$\mathcal{D}(\mathcal{H}_d)$
the space of density matrices on~$\mathcal{H}_d$.
The simplest probing strategy is to pick
a single state,~$\rho_\mathrm{in}$, from~$\mathcal{D}(\mathcal{H}_d)$
to probe the detector, repeatedly and independently.
Alternatively, one may use
an ensemble of~$p$ different
probe states,~$\rho_k \in \mathcal{D}(\mathcal{H}_d)$
for~$k=1, \dots, p$,
each with probability~$q_k$~($\sum_k q_k=1$).
Which strategy is ultimately favoured?
As we show in
Lemma~\ref{lemma:singleoptimalprobe} in Methods,
from the convexity of the CFI~\cite{Pezze18},
a single quantum state is sufficient to be optimal.

Central to the CFI and the SQFI
is the logarithmic derivative quantity:
both information measures
capture the variance
of this quantity,
as in Eq.~\eqref{eq:CFIdef}.
To quantify the information
content of a measurement,
we first introduce
logarithmic-derivative operators
for each measurement operator.
For each outcome~$j\in[m]$
of $\Pi_\theta$, we
define an SLD operator~${L_j}_\theta$ via
\begin{equation}
\label{eq:SLDdetector}
    {L_j}_\theta {\pi_j}_\theta + {\pi_j}_\theta {L_j}_\theta \coloneqq 2 \,  \partial_\theta {\pi_j}_\theta \, .
\end{equation}
This implicit definition
has known solutions in terms
of the measurement operators
and their parameter derivatives
(Eqs.~\eqref{eq:Ljeachpovm},~\eqref{eq:vecunvecSLDsol}
in Methods)~\cite{Safranek2018}.
More importantly,
the SLD operators let us
express the measurement
probabilities~\mbox{$p_\theta(j\vert \rho) = \Tr(\rho {\pi_j}_\theta)$}
and their derivatives~\mbox{$\partial_\theta p_\theta(j \vert \rho) =
\Tr( \rho \,  \partial_\theta {\pi_j}_\theta)
=  \Re\left [\Tr({\pi_j}_\theta \,  \rho \, {L_j}_\theta)\right]$}
as linear operations
on the state and the measurement
(where~$\Re[\, \cdot \, ]$ denotes real part)~\cite{BC94,Paris2009}.
These two expressions, when substituted
into the CFI
in Eq.~\eqref{eq:CFIdef},
open the door to operator inequalities
that can gauge~${\CFI_\theta}_\mathrm{max}$
without
actual optimisation~\cite{Paris2009}.

\subsubsection{DQFI definition}
To upper-bound~$\CFI_\theta$
by an expression independent of~$\rho$,
we borrow from state estimation a chain
of inequalities~\cite{BC94,Paris2009}
that, when applied to
the CFI expressed in terms
of~${\pi_j}_\theta, {L_j}_\theta$, and~$\rho$,
yields
\begin{equation}
\label{eq:shortproofEq1}
    \CFI_\theta[\rho, \Pi_\theta] \leq \sum_{j\in[m]} \left \vert \frac{ \Tr({\pi_j}_\theta \, \rho \, {L_j}_\theta ) }{\sqrt{\Tr(\rho \, {\pi_j}_\theta)}} \right \vert^2 \leq \Tr \left (Q_\theta \rho \right ) \, ,
\end{equation}
where~$Q_\theta \coloneqq \sum_{j\in[m]} {L_j}_\theta {\pi_j}_\theta {L_j}_\theta$
is a positive semi-definite operator.
The first inequality above relies on~$\Re[z]^2 \leq \vert z\vert^2$
for complex number~$z$,
whereas the second is the operator
Cauchy-Schwarz inequality.
In state estimation
we obtain the measurement-independent
SQFI~$\sQFI$ at this point, but here
the trace quantity upper-bounding~$\CFI_\theta$
still depends on~$\rho$.
Our final step settles this
by maximising both sides
of Eq.~\eqref{eq:shortproofEq1}
over the probe space~$\mathcal{D}(\mathcal{H}_d)$,
resulting in~${\CFI_\theta}_\mathrm{max} \leq
\max_{\rho\in\mathcal{D}(\mathcal{H}_d)}   \Tr  ( Q_\theta \rho ) = \Vert Q_\theta \Vert^2_\mathrm{sp}$,
where~$\Vert \, \cdot \, \Vert^2_\mathrm{sp}$
denotes the largest eigenvalue
(or spectral radius)
of~$Q_\theta$.
Moreover, the probe
state maximising~$\Tr(Q_\theta \rho)$
is the eigenvector of~$Q_\theta$
corresponding to its
largest eigenvalue~\cite{footnote1}---this
state is thus always pure.

These results lay the ground for defining
the DQFI:
an upper bound to the maximum CFI
that leads to precision bounds for detector estimation.
We propose two definitions for the DQFI.

\begin{definition}[Spectral DQFI]
\label{def:QFI2}
Define the \emph{spectral} DQFI~$\QFIsp$
of a measurement~$\Pi_\theta \equiv \{{\pi_j}_\theta\}_{j\in[m]}$ as
\begin{equation}
    \QFIsp\left[\Pi_\theta\right] \coloneqq \big \Vert \sum_{j\in[m]} {L_j}_\theta {\pi_j}_\theta {L_j}_\theta \;  \big \Vert^2_\mathrm{sp} \, ,
\end{equation}
where~$\Vert X \Vert^2_\mathrm{sp}$ denotes the spectral radius
or largest eigenvalue of operator~$X$.
\end{definition}

A simpler but less tight
upper bound on~${\CFI_\theta}_\mathrm{max}$
worth consideration is~$\Tr Q_\theta$.
In fact,~$\Tr Q_\theta
= \sum_{j\in[m]} \Tr({\pi_j}_\theta {L_j}_\theta^2)$ resembles the SQFI~$\sQFI$.
Assembling the measurement operators
and their SLD counterparts into~$md\times md$ block-diagonal
operators~$\Pi_{\theta,\mathrm{bd}} \coloneqq
\oplus_{j\in[m]} {\pi_j}_\theta$
and~$L_{\theta,\mathrm{bd}}
\coloneqq \oplus_{j\in[m]} {L_j}_\theta$
reveals~$\Tr Q_\theta = \Tr(\Pi_{\theta,\mathrm{bd}} L_{\theta,\mathrm{bd}}^2) = \sQFI(\Pi_{\theta,\mathrm{bd}})$.
This approach is justified in
that~$\Pi_{\theta,\mathrm{bd}}$
is trace-constant (though not
unit-trace~\cite{footnote2})
like a state and~$L_{\theta,\mathrm{bd}}$
is indeed its SLD operator
when treated as such.
This weaker upper bound
results from treating the
measurement
as an unnormalised, higher-dimensional  state.

\begin{definition}[Trace DQFI]
\label{def:QFI1}
Define the \emph{trace} DQFI~$\QFItr$
of a measurement~$\Pi_\theta \equiv \{{\pi_j}_\theta\}_{j\in[m]}$
as
\begin{equation}
    \QFItr[\Pi_\theta] \coloneqq \sum_{j\in[m]} \Tr\left (  {L_j}_\theta {\pi_j}_\theta {L_j}_\theta \right ) \, .
\end{equation}
\end{definition}

\begin{figure}[t]
    \centering
    \includegraphics[width=\columnwidth]{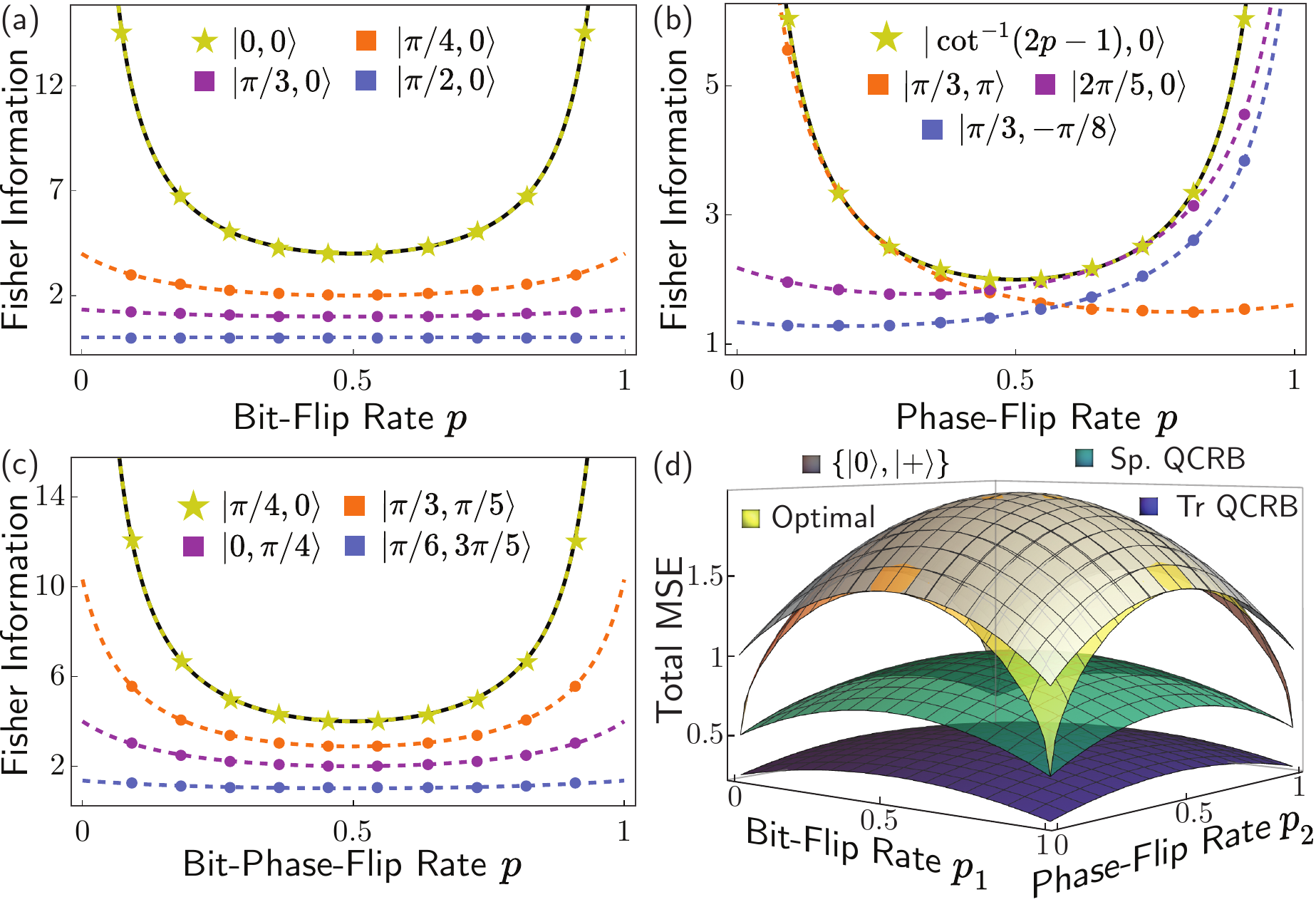}
    \caption{\textbf{Estimating qubit measurements subject to Pauli errors.}
    {\bf a-c,} Noisy~$Z$-measurement with bit-flip error, noisy~${(X+Z)}/{\sqrt{2}}$-measurement with phase-flip error, and noisy~${(X+Z)}/{\sqrt{2}}$-measurement with bit-phase-flip error, respectively.
    The Fisher information for
    probe states notated as~$\ket{\theta,\phi}\coloneqq \cos(\nicefrac{\theta}{2}) \ket{0} + \exp{i\phi} \sin(\nicefrac{\theta}{2}) \ket{1}$ is depicted as circular markers (orange, purple and blue) on dashed lines for non-optimal probes, and as
    star-shaped markers (golden) for the optimal probes.
    The DQFI~$\QFIsp$ (dashed black curve) is tight, as it
    equals the maximum CFI~${\CFI_\theta}_\mathrm{max}$.
    {\bf d,} Two-parameter estimation of a noisy~${(X+Z)}/{\sqrt{2}}$-measurement subject to independent
    bit-flip and phase-flip errors.
    Here, the two detector QCRBs, i.e.,~$\QFItr$ (blue) and~$\QFIsp$ (green) both
    overestimate the tight bound (golden)
    for total MSE.}
    \label{fig:tightQFIexamples}
\end{figure}

Having proposed two definitions
for the DQFI, we must compare the two
to understand their maximum disagreement
and cases where they are equivalent.
It is evident from the positivity of~$Q_\theta$
that~$\QFIsp\leq\QFItr$.
On the other hand,
the largest eigenvalue of~$Q_\theta$
cannot be smaller than
its average eigenvalue,
so that~$\QFItr/d \leq \QFIsp$.
This factor of~$d$ represents the maximum
disagreement and, altogether,
we find the following ordering
of the DQFIs,
\begin{equation}
\label{eq:ordering}
    \frac1{d} \,  \QFItr \leq \QFIsp
    \leq \QFItr \leq d \, \QFIsp \, .
\end{equation}
The cause of the disagreement
clarifies why the spectral DQFI
is tighter---it respects that quantum states are normalised,
so that probing along multiple directions
requires sacrificing some probability of detection,
and hence parameter information,
along each direction.
Nonetheless,
the trace DQFI serves as a simpler upper bound
that is more amenable to analytical
evaluation.

To consolidate the two DQFI definitions, we propose Theorem~\ref{th:upperbound} in the following.
\begin{theorem}[DQFI upper-bounds maximum CFI]
\label{th:upperbound}
    For estimating quantum detectors,
    the spectral DQFI~$\QFIsp$ and the trace DQFI~$\QFItr$
    upper-bound the maximum CFI over probe states,
    ${\CFI_\theta}_\mathrm{max}$, as
        ${\CFI_\theta}_\mathrm{max} \leq \QFIsp \leq \QFItr$.
\end{theorem}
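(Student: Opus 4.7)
The plan is to retrace the chain of inequalities in Eq.~\eqref{eq:shortproofEq1}, close it off by maximising over the probe state, and then compare the two DQFI definitions. First I would substitute $p_\theta(j\vert\rho) = \Tr(\rho\,{\pi_j}_\theta)$ and $\partial_\theta p_\theta(j\vert\rho) = \Re[\Tr({\pi_j}_\theta\,\rho\,{L_j}_\theta)]$ (both immediate consequences of the SLD defining relation in Eq.~\eqref{eq:SLDdetector}) into the CFI in Eq.~\eqref{eq:CFIdef}. The elementary bound $\Re[z]^2 \leq \vert z\vert^2$ applied termwise reproduces the first inequality of Eq.~\eqref{eq:shortproofEq1}.

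For the second inequality I would invoke the operator Cauchy--Schwarz inequality with the symmetric factorisation $\Tr({\pi_j}_\theta\,\rho\,{L_j}_\theta) = \Tr(A_j^\dagger B_j)$, where $A_j \coloneqq \sqrt{\rho}\,\sqrt{{\pi_j}_\theta}$ and $B_j \coloneqq \sqrt{\rho}\,{L_j}_\theta\,\sqrt{{\pi_j}_\theta}$. This split is engineered so that $\Tr(A_j^\dagger A_j) = \Tr(\rho\,{\pi_j}_\theta)$ exactly cancels the $1/p_\theta(j)$ weight in the CFI, while $\Tr(B_j^\dagger B_j) = \Tr(\rho\,{L_j}_\theta\,{\pi_j}_\theta\,{L_j}_\theta)$ by cyclicity. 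Summing over $j\in[m]$ then collapses the right-hand side to $\Tr(\rho\,Q_\theta)$ with $Q_\theta \coloneqq \sum_{j\in[m]}{L_j}_\theta\,{\pi_j}_\theta\,{L_j}_\theta \succeq 0$, recovering the second inequality of Eq.~\eqref{eq:shortproofEq1}.

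To obtain the first bound of the theorem I would maximise both sides over $\rho \in \mathcal{D}(\mathcal{H}_d)$. Because $Q_\theta$ is Hermitian and positive semi-definite, $\max_\rho \Tr(\rho\,Q_\theta)$ equals its largest eigenvalue, attained on the corresponding eigenvector (a pure state); in the paper's notation this is $\Vert Q_\theta\Vert_{\mathrm{sp}}^2 = \QFIsp[\Pi_\theta]$, giving ${\CFI_\theta}_\mathrm{max} \leq \QFIsp$. The second bound $\QFIsp \leq \QFItr$ is then immediate: the largest eigenvalue of a positive semi-definite operator cannot exceed the sum of all its eigenvalues, i.e., its trace, which equals $\sum_j \Tr({L_j}_\theta\,{\pi_j}_\theta\,{L_j}_\theta) = \QFItr[\Pi_\theta]$; this also reproduces the middle inequality of Eq.~\eqref{eq:ordering}.

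The main obstacle is the Cauchy--Schwarz step, specifically picking the factorisation of ${\pi_j}_\theta\,\rho\,{L_j}_\theta$ that yields $\Tr(\rho\,{\pi_j}_\theta)$ on one side (so that the CFI denominator is cancelled cleanly after summation) and a positive operator assembling into $Q_\theta$ on the other. Once the symmetric split of $\sqrt{\rho}$ and $\sqrt{{\pi_j}_\theta}$ is fixed, the remainder is routine algebra using the Hermiticity of the SLDs and cyclicity of the trace; any asymmetric split would either leave the denominator dangling or spoil the positivity needed for the final spectral-radius maximisation.
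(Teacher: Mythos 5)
Your proposal is correct and follows essentially the same route as the paper: the same substitution of $\partial_\theta p_\theta(j)=\Re[\Tr({\pi_j}_\theta\rho {L_j}_\theta)]$, the same $\Re[z]^2\le|z|^2$ step, the same symmetric Cauchy--Schwarz factorisation into $\sqrt{\rho}\sqrt{{\pi_j}_\theta}$ and $\sqrt{\rho}{L_j}_\theta\sqrt{{\pi_j}_\theta}$ collapsing to $\Tr(Q_\theta\rho)$, and the same spectral-radius/trace comparison for the final two inequalities. No gaps.
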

\begin{proof}
    The proof follows from Eqs.~\eqref{eq:shortproofEq1}
    and~\eqref{eq:ordering} and is provided in Methods.
\end{proof}
\noindent
Theorem~\ref{th:upperbound} leads to
the trace QCRB~$1/\mathcal{J}_{\Tr,\theta}$
and the spectral QCRB~$1/\mathcal{J}_{\Vert,\theta}$,
which
lower-bound the minimum MSE of estimates
as~$1/\mathcal{J}_{\Tr,\theta}  \leq 1/\mathcal{J}_{\Vert,\theta} \leq 1/{\CFI_\theta}_\mathrm{max}$,
thereby setting precision bounds for detector estimation.

\begin{example}
\label{eg:bitflipdet}
We now present a simple example to illustrate the importance of the DQFI.
Consider a qubit $Z$-basis detector
with inherent bit-flip noise~\cite{NC10}
of unknown strength~$p$
($0\leq p \leq 1$),
corresponding to the POVM~$\Pi_p \equiv \{{\pi_1}_p, {\pi_2}_p\}$
with
\begin{equation}
    {\pi_1}_p = \begin{pmatrix} 1-p & 0 \\ 0 & p \end{pmatrix}
    \quad \& \quad
    {\pi_2}_p = \begin{pmatrix} p & 0 \\ 0 & 1-p \end{pmatrix} \, .
\end{equation}
Figure~\ref{fig:tightQFIexamples}(a) illustrates the Fisher information
in estimating $p$ for both non-optimal and optimal probe states.
We find that the spectral DQFI~${\mathcal{J}_{\vert \vert, p}}=\nicefrac{1}{p(1-p)}$
(black dashed curve),
and the trace DQFI~${\mathcal{J}_{\mathrm{Tr},p}} = \nicefrac{2}{p(1-p)}$.
The CFI for a generic pure probe state~$\rho$
(coloured dashed curves)
is~$4/(1/\langle Z \rangle^2-(1-2p)^2)$
where~$\langle Z \rangle = \Tr(\rho Z)$
is the Pauli~$Z$-expectation of~$\rho$.
The CFI is maximised at~$\langle Z \rangle =\pm 1$,
corresponding to~$\ket0$ or~$\ket1$,
attaining a maximum of~${\CFI_p}_\mathrm{max} = \nicefrac{1}{p(1-p)}$
in either case
(golden points).
As~$Q_p=\nicefrac{1}{p(1-p)} \mathds{1}_2$,
its eigenstates are also~$\ket0$ and~$\ket1$.
Thus, the maximum CFI over probe states
equals~${\mathcal{J}_{\vert \vert, p}}$
and is attained by probes~$\ket{0}$
or~$\ket{1}$, in
agreement with~${\mathcal{J}_{\vert \vert, p}}$.
Here we
have~\mbox{${\mathcal{J}_{\mathrm{Tr},p}}=2{{\mathcal{F}_p}_\mathrm{max}}=2{\mathcal{J}_{||,p}}$},
displaying
the maximum disagreement possible for~$d=2$.
\end{example}

\subsubsection{Attainability of DQFI}
In the above example,
the spectral DQFI is attainable,
i.e.,~$\QFIsp={\CFI_\theta}_\mathrm{max}$,
and the optimal probe states are among
the basis states. These two statements are
generally true for diagonal or phase-insensitive
measurements,
because inequality~\eqref{eq:shortproofEq1}
is saturated, as we state in the following theorem.
\begin{theorem}[Attainability for diagonal measurements]
\label{th:diagonaltheorem}
    For estimating a phase-insensitive
    detector represented by a diagonal POVM~$\Pi_\theta$,
    the DQFI~$\QFIsp$ is attainable, i.e.,~$\QFIsp = {\CFI_\theta}_\mathrm{max}$,
    and an optimal probe exists within the family
    of basis states.
\end{theorem}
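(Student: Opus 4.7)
The plan is to show that when $\Pi_\theta$ is diagonal in a common eigenbasis $\{\ket{k}\}_{k=1}^{d}$, both inequalities in Eq.~\eqref{eq:shortproofEq1} collapse to equalities for probes drawn from this basis, and that the largest diagonal entry of $Q_\theta$ coincides with its spectral radius. I first observe that since each ${\pi_j}_\theta$ is diagonal in $\{\ket{k}\}$, so is $\partial_\theta {\pi_j}_\theta$. Equation~\eqref{eq:SLDdetector} then admits a diagonal solution ${L_j}_\theta$ in the same basis, with entries $({L_j}_\theta)_{kk} = \partial_\theta ({\pi_j}_\theta)_{kk} / ({\pi_j}_\theta)_{kk}$ on the support of ${\pi_j}_\theta$. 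Consequently, $Q_\theta = \sum_{j\in[m]} {L_j}_\theta {\pi_j}_\theta {L_j}_\theta$ is diagonal, so its spectral radius is simply $\QFIsp = \max_{k} (Q_\theta)_{kk}$, achieved by the eigenvector $\ket{k^\star}$ corresponding to the maximising index.

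Next, I would evaluate the CFI for the candidate optimal probe $\rho = \ketbra{k^\star}{k^\star}$. The outcome probabilities are $p_\theta(j) = ({\pi_j}_\theta)_{k^\star k^\star}$, so
\begin{equation*}
\Tr({\pi_j}_\theta \, \rho \, {L_j}_\theta) = ({\pi_j}_\theta)_{k^\star k^\star} ({L_j}_\theta)_{k^\star k^\star} \in \mathbb{R} \, .
\end{equation*}
Because this quantity is real, the first inequality in Eq.~\eqref{eq:shortproofEq1} (which discarded imaginary parts via $\Re[z]^2 \leq |z|^2$) is saturated. For the second (operator Cauchy-Schwarz) step, both sides evaluate to the same sum $\sum_j ({\pi_j}_\theta)_{k^\star k^\star} ({L_j}_\theta)_{k^\star k^\star}^2 = (Q_\theta)_{k^\star k^\star}$, because the rank-one probe forbids any ``off-diagonal'' slack. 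Hence $\CFI_\theta[\ketbra{k^\star}{k^\star}, \Pi_\theta] = (Q_\theta)_{k^\star k^\star} = \QFIsp$, which combined with Theorem~\ref{th:upperbound} yields ${\CFI_\theta}_\mathrm{max} = \QFIsp$, with the maximum attained by a basis state.

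The main obstacle I anticipate is handling indices $k$ for which some ${\pi_j}_\theta$ has a zero eigenvalue, making the SLD entry $({L_j}_\theta)_{kk}$ nominally ill-defined. I would resolve this by noting that $\partial_\theta ({\pi_j}_\theta)_{kk}$ must also vanish at such points (otherwise positivity of ${\pi_j}_\theta$ would fail in a neighbourhood of $\theta$), so the corresponding term in the CFI is zero by the standard convention $0/0 \equiv 0$, and the SLD can be chosen (non-uniquely) to be zero on the kernel; the diagonal identities above then remain valid. A secondary subtlety is verifying that no non-basis probe can do strictly better: this follows directly from Theorem~\ref{th:upperbound}, since $\QFIsp$ is already an upper bound on $\CFI_\theta$ over all $\rho \in \mathcal{D}(\mathcal{H}_d)$, and our basis-state construction saturates it.
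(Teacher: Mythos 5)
Your proof is correct, but it takes a genuinely different route from the paper's. You argue ``bottom-up'': since the ${\pi_j}_\theta$ are simultaneously diagonal, the SLDs and hence $Q_\theta$ are diagonal, so you can exhibit the explicit basis state $\ket{k^\star}$ attaining the largest diagonal entry of $Q_\theta$, compute its CFI directly as $(Q_\theta)_{k^\star k^\star}$, and invoke Theorem~\ref{th:upperbound} to conclude ${\CFI_\theta}_\mathrm{max}=\QFIsp$. This is essentially the specialisation of the general attainability criteria of Theorem~\ref{th:attaincrit} (commuting SLDs sharing the top eigenvector of $Q_\theta$) to the diagonal case, and your treatment of the kernel of ${\pi_j}_\theta$ (forcing $\partial_\theta({\pi_j}_\theta)_{kk}=0$ there by positivity) is the right way to make the $0/0$ terms harmless. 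The paper instead argues ``top-down'' in Supp.~Mat.~\ref{supp:CFIoptdiag}: it first shows the CFI is invariant under the parameter-independent diagonalising unitary, then that replacing any optimal probe by its diagonal part preserves all probabilities and derivatives (so a diagonal probe is without loss of generality optimal), and finally uses convexity of $F(x_1,x_2)=x_2^2/x_1$ to show the CFI over the simplex of diagonal probes is maximised at a vertex, i.e.\ a basis state. Your approach is shorter and makes the identification $\max_k\CFI_\theta[\ketbra{k},\Pi_\theta]=\Vert Q_\theta\Vert^2_\mathrm{sp}$ fully explicit, at the cost of only certifying one saturating probe; the paper's convexity argument additionally characterises the whole landscape of probes (any optimal probe can be dephased to a diagonal one, and the CFI is convex over diagonal probes), which is what lets it handle the slightly more general simultaneously-diagonalisable, parameter-independent-basis case cleanly — though your argument extends to that setting with only notational changes.
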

\begin{proof}
    The proof is presented in Supp. Mat.~\ref{supp:CFIoptdiag}.
\end{proof}
\noindent
Phase-insensitive measurements
feature prominently in
detector characterisation experiments,
in part because they do not require
phase stabilisation of probes~\cite{Lundeen2008,Feito2009}.
Indeed,
avalanche photodiodes (APDs)~\cite{Lundeen2008,Feito2009},
multiplexed photon-number-resolving detectors~\cite{Lundeen2008,Feito2009,Piacentini2015},
and superconducting nanowire single-photon detectors~\cite{Schapeler2020,Schapeler2021,Endo2021}
have been experimentally
characterised
in this setting.
Given the extensive usage
of these detectors
across quantum technologies,
understanding the information-theoretic bounds for
estimating them has practical implications for benchmarking
existing approaches to detector characterisation
and designing more effective strategies. Our result in
Theorem~\ref{th:diagonaltheorem} means that
the spectral QCRB
sets the ultimate precision limit
here.

The proof of
Theorem~\ref{th:diagonaltheorem}
also covers measurements
that are simultaneously diagonalisable
independent of the parameters
(see Supp. Mat.~\ref{supp:CFIoptdiag}~\cite{footnote3}).
This category
includes
qubit detectors with phase-flip noise
(Fig.~\ref{fig:tightQFIexamples}(b))~and
bit-phase-flip noise
(Fig.~\ref{fig:tightQFIexamples}(c))~\cite{NC10},
so the spectral DQFI
is tight for all single-Pauli error
qubit detectors. More generally,
we find a necessary condition for
attainability of the spectral DQFI to be
that the SLD operators
commute,~$[{L_j}_\theta, {L_k}_\theta]=0$, akin to
`compatible' multi-parameter state
estimation~\cite{Paris2009,Pezze17,Liu2019,Conlon2023b,conlon2025role}
(see Theorem~\ref{th:attaincrit} in Methods
for the complete set of necessary and sufficient
attainability criteria).
However, if the SLD operators do not commute,
the spectral DQFI could be unattainable even in the single-parameter regime,
which is not the case in state estimation.
Interestingly, in this case,
the attainable bound can be computed through a semidefinite program
(Supp. Mat.~\ref{sec:tightboundsdp})
that effectively incorporates non-Hermitian components
into the SLD operator definition in Eq.~\eqref{eq:SLDdetector}
(see Methods).

\begin{figure*}[t]
    \centering
    \includegraphics[width=\textwidth]{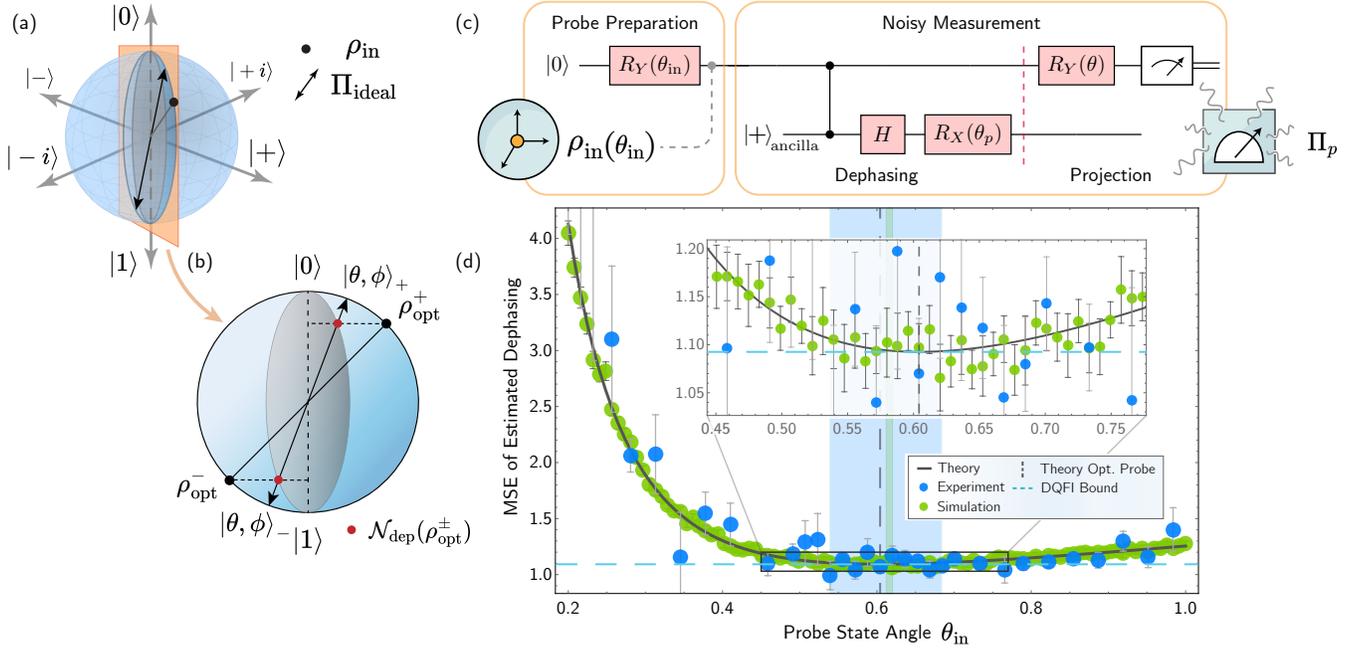}
    \caption{\textbf{DQFI applied to
    dephased projection measurements
    of qubits.}
    {\bf a,} The optimal probe state
    for estimating dephasing strength~$p$
    lies in a vertical section (orange plane)
    of the Bloch sphere containing
    the noiseless measurement~$\Pi_\mathrm{ideal}$
    corresponding to~$p=0$.
    {\bf b,} The intersection that contains the measurement projectors~$\ket{\theta,\phi}_{\pm}$
    (arrowheads) depicts the dynamics of probes
    under dephasing:
    the dephasing action~$\mathcal{N}_\mathrm{dep}$
    contracts the blue disk horizontally
    (towards the vertical dashed line)
    by a factor of~$(1-2p)$
    to form the grey elliptic region.
    The optimal probe states~$\rho_\mathrm{opt}^{\pm}$
    (black dots) are those for which
    the dephased states~$\mathcal{N}_\mathrm{dep}(\rho_\mathrm{opt}^{\pm})$
    (red dots)
    align with
    the measurement
    direction~$\ket{\theta, \phi}_\pm$.
    {\bf c,} Circuit for estimating the noisy measurement~$\Pi_p$
    comprised of environment-assisted dephasing~$\mathcal{N}_{\mathrm{dep}}$
    followed by the
    projection~$\Pi_\mathrm{ideal}$. The varying probe states,~$\rho_\mathrm{in}$, are controlled by polar angles~$\theta_\mathrm{in}$.
    (For details of the circuit implementation
    see Methods.)
    {\bf d,} The experimental MSE (blue dots), superimposed by the noiseless simulation results (green dots), agrees well with the theoretical black curve. The MSEs are plotted
    against the DQFI bound (dashed blue line).
    The minimum-MSE probe angle
    from simulation ($90\%$ confidence interval
    shown in green shading)
    and from experiment (in blue shading)
    match the theory, shown as dashed black line.
    The inset shows a zoomed-in range of probe
    angles near this optimal point. The simulation error bars
    are statistical arising
    due to finite samples
    whereas the experimental error bars
    also include the effect of platform noise
    (see Supp. Mat.~\ref{app:IBMExp}).}
    \label{fig:DephasedDetIBM}
\end{figure*}

\subsubsection{Platform demonstration}
We now consider a simple, but practically relevant and experimentally feasible example, of estimating a qubit detector affected by dephasing noise~\cite{NC10}.
Given that dephasing is particularly prominent on current quantum computing platforms~\cite{WUZ17,Krantz19}, its precise characterisation
is crucial for effective noise control and suppression.
Other noise models~\cite{NC10} are deferred to Supp. Mat.~\ref{sec:SuppNoteEgsApplication}.

\begin{example}
\label{eg:example2dephasing}
    Consider a generic dephased
    qubit projective measurement,
    corresponding to the POVM~$\Pi_p \equiv \{{\pi_1}_p, {\pi_2}_p\}$
    with elements
    \begin{align}
        {\pi_1}_p &= \frac12 {\begin{pmatrix} {1+ \cos\theta} & {e^{-i\phi} (1-2p)\sin\theta} \\ {e^{i \phi} (1-2p) \sin\theta} & {1-\cos\theta}  \end{pmatrix}} \, , \\
        {\pi_2}_p &= \frac12 {\begin{pmatrix} {1- \cos\theta} & {-e^{-i\phi} (1-2p)\sin\theta} \\ {-e^{i \phi} (1-2p) \sin\theta} & {1+\cos\theta}  \end{pmatrix}} \, , \nonumber
    \end{align}
    where~$p$ denotes the unknown dephasing strength
    ($0\leq p \leq 1/2$).
    Here, the noiseless measurement~$\Pi_\mathrm{ideal}$
    corresponding to~$p=0$ is assumed known.
    It corresponds to projections along polar and azimuthal
    angles~$\theta$ and~$\phi$
    on the Bloch sphere
    (double arrows in Fig.~\ref{fig:DephasedDetIBM}(b)), via two orthogonal
    projectors,~$\ket{\theta,\phi}_{\pm}$, given by the non-trivial eigenvectors of~${\pi_j}_0$.
    The spectral and trace DQFIs can be calculated as
    \begin{equation}
    \label{eq:DepDetQFIs}
        \QFI_{\Vert, p} = \frac{\sin^2\theta}{p(1-p)} \quad \& \quad
        \QFI_{\Tr, p} = \frac{2\sin^2\theta}{p(1-p)} \, .
    \end{equation}
    The two optimal probe states,~$\rho_\mathrm{opt}^\pm$,
    are both pure and have phase~$\phi$,
    thus lying in the same vertical plane
    as measurement projectors~$\ket{\theta,\phi}_{\pm}$,
    as shown in Fig.~\ref{fig:DephasedDetIBM}(a).
    This observation, magnified in Fig.~\ref{fig:DephasedDetIBM}(b),
    provides an intuitive explanation
    of the optimal probes
    (black dots in
    Fig.~\ref{fig:DephasedDetIBM}(b)):
    they are the pure states
    that align perfectly
    with the measurement direction upon dephasing.
    The polar angles~$\theta_{\pm}$ of
    these  optimal probes,
    given by~$\tan\theta_{\pm} = \pm \tan\theta/(1-2p)$,
    can be deduced geometrically
    from Fig.~\ref{fig:DephasedDetIBM}(b).
\end{example}

We now experimentally
implement Example~\ref{eg:example2dephasing}
on an IBM Eagle r3 quantum computer
as follows.
First we introduce dephasing noise
of fixed strength~$p{=}0.2$
by interacting the probe
qubit with an ancilla qubit (Fig.~\ref{fig:DephasedDetIBM}(c)).
Then we estimate the noise strength
using a range of probe states
including both optimal and non-optimal ones
(see Methods for details).
Critically,
the noiseless measurement is fixed at~$(\theta,\phi)=(\pi/8, 0)$,
differentiating the problem from
state or process estimation.
In Fig.~\ref{fig:DephasedDetIBM}(d),
we compare the experimental MSEs
for each probe state
(blue dots)
to the CCRB~$1/\CFI_p$ (black curve)
that gives the theoretical minimum MSE~\cite{MLEQM01}
for this probe state
and to the MSEs from
noiseless simulation (green dots).
The spectral QCRB~$1/\QFI_{\Vert,p}$
(horizontal blue dashed line)
lower-bounds the MSE for any probe state.
The theory-optimal
(vertical black dashed line),
simulated-optimal and experimentally-inferred
optimal probes agree within confidence
intervals (green and blue regions),
consistent with platform noise.

In contrast to previous efforts to characterise the platform's SPAM errors through detector tomography~\cite{Chen2019}, which do not use optimal quantum states, our work provides the first instance of provably optimal detector tomography.

\subsection{Multi-parameter detector estimation and tomography}
\noindent
In quantum state estimation
there are many
physically motivated scenarios where we will want to
estimate multiple parameters
simultaneously~\cite{Conlon2023,Conlon2023b,vidrighin2014joint,szczykulska2017reaching,vrehavcek2017multiparameter,chrostowski2017super,hou2020minimal,cimini2019quantum}.
Similarly, there exist scenarios where
multi-parameter detector estimation is favoured.
A prominent example is full detector tomography
where the maximal number of unknown parameters
are simultaneously inferred
(see below and Refs.~\cite{Lundeen2008,Feito2009} for more examples).
For state estimation, it is known that measuring
the parameters simultaneously can provide
greater sensitivity compared to measuring them
sequentially~\cite{baumgratz2016quantum}.
Analogously, multiple parameters
of a detector can be successively estimated
using the single-parameter framework~\cite{Albarelli2022},
but achieving the ultimate precision limit
requires the simultaneous estimation
of all parameters.
In this setting,
the CFI is a matrix
reflecting the probe's sensitivity to each parameter individually
as well as correlations in sensitivities to multiple parameters
(defined in Eq.~\eqref{eq:multiparaCFI}, Supp. Mat.~\ref{supp:MultiPara}).
The corresponding QCRBs
minimise a weighted-sum of the different
parameter variances and covariances~\cite{Meyer2021},
but an ensemble of probe states
is generally required (see discussion in Methods).
For simplicity, here, we set the
weight to identity and minimise total MSE
of the parameters,
though our techniques also apply
to arbitrary weight matrices
(see Supp. Mat.~\ref{supp:MultiPara}).
Below, we propose two QCRBs
that lower-bound the
attainable minimum total MSE,
which is defined as the minimum CCRB
over all probing ensembles,
\begin{equation}
\label{eq:CCRB}
\mathcal{C}^\mathrm{CCRB}_* \coloneqq \min \Tr(\CFI_\theta^{-1}) \,.
\end{equation}

We first extend the trace DQFI to the multi-parameter setting.
For estimating~$n$ real
parameters,~$\theta\coloneqq{(\theta_1, \dots, \theta_n)}$,
the trace DQFI from Def.~\ref{def:QFI1}
symmetrically extends to
the~$n\times n$ real matrix~$\QFItr$
with elements
\begin{equation}
\label{eq:multiparatraceQFI}
    ( \QFItr )_{jk} \coloneqq \frac12 \Tr \left [ \sum_{l \in [m]} \left ( L^{l}_{\theta_j} \pi_l L^{l}_{\theta_k}  +  L^l_{\theta_k} \pi_l L^l_{\theta_j} \right )\right ] \, ,
\end{equation}
where~$L^{l}_{\theta_j}$
is the SLD operator
for the~$l^\text{th}$ outcome
and the~$j^\text{th}$ parameter.
This matrix
upper-bounds the CFI matrix
for any probing strategy as
\mbox{${\QFItr} - {\CFI_\theta} \succcurlyeq 0$}
(Theorem~\ref{th:multipara}
in Supp. Mat.~\ref{supp:MultiPara}),
where~$A \succcurlyeq 0$ denotes~$A$
to be positive semi-definite.
The trace QCRB,~\mbox{${\mathcal{C}^\mathrm{QCRB}_{\Tr}} \coloneqq
{\Tr\big(\mathcal{J}_{\Tr,\theta}^{-1}\big)}$},
thus
lower-bounds $\mathcal{C}^\mathrm{CCRB}_*$ in Eq.~\eqref{eq:CCRB}.
However, due to
the limitation
discussed below Eq.~\eqref{eq:ordering},
the trace QCRB is generally unattainable,
as in
Example~\ref{eg:example2multi}.

Next we consider the more complex problem of extending the spectral DQFI to
the spectral QCRB, denoted~${\mathcal{C}}^{\mathrm{QCRB}}_{\Vert}$.
This is done
through an efficient semi-definite program
(in Supp. Mat.~\ref{supp:MultiPara}),
where we define a linear objective that is maximised,
symmetrically extending~${\Tr(Q_\theta \rho)}$
from Eq.~\eqref{eq:shortproofEq1}
to multiple parameters.
This results in a bound tighter than
the trace QCRB, see Supp. Mat.~\ref{supp:MultiPara} for details.
Notably, the spectral QCRB
does not require considering an ensemble of probes,
unlike the minimisation in Eq.~\eqref{eq:CCRB},
thus offering a practical computational advantage.

Example~\ref{eg:example2multi} below compares
the two QCRBs to~${\mathcal{C}^\mathrm{CCRB}_*}$,
showing the spectral bound to be tight for an
experimentally relevant problem~\cite{Lundeen2008,Feito2009,Piacentini2015}.

\begin{example}
\label{eg:example2multi}
Consider the tomography
of a phase-insensitive,
on-off qubit detector.
The POVM~$\Pi_\theta \equiv \{{\pi_1}_\theta, {\pi_2}_\theta\}$ representing
such a measurement may be
parametrised as
\begin{equation}
    {\pi_1}_\theta = \begin{pmatrix}
        \theta_1 & 0 \\ 0 & \theta_2 \end{pmatrix} \quad \& \quad
    {\pi_2}_\theta = \begin{pmatrix}
        1-\theta_1 & 0 \\ 0 & 1-\theta_2 \end{pmatrix} \, .
\end{equation}
The trace DQFI
for simultaneously estimating~$\theta \equiv \{\theta_1, \theta_2\}$
is
\begin{equation*}
     \QFItr = {\begin{bmatrix} {\frac{1}{\theta_1(1-\theta_1)}} & 0\\
     0 & {\frac{1}{\theta_2(1-\theta_2)}} \end{bmatrix}} \, ,
\end{equation*}
and the resulting QCRB
is~${\mathcal{C}}^{\mathrm{QCRB}}_{\Tr}
= {\sum_{j=1}^2} \theta_j(1-\theta_j)$.
The spectral QCRB
for this problem is
\begin{equation}
\label{eq:multiexampleQCRB}
\begin{split}
    {\mathcal{C}}^{\mathrm{QCRB}}_{\Vert}
    &= \bigg [ {\sum_{j=1}^2} \sqrt{\theta_j (1-\theta_j)} \bigg]^2 = \Tr(\sqrt{\QFItr^{-1}})^2  \\
    &= {\mathcal{C}}^{\mathrm{QCRB}}_{\Tr} + 2 \sqrt{\theta_1 \theta_2 (1-\theta_1) (1-\theta_2)} \, .
\end{split}
\end{equation}
The spectral QCRB
is attainable, i.e.,~\mbox{${\mathcal{C}}^{\mathrm{CCRB}}_* = {\mathcal{C}}^{\mathrm{QCRB}}_{\Vert}$}.
The optimal probing strategy requires
an ensemble of two states:
$\ket{0}$ and~$\ket{1}$
prepared with probabilities~$q$ and~$1-q$
in the proportion~$q/(1-q)=\sqrt{\theta_1(1-\theta_1)}/\sqrt{\theta_2(1-\theta_2)}$.
\end{example}

\begin{figure*}[tb]
    \centering
    \includegraphics[width=\textwidth]{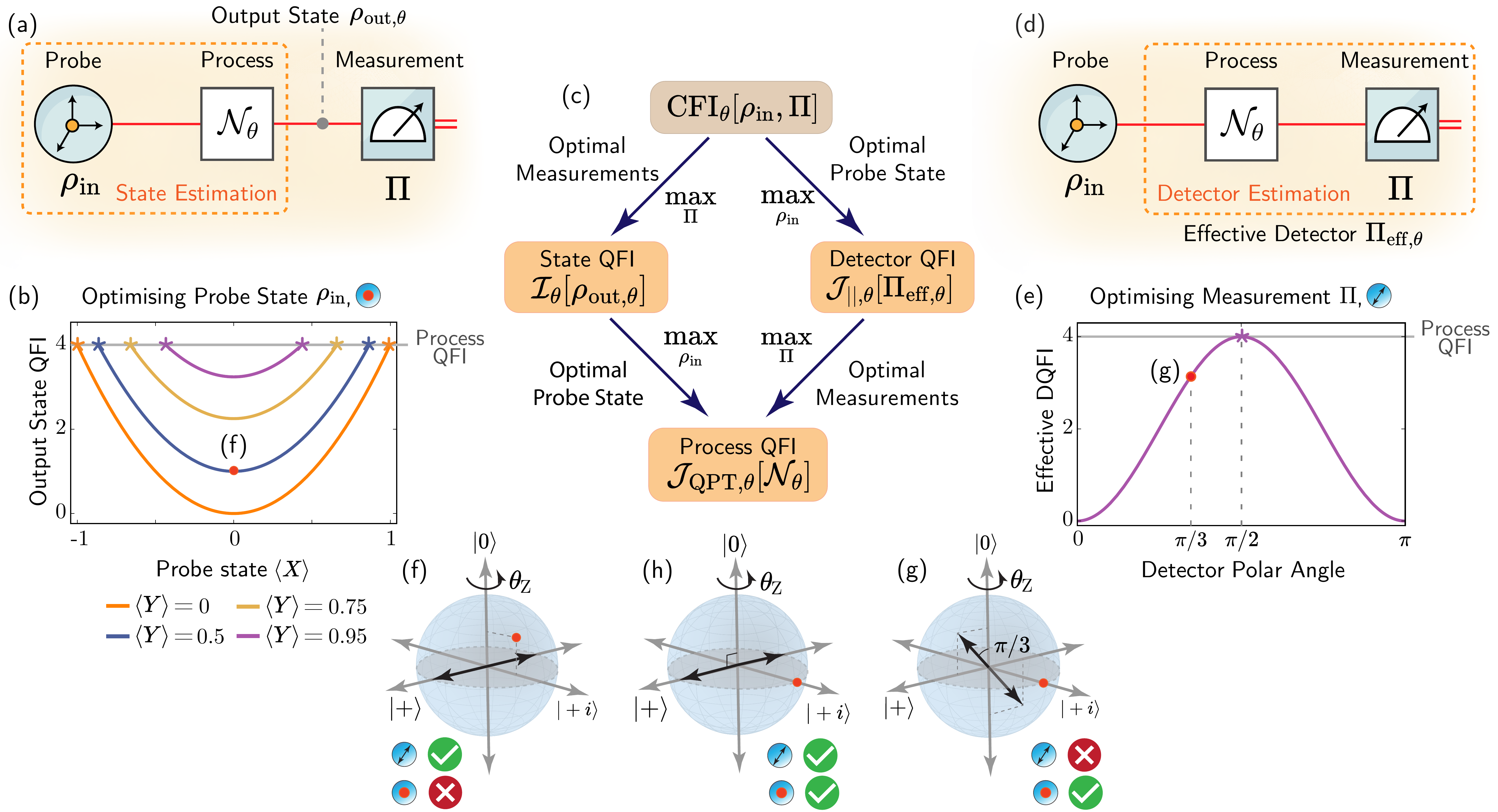}
    \caption{%
    \textbf{Optimal process estimation through the dual approaches of detector and
    state estimation. a,b,}
    Current approaches to achieve optimal process estimation
    that attains the process QFI
    (see Supp. Mat.~\ref{supp:QPT}
    for definition). The approach involves maximising
    the SQFI of the process output
    state over input probes~$\rho_\mathrm{in}$.
    {\bf d,e,} Alternatively, we show that
    optimality can be achieved by maximising the DQFI~$\QFIsp$
    of the effective measurement~$\Pi_{\mathrm{eff}, \theta}$
    over final measurements~$\Pi$. {\bf c,} The two approaches
    treat the optimisation over probes
    and measurements in reverse order
    and are equivalent.
    {\bf b,e,f-h,} An example demonstrating
    the equivalence for estimating qubit~$Z$-rotations.
    In (b), we consider generic mixed qubit states~$\rho_\mathrm{in}$
    with~$\langle X\rangle, \langle Y\rangle$ \&~$\langle Z\rangle$
    Pauli expectation, whereas in (e),
    we consider projection measurements~$\Pi$
    with arbitrary polar angle and zero azimuthal angle
    (from rotational symmetry).
    The state
    approach (b) says, for any given probe state (f),
    the optimal measurement is in the equatorial plane (EP)
    and is unbiased in direction to the state.
    The detector approach (e) says,
    for any given measurement (g),
    the optimal probe lies in the EP
    in a direction unbiased to the measurement.
    The optimal process estimation strategy (h)
    requires both to hold true,
    fixing both probe and measurement in the EP,
    while still unbiased to each other
    (star markers in (b), (e)).}
    \label{fig:processestimation}
\end{figure*}

\noindent
\subsubsection{Probe incompatibility effect}
An interesting feature of
Example~\ref{eg:example2multi} is that the optimal
multi-parameter probing strategy
is a convex mixture of the single-parameter
optimal probes, namely,~$\ket{0}$ and~$\ket{1}$
for~$\theta_1$ and~$\theta_2$, respectively.
Although this may not hold
for generic multi-parameter models,
it underscores the fact that
when the optimal single-parameter probes
for different parameters are incompatible,
termed probe incompatibility in Ref.~\cite{Albarelli2022},
there is an uncertainty trade-off in
simultaneous multi-parameter estimation. Naturally,
we can expect the optimal simultaneous strategy
to generally outperform the optimal strategy that
utilizes a fraction of detector uses
to estimate each parameter separately.
This latter strategy,
which we call sequential multi-parameter estimation,
is connected to the
total QFI metric~\cite{Albarelli2022},
and we compare our 
bounds
to this metric
in Supp. Mat.~\ref{sec:SupMatMultiComparison}.

\subsubsection{Application to photodetector tomography}
Example~\ref{eg:example2multi}
generalises to phase-insensitive measurements
of higher-dimensional states,
such as the APDs experimentally
tomographed in Refs.~\cite{Lundeen2008,Feito2009,Piacentini2015}.
These detectors correspond to diagonal
two-outcome measurements in the photon number basis
for the continuous-variable state
space, truncated at some large dimension~$d_\mathrm{tr}$.
The spectral
QCRB here,~\mbox{${\mathcal{C}}^{\mathrm{QCRB}}_{\Vert} = \big( \sum_{j=1}^{d_\mathrm{tr}} {\sqrt{\theta_j(1-\theta_j)}} \, \big)^2$},
is tight
and
sets the precision limit
for APD tomography.
Attaining this precision
requires preparing an ensemble
of photon-number states~$\{\ket{j}\}$,
each with probability~$q_j \propto \sqrt{\theta_j(1-\theta_j)}$,
and assigning a higher~$q_j$ to states~$\ket{j}$
that produce more uniformly distributed detection probabilities.
Interestingly,
the spectral QCRB in
Eq.~\eqref{eq:multiexampleQCRB}
mirrors the
Gill-Massar QCRB for state estimation~\cite{Gill2000},
with similar behaviour seen in Fig.~\ref{fig:tightQFIexamples}(d)
for the simultaneous estimation of
bit-flip and phase-flip rates
(Example~\ref{eg:bitphaseflip}
in Supp. Mat.~\ref{supp:MultiPara}).

\section{Discussion}
\noindent
While we have comprehensively analysed the DQFI, introducing it in both single and multi-parameter estimation, presenting physically motivated examples where it is useful, and demonstrating several novel properties of the DQFI, there remain many more avenues for exploration.

Thus far, we have considered state estimation (Fig.~\ref{fig:tomographytriad}(a)) or detector estimation (Fig.~\ref{fig:tomographytriad}(b)). For a general theory of quantum estimation, it is essential to also consider process estimation (Fig.~\ref{fig:tomographytriad}(c))~\cite{Harper2020,Chen2023}.
The standard approach,
called optimal sensing~\cite{Meyer2021},
can be viewed as a two-step
computation
(left half of Fig.~\ref{fig:processestimation}(c)):
(i) evaluating the process-output state's
SQFI~$\sQFI[\rho_{\mathrm{out}, \theta}]$
(Fig.~\ref{fig:processestimation}(a)),
and (ii) maximising this over input
probe states~$\rho_\mathrm{in}$
(Fig.~\ref{fig:processestimation}(b))~\cite{Hayashi24}.
Our DQFI framework presents an alternative route to process estimation
(right half of Fig.~\ref{fig:processestimation}(c)):
(i) evaluating the effective detector's
DQFI~$\QFIsp[\Pi_{\mathrm{eff}, \theta}]$
(Fig.~\ref{fig:processestimation}(d)),
and (ii) maximising this
over all actual measurements~$\Pi$
(Fig.~\ref{fig:processestimation}(e)).
The equivalence of the two approaches is
illustrated for estimating
a~$Z$-rotation in Figs.~\ref{fig:processestimation}(f)--(h).
Further examples involving non-unitary
processes $\mathcal{N}_{\theta}$ are detailed
in Supp. Mat.~\ref{sec:SuppNoteEgsApplication}.
In practice, high-precision process estimation
balances two optimisation levers: probe state and final measurement.
The detector approach
offers flexibility in selecting an experimentally-feasible subset of measurements
to consider,
useful in experimental
settings with limited measurement capabilities.
Complementarily, the state approach offers
flexibility in selecting feasible probe states,
thus suited to settings with limited probe preparation capabilities.

It is well established that entangled resources can enhance precision in state estimation at both the state preparation~\cite{Giovannetti2006} and measurement stages~\cite{Conlon2023,Conlon2023b,Conlon2023c,Das2024}. In Supp. Mat.~\ref{supp:nonadditive} we have started to quantify similar effects for detector estimation, showing that multi-partite
entangled probe states generally extract more information
than separable strategies, even at
the single-parameter level
(Example~\ref{eg:heisenbergscaling},
Supp. Mat.~\ref{supp:nonadditive}).
However, we also demonstrate that this advantage disappears
for phase-insensitive measurements
(Lemma~\ref{lemma:nocollectiveadvantagediagonalmeasurement},
Supp. Mat.~\ref{supp:nonadditive}).
For single parameters, entanglement
with an ancillary system does not offer any advantage either
(Corollary~\ref{corr:sepopt} in Supp. Mat.~\ref{sec:tightboundsdp}).
Further studies are needed to fully understand the role of
entangled probes in detector estimation~\cite{Escher2011,Zhou2018}.

Finally, in this work we have considered only quantum detectors. However, there exist measurements more general
than detectors,
such as weak measurements~\cite{LSP+11,KBR+11}
and
partial measurements of a larger system~\cite{RRG+22}, that
can produce quantum states
in addition to classical outcomes~\cite{Wilde13}. In Supp. Mat.~\ref{supp:QPT}, we show how our formalism can provide lower and upper bounds on the information content of such measurements.

\section{Conclusion and Outlook}
\label{sec:Disc}
\noindent
In developing
a comprehensive framework
for the precise characterisation of quantum measurements,
we close a fundamental open question
in quantum information,
completing the triad of optimal state, detector, and process tomography
(Fig.~\ref{fig:tomographytriad}).
Within this triad,
we illuminate the two connecting
arms of process tomography (Fig.~\ref{fig:processestimation}),
underscoring the complementary nature
of our work and existing results.

Despite this complementarity,
our work reveals crucial differences
between state, process and detector estimation
(see Table~\ref{table:SEvsDE}, Supp. Mat.~\ref{supp:nonadditive}).
A naive extension of existing state- and channel-theoretic tools
to detectors (Def.~\ref{def:QFI1} -- trace bound
and unoptimised channel bound~\cite{Fujiwara2008}, respectively)
produces inferior bounds compared to
our newly-introduced spectral bound
(Def.~\ref{def:QFI2}),
which is needed
for attainability.
In practice, this makes
the spectral bound the
default benchmark because it closely tracks the attainable
bound in several aspects (see numerical
comparisons in  Figs.~\ref{fig:singleparacomb} \&~\ref{fig:singleparacomp}
in Supp. Mat.~\ref{sec:tightboundsdp}),
whereas the trace bound serves
as a more convenient analytical approximation
but can overestimate by a factor
of the dimensions of the system in question.
This contrasts with state estimation, where
the trace-based bound is known to be attainable
in the single-parameter setting,
and also with channel estimation, where
the optimised channel bound is known to
require ancilla systems for attainability in general~\cite{Sarovar2006,Fujiwara2008,RDD2012}.
Indeed, the attainability of the spectral DQFI
resembles that of the multi-parameter SQFI.
Similarly, the utility of entanglement only
present in multi-parameter state estimation appears
in single-parameter detector estimation.
At the same time, the utility of ancilla systems
established for single-parameter channel estimation
disappears in single-parameter detector estimation.
These unique features re-emphasize the fact
that while detector estimation
may be technically considered as a special case of
channel estimation,
generic estimation tools that do not
utilise the quantum-classical nature
of the measurement process may
fail to capture
the unique aspects of detector estimation.

Given
the unique features of detector estimation
identified here,
there is an evident need for
further investigation of
both the single- and multi-parameter
settings in greater detail.
Whereas the Fock basis probes that are generally optimal
for phase-insensitive measurements can be generated using
heralded measurements, low state generation fidelities could
limit the actual enhancement in practical setups. The exploration
of more accessible continuous-variable input probes, including
displaced squeezed probe ensembles, remains the scope of future work.
While our findings suggest that the DQFI
can grow quadratically with number of copies
(Example~\ref{eg:heisenbergscaling}),
the precise strategies including error-correcting techniques
that can preserve this Heisenberg scaling
for realistic noisy detector estimation require further exploration~\cite{Escher2011,RDD2012,Zhou2018}.
Apart from tighter information measures
for multi-parameter measurement models
that could help identify the optimal probe states
for full tomography,
future works could
probe the quantum-geometric properties
of measurements by employing other
information metrics
from the Petz family,
or address the
practical limitations of the local estimation regime
through a Bayesian approach.

Our work is however, not just of
fundamental importance---it has a wide domain of applications.
Previous detector tomography experiments
had no way of certifying their optimality:
our results enable benchmarking these past
experiments to either verify
or deny optimality, while also providing future experiments with a tool to improve their precision.
Using our framework, we demonstrate
the first provably-optimal detector estimation
of dephasing noise on an IBM quantum computing platform.
These results can be generalised to
more complex noise mechanisms.
Moreover, we provide the optimal tomography strategy
for phase-insensitive detectors with
direct applications to characterising imperfect
photonic detectors---a subject of ongoing research.
Beyond absolute benchmarks,
our results facilitate the comparison
of existing theoretical proposals
and simplify the experimental requirements
for demonstrating an entanglement advantage
in detector characterisation.
Given its versatility and practical relevance,
we anticipate rapid adoption of our technique
across quantum technology platforms
where efficient detector calibration is essential,
including quantum computing~\cite{LWH+21} and communication~\cite{Watanabe2008}.

\section{Methods}

\subsection{Probing Strategies for Detector Estimation}

\subsubsection{Local Unbiased Estimation}
In the local estimation setting,
where unknown parameters~$\theta$ are
close to known true values~$\theta^*$~\cite{Paris2009,Helstrom1968},
the Fisher information
framework requires locally-unbiased estimators,
meaning estimators with zero bias
at the true parameter values~\cite{Paris2004,Paris2009,Unbiased25}.
Effectively, this constrains how the probabilistic detector outcomes
are mapped to corresponding predicted values
for the unknown parameters~\cite{MLEQM01}.
In linear estimation~\cite{MLEQM01,DAriano2004},
this is done through a classical estimator function
represented by an~$m \times n$ real matrix~$\Xi_{jk}$
that, given a detector outcome~$j \in [m]$,
predicts the estimate~$\Xi_{jk}$ for~$\theta_k$ ($k\in[n]$).
The overall estimate produced for each~$\theta_k$
is then~$ \sum_j {p_j}_\theta \, \Xi_{jk}$.
The local-unbiasedness constraint on the
estimator~\cite{Unbiased25} now
translates to~$(\sum_j {p_j}_\theta \, \Xi_{jk})\vert_{\theta=\theta^*}
= \theta_k^*$.
The widely-used maximum-likelihood estimator~\cite{Lundeen2008,MLEQM01}
is a suitable example, because
it is asymptotically
unbiased, and therefore can saturate the
CCRB in that limit~\cite{MLEQM01}.

\subsubsection{Single-Parameter Detector Estimation}
We first present Lemma~\ref{lemma:singleoptimalprobe},
establishing single-probe strategies to be
sufficient for optimal single-parameter
detector estimation.

\begin{lemma}
\label{lemma:singleoptimalprobe}
    For single-parameter detector estimation,
    a single-probe strategy is optimal
    and
    at least one optimal
    state~${\rho^\mathrm{opt}}\in{\mathcal{D}(\mathcal{H}_d)}$
    exists that
    attains~${\CFI_\theta}_\mathrm{max}$ in Eq.~\eqref{eq:maxCFIdef}.
\end{lemma}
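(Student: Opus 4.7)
The plan is to prove both the single-probe sufficiency and the existence of a maximiser by leveraging convexity of the CFI in the probe state. First I would formalise what an ensemble strategy means: given probes $\{q_k, \rho_k\}_{k=1}^p$ with $\rho_k \in \mathcal{D}(\mathcal{H}_d)$, one may either record which probe is used per trial or not. In the labeled case the per-trial CFI averages to $\sum_k q_k \CFI_\theta[\rho_k, \Pi_\theta]$, which is trivially bounded above by $\max_k \CFI_\theta[\rho_k, \Pi_\theta]$, so using only the best single probe already matches the ensemble performance. In the unlabeled case the effective probe collapses to a single density matrix $\bar\rho \coloneqq \sum_k q_k \rho_k \in \mathcal{D}(\mathcal{H}_d)$.

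To handle the unlabeled case I would invoke convexity of $\CFI_\theta[\rho, \Pi_\theta]$ in $\rho$. The derivation is short: both $p_\theta(j\vert\rho) = \Tr(\rho\,{\pi_j}_\theta)$ and $\partial_\theta p_\theta(j\vert\rho) = \Tr(\rho\,\partial_\theta{\pi_j}_\theta)$ are affine in $\rho$, and the scalar function $(x,y)\mapsto y^2/x$ is jointly convex on $\{x>0\}$, so each summand in Eq.~\eqref{eq:CFIdef} and hence their sum is convex. Consequently $\CFI_\theta[\bar\rho,\Pi_\theta] \leq \sum_k q_k \CFI_\theta[\rho_k,\Pi_\theta] \leq \max_k \CFI_\theta[\rho_k,\Pi_\theta]$, so the best single probe dominates here as well, closing the first half of the lemma.

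For existence I would combine convexity with compactness. Convexity immediately gives that the supremum of $\CFI_\theta$ over $\mathcal{D}(\mathcal{H}_d)$ is attained on the extreme points, namely the pure states, which form the compact projective manifold $\mathbb{CP}^{d-1}$. Theorem~\ref{th:upperbound} supplies a finite linear envelope $\CFI_\theta[\rho,\Pi_\theta] \leq \Tr(Q_\theta\rho) \leq \QFIsp[\Pi_\theta] < \infty$, so the supremum is finite, and upper semi-continuity on the compact pure-state manifold then delivers a maximiser $\rho^{\mathrm{opt}}$ attaining $\CFImax[\Pi_\theta]$.

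The main obstacle I anticipate is the apparent singularity of the CFI summand $(\partial_\theta p_\theta(j))^2/p_\theta(j)$ at pure states where some $p_\theta(j\vert\rho) = 0$. I would resolve this with the standard Fisher-information convention: nonnegativity of $p_\theta(j\vert\rho)$ forces $\partial_\theta p_\theta(j\vert\rho) = 0$ at any such zero (else the probability would go negative for some nearby $\theta$), so the summand is well-defined under $0/0 = 0$. With this extension $\CFI_\theta[\,\cdot\,,\Pi_\theta]$ becomes upper semi-continuous on $\mathbb{CP}^{d-1}$, and compactness then secures the required maximum, completing the proof.
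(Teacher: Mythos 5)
Your first half --- single-probe sufficiency --- is essentially the paper's argument, plus a little more. The printed proof treats only what you call the labeled case: it forms the joint outcome distribution of the ensemble $\{q_k,\rho_k\}$ and notes (Eq.~\eqref{eq:additiveCFI}) that its CFI is the convex combination $\sum_k q_k\,\CFI_\theta[\rho_k,\Pi_\theta]$, hence dominated by the best single component. Your extra observation that an unlabeled mixture collapses to the single state $\bar\rho=\sum_k q_k\rho_k$, handled via convexity of $\CFI_\theta[\,\cdot\,,\Pi_\theta]$ in the state (affine $p_\theta(j\vert\rho)$ and $\partial_\theta p_\theta(j\vert\rho)$ composed with the jointly convex $(x,y)\mapsto y^2/x$), is correct and is exactly the convexity fact the paper invokes in the main text and re-derives in Supp.~Mat.~\ref{supp:CFIoptdiag} through the Hessian of $F(x_1,x_2)=x_2^2/x_1$. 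This part is sound and, if anything, more complete than the paper's version.

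The gap is in the existence half: the claim that the $0/0=0$ convention makes $\CFI_\theta[\,\cdot\,,\Pi_\theta]$ upper semi-continuous is false. The convention forces the value \emph{at} a state $\rho_0$ with $p_\theta(j\vert\rho_0)=0$ down to zero for that outcome, but along $\rho_n\to\rho_0$ with $p_\theta(j\vert\rho_n)>0$ the ratio $(\partial_\theta p_\theta(j\vert\rho_n))^2/p_\theta(j\vert\rho_n)$ is a $0/0$ limit that need not vanish, so $\limsup_n\CFI_\theta[\rho_n]$ can strictly exceed $\CFI_\theta[\rho_0]$ --- the function drops discontinuously at the boundary of the positivity region rather than jumping up. A concrete qubit instance: take ${\pi_1}_\theta$ with $\bra{1}{\pi_1}_\theta\ket{1}=2b^2(\theta-\theta^*)^2$, $\bra{0}{\pi_1}_\theta\ket{1}=b(\theta-\theta^*)$ and $\bra{0}{\pi_1}_\theta\ket{0}=\tfrac12$; for probes $\cos\alpha\ket{0}+\sin\alpha\ket{1}$ the CFI at $\theta^*$ is $16b^2\sin^2\alpha/(1+\sin^2\alpha)$, which increases monotonically toward $8b^2$ as $\alpha\to\pi/2$ yet equals $0$ at $\alpha=\pi/2$ under your convention, so u.s.c. fails and the supremum is in fact not attained by any state. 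Compactness therefore cannot close the argument as written; you need an explicit regularity hypothesis --- e.g.\ all ${\pi_j}_\theta$ strictly positive, so every $p_\theta(j\vert\rho)\ge\lambda_{\min}({\pi_j}_\theta)>0$ and the CFI is genuinely continuous on the compact set $\mathcal{D}(\mathcal{H}_d)$ --- which is what the paper implicitly assumes. (Note that the paper's own Methods proof stops at single-probe sufficiency and never argues attainment, so you are attempting to prove strictly more than it does; the attempt is laudable but the u.s.c. step as stated would fail.)
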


\begin{proof}[Proof of Lemma~\ref{lemma:singleoptimalprobe}]
Suppose, to estimate a
single parameter~$\theta$
of a measurement~$\Pi_\theta$, we probe
with two distinct states,~$\rho_1 \neq \rho_2$,~$\rho_1, \rho_2\in\mathcal{D}(\mathcal{H}_d)$,
with probabilities~$q_1$ and~$q_2=1-q_1$.
The joint list of probabilities,
\begin{equation*}
\begin{split}
    p^\mathrm{list} = \{q_1 p_\theta(1\vert \rho_1), \,  &\dots, \,  q_1 p_\theta(m \vert \rho_1), \\
    &q_2 p_\theta(1 \vert \rho_2), \, \dots, \, q_2 p_\theta(m \vert \rho_2)\} \, ,
\end{split}
\end{equation*}
forms a valid distribution
and, as $q_j$ are (by definition) independent of~$\theta$,
we see from Eq.~\eqref{eq:CFIdef} that
the CFI of~$p^\mathrm{list}$,
\begin{equation}
\label{eq:additiveCFI}
    \CFI_\theta[p^\mathrm{list}] = \,  q_1  \, \CFI_\theta[\rho_1, \Pi_\theta] 
    + \,  (1-q_1) \, \CFI_\theta[\rho_2, \Pi_\theta] \, ,
\end{equation}
is a convex combination of the CFIs
of each probe~$\rho_k$ individually.
Unless the two individual CFIs
are equal, the combined CFI~$\CFI_\theta[p^\mathrm{list}]$
is maximised at the extreme points~$q=0$
or~$q=1$, depending on which
of the individual CFIs is larger.
Even if the two individual CFIs
are equal,
probing with only~$\rho_1$
or only~$\rho_2$
is sufficient for optimality.
The same argument extends to ensembles
of size~$p>2$, proving that
for single-parameter detector estimation,
a `one-state' probing strategy
is optimal.
\end{proof}
\noindent
Lemma~\ref{lemma:singleoptimalprobe} says
that for single parameters,
the maximisation in Eq.~\eqref{eq:maxCFIdef}
can be performed
over~\mbox{$\rho_\mathrm{in} \in \mathcal{D}(\mathcal{H}_d)$}.

\subsubsection{Multi-parameter detector estimation}
For multi-parameter problems,
Lemma~\ref{lemma:singleoptimalprobe}
does not hold and
an ensemble of probes is generally required for optimality.
This is because multiple unknown parameters cannot be uniquely
determined without sufficiently many
linearly-independent outcome probabilities.
The multi-parameter CFI matrix~$\CFI_\theta$ for an ensemble~$\{q_k, \rho_k\}_{k\in[p]}$
comprising~$p$ probe states
is a convex sum of the component CFI matrices,
\begin{equation}
\label{eq:CFImatconvexsum}
    \CFI_\theta[\{q_k, \rho_k\}_{k\in[p]}, \Pi_\theta] = \sum_{k\in[p]} q_k \,  \CFI_\theta[\rho_k, \Pi_\theta] \, ,
\end{equation}
similar to Eq.~\eqref{eq:additiveCFI}.
Here, the difference from the single-parameter setting
is that the CCRB~$\Tr(\CFI_\theta^{-1})$
requires matrix inversion of~$\CFI_\theta$.
Whereas
the individual CFI matrices~$\CFI_\theta[\rho_k, \Pi_\theta]$
would be singular (due to insufficient
number of
independent probabilities) and non-invertible,
the composite matrix~$\CFI_\theta[\{q_k, \rho_k\}_{k\in[p]}, \Pi_\theta]$
for the ensemble
would be invertible for sufficiently large~$p$.
In particular, for estimating~$n$
parameters from an~$m$-outcome measurement,
the number of different probe states
required is~$p \geq \lceil n/(m-1)\rceil$,
assuming independent parameters and measurement elements.
For two-outcome detectors,
such as the qubit detector
considered in Example~\ref{eg:example2multi},
this means at least~$n$ different probe states
must be included in the ensemble for estimating~$n$ parameters.
However, this minimum number of probes
merely guarantees the feasibility of
multi-parameter estimation, whereas optimality
may require even larger ensembles.

Notably,
an ensemble of~$p$ different
$d$-dimensional probe states
can be realised through
access to an ancillary system, A,
that can be measured perfectly.
This involves preparing
a single bipartite-entangled state
in~$\mathcal{D}(\mathcal{H}_d\otimes \mathcal{H}_p^A)$,
where~$\mathcal{H}_p^A$ denotes
the $p$-dimensional Hilbert space
corresponding to system A.
For example, the ensemble
comprising
states~$\ket0$ and~$\ket1$ with
probabilities~$q$ and~$1-q$ is realised
by probing~$\Pi_\theta$ with
the state~$\sqrt{q} \ket0 \ket{0_A}
+ \sqrt{1-q} \ket1 \ket{1_A}$
and measuring the ancilla perfectly
in the~$\ket{0_A},\ket{1_A}$ basis~\cite{DAriano2004}.
In this sense,
Lemma~\ref{lemma:singleoptimalprobe}
extends
to multi-parameter problems,
with the optimisation domain
becoming~$\rho_\mathrm{in} \in \mathcal{D}(\mathcal{H}_d \otimes \mathcal{H}_p^A)$.

The equivalence between bipartite-entangled states
and ensemble probes means that unless the probe and ancilla are
measured jointly, the convexity of the CFI
(Lemma~\ref{lemma:singleoptimalprobe}) rules out
any entanglement advantage in precision
in the single-parameter setting.
However,
for the most general estimation scenario,
ancilla measurements conditioned on
the detector outcome should be considered.
This
estimation strategy was
applied to detector tomography in Ref.~\cite{DAriano2004}
and extended to
imperfect but pre-characterised ancilla measurements.
In Supp. Mat.~\ref{sec:tightboundsdp},
using Theorem~\ref{th:extdqfitight},
we show that for single-parameter detector estimation,
even conditional measurements cannot enable an entanglement advantage,
and the maximum precision can be attained using a single separable pure-state probe.
At the same time,
the advantages of ancilla-assisted probe
states
for multi-parameter detector estimation
have been addressed in recent
experimental demonstrations~\cite{Altepeter2003,Brida2012},
though the extent of precision enhancement
remains to be explored.

\subsection{Single-Parameter DQFI Derivation}

The Lyapunov equation~\eqref{eq:SLDdetector}
can be explicitly solved by
utilising either the eigenspectrum of~${\pi_j}_\theta = \sum_{k\in[d]} \lambda_{jk} \ketbra{\lambda_{jk}}$~\cite{Helstrom1967,Helstrom1968,Paris2009},
\begin{equation}
\label{eq:Ljeachpovm}
    {L_j}_\theta = 2 \sum_{\substack{m,n\in[d-1] \, ,\\ \lambda_{jm} + \lambda_{jn} \neq 0}} \ket{\lambda_{jm}} \frac{\bra{\lambda_{jm}} \partial_\theta {\pi_j}_\theta \ket{\lambda_{jn}}}{\lambda_{jm} + \lambda_{jn}} \bra{\lambda_{jn}} \, ,
\end{equation}
or the vectorisation approach~\cite{Safranek2018},
\begin{equation}
\label{eq:vecunvecSLDsol}
    \mathrm{vec}({L_j}_\theta) = 2 \,  ({\pi_j}_\theta^* \otimes \mathds{1}_d + \mathds{1}_d \otimes {\pi_j}_\theta)^{-1} \, \mathrm{vec}(\partial_\theta {\pi_j}_\theta) \, ,
\end{equation}
where~$( \, \cdot \,  )^*$ denotes complex conjugation
and the pseudo-inverse can be used instead of the
inverse, if required.
Below, we first prove
Eq.~\eqref{eq:shortproofEq1},
and then prove Theorem~\ref{th:upperbound}.

\begin{proof}[Proof of Eq.~\eqref{eq:shortproofEq1}]
To upper-bound~$\CFI_\theta$
by an expression independent of~$\rho$,
we employ the chain of inequalities~\cite{BC94,Paris2009}
\begin{equation}
\label{eq:proofEq1}
\begin{aligned}
    &\CFI_\theta[\rho,\Pi_\theta] = \sum_{j\in[m]} \frac{\left ( \Re \left [ \Tr({\pi_j}_\theta \, \rho \, {L_j}_\theta ) \right ] \right )^2}{\Tr(\rho \, {\pi_j}_\theta)} \\
    \leq  &\sum_{j\in[m]} \left \vert \frac{ \Tr({\pi_j}_\theta \, \rho \, {L_j}_\theta ) }{\sqrt{\Tr(\rho \, {\pi_j}_\theta)}} \right \vert^2\\
    = &\sum_{j\in[m]} \left \vert \Tr \left ( \frac{ \sqrt{{\pi_j}_\theta} \, \sqrt{\rho}}{\sqrt{\Tr(\rho \, {\pi_j}_\theta)}}  \; \;  \sqrt{\rho} \, {L_j}_\theta \sqrt{{\pi_j}_\theta} \right )  \right \vert^2\\
    \leq &\sum_{j\in[m]} \Tr \left ( \frac{\sqrt{{\pi_j}_\theta} \rho \sqrt{{\pi_j}_\theta}}{\Tr(\rho \, {\pi_j}_\theta)}\right ) \Tr \left ( \sqrt{\rho} {L_j}_\theta {\pi_j}_\theta {L_j}_\theta \sqrt{\rho} \right ) \\
    = &\sum_{j\in[m]} \Tr \left ({L_j}_\theta {\pi_j}_\theta {L_j}_\theta \rho \right )  = \Tr \left (Q_\theta \rho \right ) \, .
\end{aligned}
\end{equation}
The first inequality above follows from~$\Re[z]^2 \leq \vert z\vert^2$
for any complex number~$z$,
and the second inequality is the operator
Cauchy-Schwarz inequality,~$\vert \Tr(A^\dagger B)
\vert^2 \leq \Tr(A^\dagger A) \Tr(B^\dagger B)$.
\end{proof}

\begin{proof}[Proof of Theorem~\ref{th:upperbound}]
As any state~$\rho\in\mathcal{D}(\mathcal{H}_d)$
satisfies~$\rho \preccurlyeq \mathds{1}_d$,
it follows that
$\Tr(Q_\theta \rho) \leq \Tr(Q_\theta)$
always.
Consolidating Eq.~\eqref{eq:proofEq1}
and its maximisation,~$\max_{\rho\in\mathcal{D}(\mathcal{H}_d)}   \Tr  ( Q_\theta \rho ) = \Vert Q_\theta \Vert^2_\mathrm{sp}$,
into
\begin{equation}
\label{eq:CFIattainChain}
    \CFI_\theta[\rho, \Pi_\theta] \leq \Tr(Q_\theta \rho) \leq \Vert Q_\theta \Vert_\mathrm{sp}^2 \leq \Tr(Q_\theta) \,
\end{equation}
leads to
\begin{equation}
    {\CFI_\theta}_\mathrm{max}\left [ \Pi_\theta \right ]  \leq \QFIsp[\Pi_\theta] \leq \QFItr[\Pi_\theta] \, ,
\end{equation}
as claimed in Theorem~\ref{th:upperbound}.
\end{proof}

\subsection{Attainability Criteria of the DQFI}
The attainability criteria for
the spectral DQFI~$\QFIsp$
are threefold.
The first two of these conditions
ensure that the two inequalities
in Eq.~\eqref{eq:shortproofEq1}
are saturated, implying
that~${\CFI_\theta}_\mathrm{max} = \Tr(Q_\theta \rho^\mathrm{opt})$,
for~$\rho^\text{opt}$ the CFI-optimal
probe state from Eq.~\eqref{eq:maxCFIdef}.
The third condition, unique to detector estimation,
ensures that~$\Tr(Q_\theta \rho^\mathrm{opt}) = \Vert Q \Vert_\mathrm{sp}^2 = \QFIsp$.
Taken together, these criteria require
the existence of a common eigenstate
to each SLD operator,
that is also common to~$Q_\theta$.
This common eigenstate,
if it exists, corresponds
to the optimal probe state.
We state this result as Theorem~\ref{th:attaincrit}
below.
Main-text Theorem~\ref{th:diagonaltheorem}
is recovered as a special case of Theorem~\ref{th:attaincrit},
where basis or number states comprise the common eigenstates.
\begin{theorem}
\label{th:attaincrit}
    For estimating a detector,
    the DQFI~$\QFIsp$ is attainable if and only if
    the SLD operators~$\{{L_j}_\theta\}$
    of the POVM
    and~$Q_\theta$ all
    share a common eigenvector,
    which constitutes an optimal probe state.
\end{theorem}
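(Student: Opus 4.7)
The plan is to trace through the chain of inequalities in Eq.~\eqref{eq:CFIattainChain},
$\CFI_\theta[\rho, \Pi_\theta] \leq \Tr(Q_\theta \rho) \leq \Vert Q_\theta \Vert_\mathrm{sp}^2 = \QFIsp$,
and characterise when each inequality saturates. Attainability of $\QFIsp$ at some probe is equivalent to simultaneous saturation of \emph{all} the intermediate steps (the $\Re[\,\cdot\,]^2 \leq \vert\cdot\vert^2$ step, the operator Cauchy--Schwarz step, and the spectral-radius step), so the task reduces to identifying the states that meet every one of these saturation conditions. By Lemma~\ref{lemma:singleoptimalprobe} and the convexity of the CFI, an optimal probe may be taken pure, which lets me work throughout with $\rho = \ketbra{\psi}$ and simplifies the saturation analysis considerably.

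For the sufficient direction, assume $\ket{\psi}$ is a common eigenvector of every ${L_j}_\theta$ and of $Q_\theta$, with ${L_j}_\theta \ket{\psi} = \lambda_j \ket{\psi}$ and $Q_\theta \ket{\psi} = \Vert Q_\theta \Vert_\mathrm{sp}^2 \ket{\psi}$. I would first verify that $\Tr({\pi_j}_\theta \ketbra{\psi} {L_j}_\theta) = \lambda_j \bra{\psi} {\pi_j}_\theta \ket{\psi}$ is real (the SLDs are Hermitian, so each $\lambda_j \in \mathbb{R}$), saturating the $\Re$-vs-modulus step in Eq.~\eqref{eq:shortproofEq1}. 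Substituting directly into Eq.~\eqref{eq:CFIdef} gives $\CFI_\theta = \sum_j \lambda_j^2 \bra{\psi}{\pi_j}_\theta\ket{\psi} = \bra{\psi} Q_\theta \ket{\psi} = \Vert Q_\theta \Vert_\mathrm{sp}^2 = \QFIsp$, confirming $\ket{\psi}$ as an optimal probe.

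For the necessary direction, suppose attainability holds. By Lemma~\ref{lemma:singleoptimalprobe} I take a pure optimal probe $\ket{\psi}$. Saturation of the spectral step $\Tr(Q_\theta \ketbra{\psi}) = \Vert Q_\theta \Vert_\mathrm{sp}^2$ forces $\ket{\psi}$ into the top eigenspace of $Q_\theta$, immediately delivering the common-eigenvector condition with $Q_\theta$. Saturation of the operator Cauchy--Schwarz step, viewed as Cauchy--Schwarz on the vectors $\sqrt{{\pi_j}_\theta}\ket{\psi}$ and $\sqrt{{\pi_j}_\theta}{L_j}_\theta\ket{\psi}$, yields the proportionality ${L_j}_\theta \ket{\psi} \propto \ket{\psi}$ on the support of ${\pi_j}_\theta$; combined with the $\Re$-vs-modulus saturation giving $\bra{\psi} {L_j}_\theta {\pi_j}_\theta \ket{\psi} \in \mathbb{R}$, the proportionality constants $\lambda_j$ are real. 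Adopting the canonical SLD from Eq.~\eqref{eq:Ljeachpovm}, which vanishes on $\ker {\pi_j}_\theta$, promotes each partial statement into a genuine eigenvalue equation ${L_j}_\theta \ket{\psi} = \lambda_j \ket{\psi}$, closing the equivalence.

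The main obstacle I foresee is in this last step: the Lyapunov equation~\eqref{eq:SLDdetector} constrains ${L_j}_\theta$ only on the support of ${\pi_j}_\theta$, so the Cauchy--Schwarz saturation yields an eigenvector property only modulo $\ker({\pi_j}_\theta)$. Committing to the minimal SLD of Eq.~\eqref{eq:Ljeachpovm} fixes this, but the argument must then be applied \emph{uniformly} across all $j$ so that the single vector $\ket{\psi}$ is genuinely a common eigenvector of all ${L_j}_\theta$ and of $Q_\theta$; rank-deficient POVMs with overlapping null spaces would be the most delicate case to handle carefully.
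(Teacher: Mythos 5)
Your proposal follows essentially the same route as the paper's proof: you characterise saturation of the same three inequalities in Eq.~\eqref{eq:CFIattainChain}, obtain the same criteria (realness of $\Tr({\pi_j}_\theta\rho {L_j}_\theta)$, Cauchy--Schwarz saturation forcing ${L_j}_\theta\ket{\psi}\propto\ket{\psi}$ on $\mathrm{supp}({\pi_j}_\theta)$, and membership in the top eigenspace of $Q_\theta$), and prove sufficiency by the same direct computation $\CFI_\theta=\sum_j\lambda_j^2 {p_j}_\theta=\bra{\psi}Q_\theta\ket{\psi}=\QFIsp$. Your explicit flagging of the kernel ambiguity in the SLD definition is a point the paper's own proof glosses over, but it does not change the argument.
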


\begin{proof}[Proof of Theorem~\ref{th:attaincrit} (Backward Direction)]
It is clear from
Eq.~\eqref{eq:CFIattainChain}
that~$\QFIsp$ is attained
whenever the first two inequalities
in Eq.~\eqref{eq:CFIattainChain}
are saturated.
The first inequality in
Eq.~\eqref{eq:CFIattainChain}
is saturated
if and only if the two inequalities
in Eq.~\eqref{eq:proofEq1} are saturated,
requiring that
\begin{enumerate}
    \item $\qtrace({\pi_j}_\theta \rho {L_j}_\theta)$ is real for all~$j \in [m]$
    and all~$\theta \in \Theta$,
    \item $ \rho {L_j}_\theta {\pi_j}_\theta \propto \rho {\pi_j}_\theta$ for all~$j\in[m]$ and all~$\theta \in \Theta$.
\end{enumerate}
The second inequality in
Eq.~\eqref{eq:CFIattainChain}
is saturated if and only if~$\rho$ is
a projector~$\ketbra{\psi}$
onto the eigenvector~$\ket{\psi}$
of~$Q_\theta$ corresponding to its
largest eigenvalue, i.e.,
\begin{enumerate}
  \setcounter{enumi}{2}
  \item $Q_\theta \rho = \lambda^{\mathrm{max}} \rho$ where $\lambda^{\mathrm{max}} = \max \mathrm{eig}  \, \left [  Q_\theta \right ]$
\end{enumerate}
Condition~(2), for saturating
the Cauchy-Schwarz inequality,
is equivalent to
\begin{equation}
\label{eq:detestattain}
    \frac{ \rho \, {\pi_j}_\theta}{\Tr( \rho \, {\pi_j}_\theta)} = \frac{ \rho \, {L_j}_\theta \, {\pi_j}_\theta }{\Tr(\rho \, {L_j}_\theta \,  {\pi_j}_\theta )} \, ,
\end{equation}
and is satisfied
if and only if~$\rho {L_j}_\theta \propto \rho$
on the support of~${\pi_j}_\theta$
for each~$j\in[m]$.
This is only possible
if~$\rho$ is a projector~$\ketbra{\psi}$ onto
a simultaneous eigenstate~$\ket{\psi}$
of every~${L_j}_\theta$ for~$j\in [m]$
(or a combination of projectors
onto multiple degenerate
simultaneous eigenstates).
Criterion~(2)
is thus equivalent to all the~${L_j}_\theta$
sharing at least one common eigenvector,
whereas criterion~(3) requires this eigenvector
to also be the largest-eigenvalue
eigenvector of~$Q_\theta$.
\end{proof}

In the case where
the largest eigenvalue of~$Q_\theta$
is degenerate among multiple eigenvectors,
the definition for~$\QFIsp$
still holds,
but the DQFI-optimal probe state
can be some superposition of
these degenerate eigenvectors.
The optimal superposition state
is prescribed by criterion~(2) above
to be
a simultaneous eigenstate
of every~${L_j}_\theta$, if one exists.
If no simultaneous eigenstate exists,
DQFI~$\QFIsp$ only
upper-bounds~${\CFI_\theta}_\mathrm{max}$
but fails to reveal the optimal probe state.
Below, we prove Theorem~\ref{th:attaincrit}
in the forward direction, i.e.,
that probe states~$\rho= \ketbra{\psi}$ satisfying criteria~(1)
through~(3) attain a CFI~$\CFI_\theta[\rho, \Pi_\theta] = \QFIsp[\Pi_\theta]$.

\begin{proof}[Proof of Theorem~\ref{th:attaincrit} (Forward Direction)]
Let us assume pure probe state~$\rho = \ketbra{\psi}$
satisfies the attainability criteria
enumerated 1 to 3.
As~$\ket{\psi}$ is an SLD-simultaneous eigenstate,
and~${L_j}_\theta$ is Hermitian,
we can write~${L_j}_\theta \ket{\psi}
\, {=} \, {\lambda_j}_\theta \ket{\psi}$
for real eigenvalues~${\lambda_j}_\theta$.
Then,~$\Tr({\pi_j}_\theta \rho {L_j}_\theta) =
{\lambda_j}_\theta^* \, \Tr({\pi_j}_\theta \rho) =
{\lambda_j}_\theta \, {p_j}_\theta$.
The CFI from Eq.~\eqref{eq:proofEq1}
then equals
\begin{equation}
    \CFI_\theta[\rho, \Pi_\theta] = \sum_{j\in[m]} {\lambda_j}_\theta^2 {p_j}_\theta \, .
\end{equation}
As~$\ket{\psi}$ is also an
eigenvector of~$Q_\theta = \sum_{j\in[m]} {L_j}_\theta {\pi}_\theta {L_j}_\theta$,
and corresponds to its largest eigenvalue,~$\Vert Q_\theta \Vert_{\mathrm{sp}}^2$,
\begin{equation}
\begin{split}
    \QFIsp[\Pi_\theta] &= \Tr(Q_\theta \rho)
    = \sum_{j\in[m]} \bra{\psi} {{L_j}_\theta} {{\pi_j}_\theta} {{L_j}_\theta} \ket{\psi} \\
    &= \sum_{j\in[m]} {\lambda_j^2}_\theta \bra{\psi} {{\pi_j}_\theta} \ket{\psi}
    = \sum_{j\in[m]} {\lambda_j}_\theta^2 {{p_j}_\theta} \, ,
\end{split}
\end{equation}
so that~$\CFI_\theta[\rho, \Pi_\theta]=\QFIsp[\Pi_\theta]$, thus proving the claim.
\end{proof}

\subsection{Tight DQFI for Single-Parameter Estimation}
The spectral DQFI (Def.~\ref{def:QFI2}) is tight for phase-insensitive
measurements but not for the general phase-sensitive case.
However, a minor modification to
the spectral technique,
specifically to the defining equation~\eqref{eq:SLDdetector},
can produce the tight bound,~$\QFIext$,
for general single-parameter detector models~\cite{Sarovar2006,Fujiwara2008,Matsumoto2010,Escher2011,RDD2012}.
By considering non-Hermitian SLD (nSLD) operators~${L'_j}_\theta$~\cite{ExtCon15}
that satisfy
\begin{equation}
\label{eq:nSLDMethods}
    {L'_j}_\theta^\dagger \, {\pi_j}_\theta
    + {\pi_j}_\theta \, {L'_j}_\theta = 2 \, \partial_\theta {\pi_j}_\theta
\end{equation}
instead of Eq.~\eqref{eq:SLDdetector},
we show in Supp. Mat.~\ref{sec:tightboundsdp}
that the tight bound~$\QFIext$ may be formulated as~\cite{Fujiwara2008,Escher2011,RDD2012}
\begin{equation*}
    {\QFIext} {\coloneqq} {\min_{\{ {L'_j}_\theta \}}} \Big{\{} \big{\Vert} \sum_j {L'_j}_\theta^\dagger {\pi_j}_\theta {L'_j}_\theta \big{\Vert}_\mathrm{sp}^2 \, \Big{\vert} \,
    \mathrm{Eq.~\eqref{eq:nSLDMethods}~holds} \Big{\}}  .
\end{equation*}
This minimisation can be efficiently
solved by an SDP in terms of the POVM and
its derivative~\cite{RDD2012},
which we present in Supp. Mat.~\ref{sec:tightboundsdp}.
Lemma~\ref{lemma:extdqfivalid}
therein
proves the validity of the upper bound~$\CFImax \leq \QFIext$,
directly extending main-text Theorem~\ref{th:upperbound}
and its proof in Eq.~\eqref{eq:proofEq1} of Methods.
Then, Theorem~\ref{th:extdqfitight} proves
the tightness of the bound by
showing~$\CFImax = \QFIext$, extending
main-text Theorem~\ref{th:diagonaltheorem}
to general single-parameter models.
While a closed-form analytical solution
remains unknown,
the proof reveals the optimal nSLD
operators to satisfy all three attainability
criteria listed in the previous subsection.

Notably, the tight bound is based
on a channel estimation bound
that is not always tight for the original channel
but is tight for the extended channel~\cite{Fujiwara2008}.
Here, the extended channel refers
to augmenting the unknown measurement
with additional ancilla dimensions,~$\Pi_\theta \otimes \mathds{1}_A$,
that can be jointly probed using ancilla-entangled states
(see Fig.~\ref{fig:extendedDetector}
in Supp.~Mat.~\ref{sec:tightboundsdp}).
Unlike general quantum channels~\cite{Escher2011,RDD2012}, however,
for single-parameter detector estimation,
ancilla-entangled probes do not offer any advantage
over separable probes
(Theorem~\ref{th:extdqfitight}
and Corollary~\ref{corr:sepopt}
in Supp. Mat.~\ref{sec:tightboundsdp}).

\subsection{Experimental Parameters}

The quantum circuit
implementing the estimation experiment
is shown in Fig.~\ref{fig:DephasedDetIBM}(c).
The first qubit, representing the probe,
is prepared in a pure state
at polar angle~$\theta_\mathrm{in}$
and azimuthal angle~$\phi = 0$
(we choose~$\phi = 0$
for the ideal measurement~$\Pi_\mathrm{ideal}$
so only the~$\phi=0$ cross-section of
the Bloch surface is relevant).
This probe undergoes dephasing noise
of strength~$p$, implemented via
interaction with an ancilla qubit,
before the final projection measurement along
polar angle~$\theta$ and azimuthal angle~$\phi=0$.
By scanning probe angle~$\theta_\mathrm{in}$
over the interval~$[0.2, 1]$
while fixing detector angle~$\theta = \pi/8$,
we estimate parameter~$p$.
By repeating the process~$N=10^5$
times for each~$\theta_\mathrm{in}$,
we report the empirical MSE of estimates
(green dots) in Fig.~\ref{fig:DephasedDetIBM}(d)
(with error bars in grey).

The empirical MSEs agree well with
theoretically-expected MSEs (black curve)
(for true value of~$p$ chosen to be~0.2),
and both are lower-bounded by~$1/\QFI_{\Vert, p}=1.093$
from Eq.~\eqref{eq:DepDetQFIs},
in accordance with the spectral QCRB
(blue dashed line).
Theoretically, the optimal
probe state angle
is~$\arctan(\tan\theta/(1-2p)) = 0.6042$ rad
(grey dashed line).
From the simulation data points,
which form a grid of step size~$0.006$ rad
within~$[0.5, 0.7]$ rad,
the optimal inferred angle is~$0.6162$ rad,
with~90\% confidence interval~$[0.6121, 0.6202]$ rad
(green shaded region).
From experimental data points,
which form a grid of step size~$0.014$ rad
within~$[0.5, 0.7]$ rad,
the optimal inferred angle is~$0.6121$ rad,
with~90\% confidence interval~$[0.5394, 0.6848]$ rad
(blue shaded region).

\vspace{4em}
\newpage


\section{Acknowledgements}
\noindent
We acknowledge the use of IBM Quantum services for this work.
The views expressed are those of the authors
and do not reflect the official policy or position of IBM or the IBM Quantum team.
We thank Daoyi Dong and Shuixin Xiao
for valuable discussions.
This research is funded by the Australian
Research Council Centre of Excellence CE170100012.
This research was also supported by A\!*STAR C230917010,
Emerging Technology and A\!*STAR C230917004, Quantum Sensing.
This project is supported by the National Research Foundation, Singapore
through the National Quantum Office,
hosted in A\!*STAR, under its
Centre for Quantum Technologies Funding Initiative (S24Q2d0009).
This project is also supported by
the National Research Foundation of Korea
(RS-2024-00509800).

\section{Author Contributions}
\noindent
A.D., J.Z. and S.M.A. conceptualised the project.
A.D., S.K.Y., L.O.C.,  J.Z. and S.M.A. developed the theoretical analysis.
A.D. performed the experiment.
A.D. wrote the original draft and
all authors
(A.D., S.K.Y., L.O.C., O.E., A.W., Y.-S.K., P.K.L.,
J.Z. and S.M.A.)
contributed
to reviewing and editing the manuscript.
The project was supervised by J.Z., S.M.A. and L.O.C.

\noindent\rule{\columnwidth}{0.4pt}


\vspace{2em}

\appendix




\let\oldsec=\section
\let\oldsubsec=\subsection
\let\oldsubsubsec=\subsubsection

\makeatletter
\renewcommand\section[1]{%
  \par
  \vspace{1.5ex}%
  {\raggedright\normalfont\large\bfseries #1\par}%
  \vspace{1ex}%
}

\renewcommand\subsection[1]{%
  \par
  \vspace{1.5ex}%
  {\raggedright\normalfont\bfseries #1\par}%
  \vspace{1ex}%
}

\renewcommand\subsubsection[1]{%
  \noindent
  {\bfseries #1.}%
}
\makeatother

\title{Supplemental Material: Precision Bounds for Characterising Quantum Measurements}

\onecolumngrid

\noindent\rule{\textwidth}{0.4pt}

\let\section=\oldsec
\let\subsection=\oldsubsec
\let\subsubsection=\oldsubsubsec




\renewcommand{\thesection}{\Roman{section}}

\renewcommand{\thesubsection}{\Alph{subsection}}

\renewcommand{\theequation}{SM.\thesection.\arabic{equation}}

\counterwithin*{equation}{section}

\let\oldSection\section

\renewcommand{\section}[1]{%
  \refstepcounter{section}%
  \setcounter{equation}{0}
  \oldSection{Supplemental Material \thesection. #1}%
}

\let\oldSubSection\subsection

\renewcommand{\subsection}[1]{%
  \refstepcounter{subsection}%
  \oldSubSection{\thesubsection. #1}%
}

\renewcommand{\appendixname}{}

\counterwithin*{equation}{section}

\noindent
\begin{center}
{\large \textbf{Supplemental Material: Precision Bounds for Characterising Quantum Measurements}}
\end{center}
\noindent\rule{\textwidth}{0.4pt}

\section{Maximum CFI and optimal probe states for diagonal measurements}
\label{supp:CFIoptdiag}

In this section,
we prove that for
measurement operators~$\Pi_\theta \equiv \{{\pi_j}_\theta\}$
diagonal in some parameter-independent basis,
the CFI-optimal probe state is a basis state
of the parameter-independent basis.
For~$\Pi_\theta$ diagonal
in the standard basis, this reduces to main-text Theorem~\ref{th:diagonaltheorem}.
Let~$\{\ket{0}_d, \dots, \ket{d-1}_d\}$
be the parameter-independent basis in
which POVM elements~$\{{\pi_j}_\theta\}$ are all diagonal.
Define the unitary matrix transforming
the standard basis~$\{\ket{0}, \dots, \ket{d-1}\}$
into this parameter-independent basis
to be~$U$.
The transformed diagonal POVM~$\Pi'_\theta$
with elements~${\pi'_j}_\theta \coloneqq U \, {\pi_j}_\theta \, U^\dagger$
can now be used to infer the optimal probe state.

We first show that transforming
both probe states and measurement operators
to the basis~$\{\ket{k}_d\}$
leaves the CFI invariant, i.e., $\CFI_\theta[\rho, \, \Pi_\theta] = \CFI_\theta[ U \rho U^\dagger ,  \Pi'_\theta]$.
Notably, this invariance does not hold for
transforming to a parameter-dependent basis
because the parameter derivative of the probabilities
are not invariant under such a transformation.
The probability of detecting outcome~$j$
in the standard basis is~$\Tr(\rho \, {\pi_j}_\theta)$,
whereas in the diagonal basis,
it is~$\Tr( U \rho U^\dagger U {\pi_j}_\theta U^\dagger) = \Tr(\rho {\pi_j}_\theta)$. Thus the probabilities
remain invariant under any basis-change transformation.
On the other hand,
the parameter derivative of
the probabilities,~$\partial_\theta p_\theta(j) = \Tr(\rho \, \partial_\theta \, {\pi_j}_\theta)$,
transforms to~$\Tr \left (U\rho U^\dagger \partial_\theta \left ( U \, {\pi_j}_\theta U^\dagger\right ) \right )$,
which is equal to~$\Tr(\rho \, \partial_\theta \, {\pi_j}_\theta)$
if~$U$ is parameter-independent.
For parameter-independent~$U$, thus,
the parameter-derivative of the probabilities
and the probabilities themselves are preserved,
meaning
the CFI remains unchanged.

The invariance of the CFI
when transforming to parameter-independent bases
means that the maximum CFI~${\CFI_\theta}_\mathrm{max}[\Pi'_\theta]$
is equal to the maximum CFI~${\CFI_\theta}_\mathrm{max}[\Pi_\theta]$.
Accordingly, we can perform the maximisation
as per main-text Eq.~\eqref{eq:maxCFIdef}
in the diagonal basis~$\{\ket{k}_d\}$,
and transform back to the standard basis to find the optimal probe states.
Suppose, in the diagonal basis, the CFI-maximising probe state
is~$\rho_\mathrm{opt}$.
Let us denote by~$\mathrm{Diag}[\{A_{jk}\}_{j,k\in[d]}] \coloneqq \{\delta_{jk} A_{jk} \}_{j,k\in[d]}$ the operation of dropping
the off-diagonal terms of a matrix.
It is straightforward to see then,
that
\begin{equation}
    \Tr(\rho_\mathrm{opt} U {\pi_j}_\theta U^\dagger) = \Tr(\mathrm{Diag}[\rho_\mathrm{opt}] U {\pi_j}_\theta U^\dagger ) \, ,
\end{equation}
because~$U {\pi_j}_\theta U^\dagger$ is diagonal. The same equality
holds for the parameter-derivatives of the probabilities,
meaning~$\mathrm{Diag}[\rho_\mathrm{opt}]$ is also
a CFI-maximising state in the diagonal basis.
This establishes that in the diagonal basis,
a CFI-maximising state can always be found
amongst diagonal probe states.

Finally, we show that if
the diagonal basis states~$\{\ket{k}_d\}_{k=0}^{d-1}$
have distinct CFIs,
the CFI-maximising diagonal probe state
is a basis state.
To see this,
first define a function~$F(\vec{x}) = F(x_1, x_2) = \frac{x_2^2}{x_1}$
for~$x_1>0$.
This function of two variables is convex in its vector argument,
i.e., for~$0\leq c_j \leq 1$ and~$\sum_{j\in[d]} c_j = 1$,
\begin{equation}
    F \left (\sum_j c_j \vec{x}_j \right ) = F \left (\sum_j c_j {x_j}_1, \sum_j c_j {x_j}_2 \right ) \leq \sum_j c_j F\left ( {x_j}_1, {x_j}_2\right ) = \sum_j c_j F \left (\vec{x}_j \right ) \, .
\end{equation}
This can be seen from the
Hessian matrix of~$F(x_1, x_2)$,
consisting of the second partial derivatives of~$F(x_1, x_2)$,
being positive semi-definite, i.e.,
\begin{equation}
    \begin{pmatrix}
 \frac{2 x_2^2}{x_1^3} & -\frac{2 x_2}{x_1^2} \\
 -\frac{2 x_2}{x_1^2} & \frac{2}{x_1} \\
\end{pmatrix} \succcurlyeq 0 \, ,
\end{equation}
because its eigenvalues are~$0$ and~$2\frac{x_1^2+x_2^2}{x_1^3}$,
which are non-negative for~$x_1>0$.
Now, the optimal diagonal probe
state~$\mathrm{Diag}[\rho_\mathrm{opt}]$
can be written as
\begin{equation}
    \rho_\mathrm{opt}^{\mathrm{(diag)}} = \begin{bmatrix}
        c_1 & 0 & \dots & 0 \\
        0 & c_2 & \dots & 0 \\
        \vdots & \vdots & \ddots & \vdots \\
        0 & 0 & \dots & c_d \, ,
    \end{bmatrix} = \sum_{j=1}^d c_j \ketbra{j-1}_d \, ,
\end{equation}
where~$c_j$ may depend on~$\theta$ but~$\sum_{j=1}^d c_j = 1$ and~$0 \leq c_j \leq 1$.
The diagonal measurement operators
for~$\Pi'_\theta$ can be expanded
in the~$\{\ket{k}_d\}$ basis
as~${\pi'_j}_\theta = \sum_{k=1}^d {\pi'_{jk}}_\theta \ketbra{k-1}_d$.
The CFI of this state is then
\begin{equation}
    \CFI_\theta[\rho_\mathrm{opt}^{(\mathrm{diag})}, \Pi'_\theta] = \sum_{j\in[m]} \frac{ \Tr[\rho_\mathrm{opt}^{\mathrm{(diag)}} \partial_\theta  {\pi'_j}_\theta ]^2}{\Tr[\rho_\mathrm{opt}^{\mathrm{(diag)}} {\pi'_j}_\theta ]}
    =    \sum_{j\in[m]} \frac{(\sum_{k\in[d]} c_k \partial_\theta {\pi'_{jk}}_\theta)^2}{\sum_{k\in[d]} c_k  {\pi'_{jk}}_\theta} \, .
\end{equation}
Each term of the sum over~$j$
in this last expression
is of the form~$F \left (\sum_k c_k {x_k}_1, \sum_k c_k {x_k}_2 \right )$,
upon identifying~${x_k}_1 \leftrightarrow {\pi'_{jk}}_\theta$
and~${x_k}_2 \leftrightarrow \partial_\theta {\pi'_{jk}}_\theta$,
and is thus convex in~$\{c_1, \dots, c_d\}$.
As a result, the sum itself
and the CFI
are convex in~$\{c_1, \dots, c_d\}$,
so that
\begin{equation}
\label{eq:defCFIconvex}
     \CFI_\theta[\rho_\mathrm{opt}^{(\mathrm{diag})}, \Pi'_\theta] \leq \sum_{k\in[d]} c_k \sum_{j\in[m]} \frac{( \partial_\theta {\pi'_{jk}}_\theta)^2}{{\pi'_{jk}}_\theta}  = \sum_{k\in[d]} c_k \CFI_\theta[ \, \ketbra{k-1}_d , \Pi'_\theta \, ] \leq \max_{k\in[d]}  \CFI_\theta[ \, \ketbra{k-1}_d , \Pi'_\theta \, ] \, .
\end{equation}
As optimal probe states can always be found
within the family of diagonal states,
Eq.~\eqref{eq:defCFIconvex} implies that
\begin{equation}
    \max_{\rho_\mathrm{in}} \CFI[\rho_\mathrm{in}, \Pi_\theta] = \max_{k\in[d]} \CFI_\theta[ \, \ketbra{k-1}_d , \Pi'_\theta \, ]
\end{equation}
and that a CFI-maximising state
is~$\ket{k^*-1}_d$
where~$k^* = \arg \max_{k\in[d]} \CFI_\theta[ \, \ketbra{k-1}_d , \Pi'_\theta\, ]$,
i.e., the diagonal basis state with the largest CFI.

\section{Tight bound for detector estimation}
\label{sec:tightboundsdp}

In this section,
we present a tight bound for single-parameter detector estimation
that can be solved by an SDP.
This bound is based on an extension technique---namely
channel extension~\cite{Fujiwara2008,Sarovar2006}---where instead of considering
a quantum channel~$\mathcal{N}_\theta$, one considers the
extended channel~$\mathcal{N}_\theta \otimes \mathds{1}$~\cite{Fujiwara2008}.
(Here~$\mathds{1}$ represents an identity channel in additional dimensions
that can be probed using ancillary states.)
It is well-known that information content of the extended channel
(for any choice of ancilla dimension) upper-bounds
the information content of the original channel in general~\cite{Fujiwara2008,Sarovar2006,Escher2011,RDD2012}.
While this upper bound is not typically tight for quantum-to-quantum channels~\cite{Fujiwara2008,Sarovar2006,Escher2011}
as considered in~\cite{RDD2012}, for quantum detectors,
which correspond to quantum-to-classical channels,
we can expect this upper bound to be tight
in the single-parameter case~\cite{Sarovar2006,RDD2012}.

\subsection{Quantum-Classical Channel Representation}

Given a parametrised POVM~$\Pi_\theta = \{{\pi_j}_\theta\}_{j\in [m]}$ with~$m$ outcomes
in~$d$ dimensions,
we can obtain its channel representation through Kraus operators~\cite{Wilde13} defined as
\begin{equation}
\label{eq:KrausSet1}
	{K_{j,k}}_\theta = \ket{j} \bra{k} \sqrt{{\pi_j}_\theta} \, , \quad \quad j \in [m] \, , k \in [d] \, .
\end{equation}
Here~$\{ \ket{k} \}_{k\in[d]}$ is a basis for the~$d$-dimensional
state space and~$\{\ket{j}\}_{j\in[m]}$ represents the classical outcome space
of the measurement.
The set~$\mathcal{K} \coloneqq \{{K_{j,k}}_\theta\}$ therefore comprises~$m d$ Kraus elements,
each of which is an~$m \times d$ complex matrix.
It is easy to verify that the elements of~$\mathcal{K}$ satisfy the channel condition,
\begin{equation}
\label{eq:channelcondition}
	\sum_{j,k} {K_{j,k}}_\theta^\dagger {K_{j,k}}_\theta = \mathds{1}_d \, ,
\end{equation}
and, acting on an input state~$\rho_\mathrm{in}\in \mathcal{D}(\mathcal{H}_d)$,
produces the classical output state
\begin{equation}
\label{eq:channeloutput}
	\rho_\mathrm{out} \coloneqq \sum_{j,k}  {K_{j,k}}_\theta \rho_\mathrm{in}  {K_{j,k}}_\theta^\dagger =  \mathrm{Diag}[ p_1, \dots, p_m] \, ,
\end{equation}
where~$p_j$ is the probability of the~$j^\text{th}$ outcome.
However, the Kraus set~$\mathcal{K}$
defined in Eq.~\eqref{eq:KrausSet1} is not the only one
to satisfy Eqs.~\eqref{eq:channelcondition} and~\eqref{eq:channeloutput}~\cite{Sarovar2006,RDD2012}.
There are an infinite number of equivalent Kraus operators,
related unitarily to the one in Eq.~\eqref{eq:KrausSet1},
all of which describe the same physical transformation.

\subsection{Channel Bounds for Kraus Operators}
\label{subsec:ChannelBoundKrausOps}

Nonetheless,
with the above channel representation
for the measurement~$\Pi_\theta$,
the channel estimation technique~\cite{Escher2011,RDD2012}
may be adapted to
detector estimation. In particular,
an upper bound~\cite{Fujiwara2008} to the maximum CFI
of measurement outcomes
is given by
\begin{equation}
	\label{eq:extboundH0}
	\begin{split}
	{\mathcal{F}_\theta}_\mathrm{max} = \max_{\rho_{\mathrm{in}}} \mathcal{I}_\theta[ \rho_\mathrm{out}]
	&\leq 4 \left\Vert \sum_{j,k} \partial_\theta {K_{j,k}}_\theta^\dagger \partial_\theta {K_{j,k}}_\theta \right\Vert_{\mathrm{sp}}^2  \\
	& = 4  \left\Vert \sum_{j} (\partial_\theta \sqrt{{\pi_j}_\theta})^2 \right \Vert_{\mathrm{sp}}^2 \, .
	\end{split}
\end{equation}
Although this bound is not generally tight,
and is inferior to the spectral DQFI~$\QFIsp$
defined in the main text (Definition~\ref{def:QFI2}),
if we optimise over all equivalent Kraus representations,
we should expect a tight bound, in principle~\cite{Sarovar2006, Escher2011,RDD2012}.
More precisely,
we consider Kraus sets~$\mathcal{K}'$
obtained by unitarily transforming~$\mathcal{K}$,
\begin{equation}
	\mathcal{K}' \coloneqq \{ {K'_{j,k}}_\theta \} \, , \quad {K'_{j,k}}_\theta = \sum_{j', \, k'} {U_{jk,j'k'}}_\theta \,  {K_{j',k'}}_\theta \, ,
\end{equation}
where~$U_\theta$ is a parameter-dependent~$md \times md$ unitary matrix.
Then, the tight bound for
single-parameter detector estimation can be
written as~\cite{Escher2011}
\begin{equation}
\label{eq:CextDef1}
\mathcal{J}_\mathrm{ext} \coloneqq 4 \min_{U_\theta}  \left\Vert \sum_{j,k} \partial_\theta {K'_{j,k}}_\theta^\dagger \partial_\theta {K'_{j,k}}_\theta \right\Vert_{\mathrm{sp}}^2 \, .
\end{equation}
For local estimation in the separable setting,
as considered in our manuscript,
we can assume~$U_\theta$ to be at most linear in~$\theta$,
and therefore of the form~$U_\theta \sim e^{i (\theta - \theta^*) H}$,
for some Hermitian generator~$H$ and true value~$\theta^*$~\cite{RDD2012}.
The modified Kraus derivatives are then
\begin{equation}
	\partial_\theta {K'_{j,k}}_\theta = \partial_\theta {K_{j,k}}_\theta - i \sum_{j', k'} H_{jk, j'k'} {K_{j',k'}}_\theta \, ,
\end{equation}
and the minimisation over~$U_\theta$
in Eq.~\eqref{eq:CextDef1} reduces
to a minimisation over the Hermitian generator~$H$,
which can be formulated as a semi-definite program~\cite{RDD2012,Albarelli2022}.

\subsection{Detector Extension Bound}
In fact, for detectors, the~$md \times md$ matrix~$H$
is block-diagonal with~$m$ number of~$d \times d$ blocks,
representing the fact that we only need to consider unitary mixtures
of Kraus operators that correspond to the same measurement outcome,
i.e.,~$H_{jk, j'k'} = 0$ whenever~$j\neq j'$.
This is because we are effectively choosing the optimal (parameter-dependent)
measurement basis~$\{\ket{k'}_\theta\}$
instead of~$\{\ket{k}\}$ in Eq.~\eqref{eq:KrausSet1}, for each
measurement outcome~$j\in[m]$.
Accordingly, denoting the~$j^\text{th}$
diagonal Hermitian block of~$H$ as~$h_j$,
we may re-evaluate the bound in
Eq.~\eqref{eq:extboundH0}
using the optimal derivatives~${\mathcal{D}_j}_\theta$ instead of~$\partial_\theta \sqrt{ {\pi_j}_\theta}$,
\begin{equation}
\label{eq:tightsdpprimalform}
	{\mathcal{F}_\theta}_\mathrm{max} \leq 4 \min_{h_j}  \bigg\Vert \sum_{j} {\mathcal{D}_j^\dagger}_\theta {\mathcal{D}_j}_\theta \bigg \Vert_{\mathrm{sp}}^2  ,
	\quad \mathrm{where} \, \, {\mathcal{D}_j}_\theta \coloneqq \partial_\theta \sqrt{{\pi_j}_\theta} - i h_j \sqrt{{\pi_j}_\theta} \, .
\end{equation}

This expression can be further simplified to remove the square-root derivatives
to get
\begin{equation}
\label{eq:tightsdpsecondform}
	\mathcal{J}_{\mathrm{Ext},\theta} \coloneqq \min_{\{S_j\}} \bigg \Vert  \sum_{j} (\partial_\theta {\pi_j}_\theta - S_j) \,  {\pi_j}_\theta^{-1} \, (\partial_\theta  {\pi_j}_\theta + S_j) \bigg \Vert_{\mathrm{sp}}^2 \, ,
\end{equation}
where~$S_j = 2 \sqrt{{\pi_j}_\theta} \partial_\theta \sqrt{{\pi_j}_\theta}  - \partial_\theta {\pi_j}_\theta  - 2 i  \sqrt{{\pi_j}_\theta} h_j \sqrt{{\pi_j}_\theta}$ are skew-Hermitian matrices~$(S_j^\dagger = - S_j)$ for~$j\in[m]$.
We call the bound~${\CFI_\theta}_\mathrm{max} \leq \mathcal{J}_{\mathrm{Ext},\theta}$ the detector extension bound, and the quantity~$\mathcal{J}_{\mathrm{ext}, \theta}$ the extended DQFI.
The minimisation in
Eq.~\eqref{eq:tightsdpsecondform} can be readily formulated
as the following SDP,
\begin{equation}
\mathcal{J}_{\mathrm{Ext},\theta} = \Big \{	\min_{t,\, \{S_j\}} t \, \, \Big\vert   \begin{pmatrix} t \mathds{1}_d &  (\partial_\theta  {\pi_1}_\theta - S_1) {\pi_1}_\theta^{-1/2}  & \dots &   (\partial_\theta  {\pi_m}_\theta - S_m) {\pi_m}_\theta^{-1/2} \\
	{\pi_1}_\theta^{-1/2}  (\partial_\theta  {\pi_1}_\theta + S_1) & \mathds{1}_d & 0 & 0 \\
	\vdots  &  0 & \ddots  & 0 \\
	{\pi_m}_\theta^{-1/2}  (\partial_\theta  {\pi_m}_\theta + S_m) &  0 & 0  & \mathds{1}_d \\
	 \end{pmatrix} \succcurlyeq 0 \Big \} \, ,
\end{equation}
which simplifies to
\begin{equation}
\mathcal{J}_{\mathrm{Ext},\theta} = \Big \{	\min_{t,\, \{S_j\}} t \, \, \Big\vert   \begin{pmatrix} t \mathds{1}_d &  \partial_\theta  {\pi_1}_\theta - S_1   & \dots &   \partial_\theta  {\pi_m}_\theta - S_m  \\
	 \partial_\theta  {\pi_1}_\theta + S_1 & {\pi_1}_\theta & 0 & 0 \\
	\vdots  &  0 & \ddots  & 0 \\
	 \partial_\theta  {\pi_m}_\theta + S_m &  0 & 0  & {\pi_m}_\theta \\
	 \end{pmatrix} \succcurlyeq 0 \Big \} \, .
\end{equation}
Note that the choice~$S_j = 0$ reverts to the
unoptimised bound in Eq.~\eqref{eq:extboundH0}. On the other hand,
the optimal skew Hermitian matrices~$S_j$,
which are related to the optimal~$h_j$
in Eq.~\eqref{eq:tightsdpprimalform}
or the optimal~$U_\theta$ in Eq.~\eqref{eq:CextDef1},
are not known in terms of~$\Pi_\theta$ and~$\partial_\theta \Pi_\theta$.

\subsection{Connection  and Comparison with SLD Approach}
Remarkably,
the SLD operators~${L_j}_\theta$ defined in main-text Eq.~\eqref{eq:SLDdetector} yield
a good ansatz for the optimal~$S_j$.
Consider the choice~$S_j = {\pi_j}_\theta {L_j}_\theta - \partial_\theta {\pi_j}_\theta$,
where skew-Hermicity, i.e.~$S_j + S_j^\dagger = 0$, follows
from the SLD definition (Eq.~\eqref{eq:SLDdetector}).
It is easy to verify that
using this ansatz in Eq.~\eqref{eq:tightsdpsecondform}
produces the spectral DQFI
(Definition~\ref{def:QFI2}~in main text),
which implies~$\mathcal{J}_{\mathrm{Ext},\theta}  \leq \mathcal{J}_{\Vert,\theta}$.
Moreover, whenever the spectral DQFI is tight, i.e.,
whenever the attainability criteria~(1)--(3)
(Methods, Section~\textbf{Attainability Criteria of the DQFI})
are satisfied,
this ansatz is exactly the optimal choice for~$S_j$.
For instance, in Example~\ref{eg:example2dephasing} of
the main-text characterising a dephased PVM, the spectral DQFI
is tight, and therefore~$S_j = {\pi_j}_\theta {L_j}_\theta - \partial_\theta {\pi_j}_\theta$
is the optimal solution for the minimisation in
Eq.~\eqref{eq:tightsdpsecondform}.
More generally,
whenever the attainability criteria are satisfied,
this ansatz serves as an analytical solution to
the minimisation in Eq.~\eqref{eq:tightsdpsecondform}.

In fact, even when the attainability criteria are not satisfied,
this ansatz produces a good approximation to the tight bound.
To demonstrate this, we numerically compare the tight bound
and the spectral DQFI for randomly-generated detector estimation problems.
Specifically, we sample 10,000 random qubit POVMs
and their derivatives~($d=2$)
corresponding to two-outcome measurements~($m=2$).
Our results, shown below in Figs.~\ref{fig:singleparacomb}~\&~\ref{fig:singleparacomp},
reveal that the gap between the spectral DQFI~$ \mathcal{J}_{\Vert,\theta}$ and the
extended DQFI~$ \mathcal{J}_{\mathrm{Ext},\theta}$
can be small, averaging between~2-3\% over~10,000 random models.

\begin{figure}[htb]
    \centering
    \includegraphics[width=0.97\linewidth]{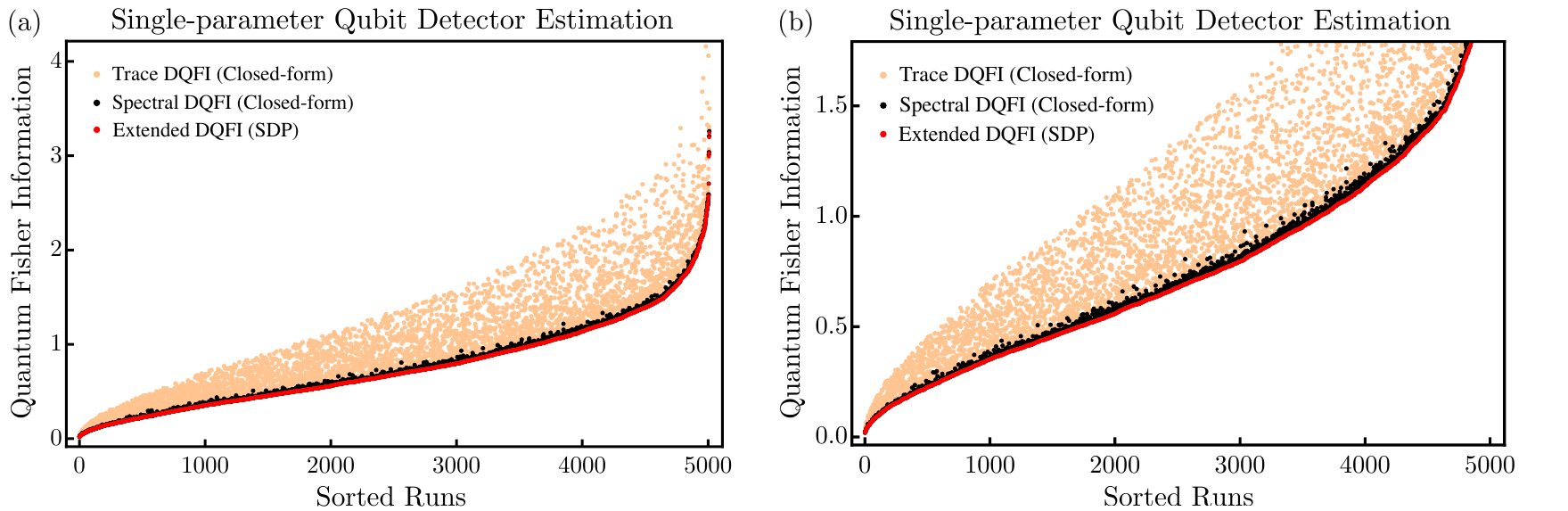}
    \caption{Comparison of detector quantum Fisher information (DQFI) measures
    for single-parameter estimation from randomly-generated
    qubit measurement models.
    (a) Across~10,000 random models, the extended DQFI
    SDP (red) provides a tighter bound for single-parameter
    estimation
    than the trace DQFI (light orange) and the spectral DQFI (black).
    (b) A zoomed-in version of (a)
    shows that the trace DQFI
    can be far from the extended DQFI but the spectral DQFI is typically
    close. In (a) and (b), the scatter points
    are sorted in the increasing order of the extended DQFI.}
    \label{fig:singleparacomb}
\end{figure}
\begin{figure}[htb]
    \centering
    \includegraphics[width=0.92\linewidth]{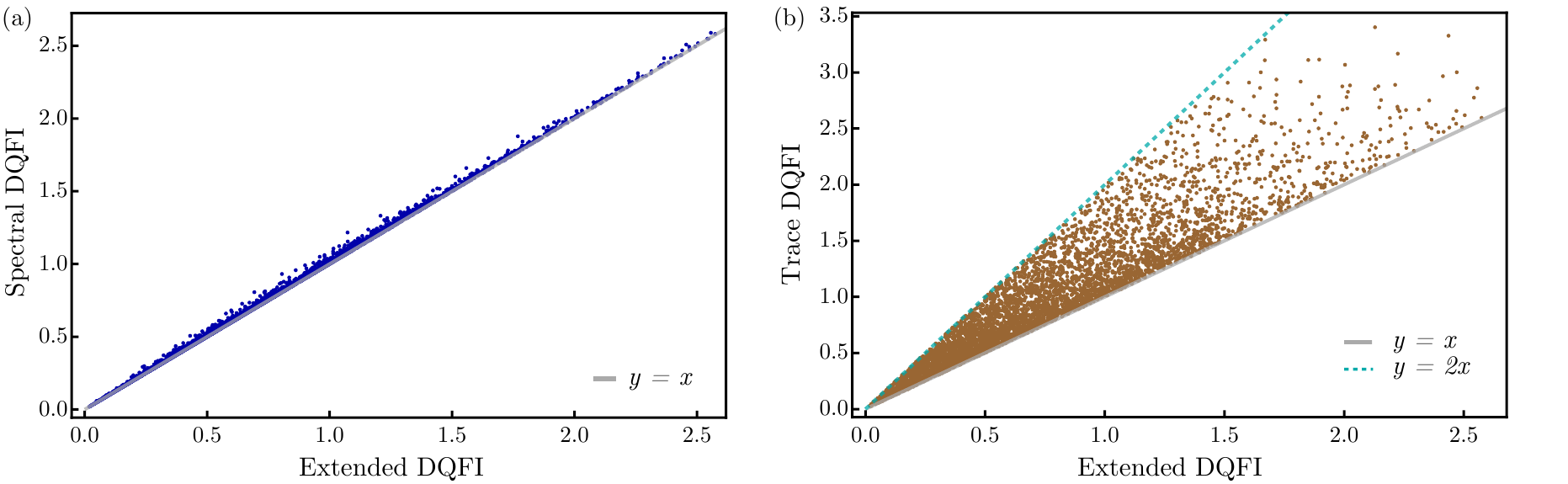}
    \caption{One-to-one comparison of the spectral DQFI
    (a) and the trace DQFI (b) with
    the extended DQFI,
    over~10,000 randomly-generated, single-parameter,
    qubit measurement models.
    (a) The spectral DQFI is not always tight
    but is generally close to the extended DQFI.
    (b) The trace DQFI and the extended DQFI
    can disagree by up to a factor of two.}
    \label{fig:singleparacomp}
\end{figure}

These results suggest a further simplification
of Eq.~\eqref{eq:tightsdpsecondform} that can help us better understand
the attainable precision limit in single-parameter detector estimation.
Consider the substitution~$S_j = {\pi_j}_\theta {L_j}_\theta - \partial_\theta {\pi_j}_\theta + S'_j$
in Eq.~\eqref{eq:tightsdpsecondform}, where~$S'_j$ are skew-Hermitian matrices.
This results in the following bound:
\begin{equation}
	\label{eq:tightsdpthirdform}
	\mathcal{J}_{\mathrm{Ext},\theta} = \min_{S'_j} \bigg \Vert  \sum_{j} ( {L_j}_\theta +  {\pi_j}_\theta^{-1} S'_j)^\dagger  \,  {\pi_j}_\theta \, ( {L_j}_\theta +  {\pi_j}_\theta^{-1} S'_j) \bigg \Vert \, .
\end{equation}
If we identify~${L'_j}_\theta \coloneqq  {L_j}_\theta +  {\pi_j}_\theta^{-1} S'_j$
as the modified SLD operators, the objective above becomes~$\Vert \sum_j {L'_j}_\theta^\dagger
{\pi_j}_\theta {L'_j}_\theta \Vert$, resembling the spectral DQFI.
This is justified because,
whereas~${L_j}_\theta$ is Hermitian and satisfies the equation~${L_j}_\theta {\pi_j}_\theta + {\pi_j}_\theta  {L_j}_\theta = 2 \partial_\theta  {\pi_j}_\theta$,
the modified SLD operators are non-Hermitian and satisfy
\begin{equation}
	\label{eq:nonHSLDops}
	{L'_j}_\theta^\dagger  {\pi_j}_\theta + {\pi_j}_\theta  {L'_j}_\theta = 2 \partial_\theta  {\pi_j}_\theta \, ,
\end{equation}
for any choice of~$S'_j$. In fact, any general solution to Eq.~\eqref{eq:nonHSLDops}
is of the form~${L'_j}_\theta = {L_j}_\theta +   {\pi_j}_\theta^{-1} S'_j$ for some skew-Hermitian~$S'_j$.
Therefore, the tight bound may be reformulated as follows:
\begin{equation}
\label{eq:DetExtQFIfinalDef}
	\mathcal{J}_{\mathrm{Ext},\theta}  = \min_{{L'_j}_\theta} \bigg\{  \big\Vert  \sum_j {L'_j}_\theta^\dagger
	{\pi_j}_\theta {L'_j}_\theta \big\Vert \, \,  \bigg\vert \, \,  {L'_j}_\theta^\dagger  {\pi_j}_\theta + {\pi_j}_\theta  {L'_j}_\theta = 2 \partial_\theta  {\pi_j}_\theta \bigg \} \, .
\end{equation}
This formulation reveals that while the SLD approach, as used to derive
the spectral DQFI, does not always result in a tight bound,
including a non-Hermitian component in the SLD operator
and optimising this component produces the tight bound.
Therefore, a minor modification to the SLD approach still yields
the tight bound for detector estimation,
underscoring the versatility of this technique.
Notably, in state estimation,
the optimal choice for the nSLD operators
that leads to the tight bound is the Hermitian SLD~\cite{ExtCon15};
for detector estimation, in contrast,
the optimal nSLD operators that produce the tight bound
are non-Hermitian.

\subsection{Proof of Validity \& Attainability Criteria -- Detector Extension Bound}
The connection between the extended bound
and the SLD approach has further far-reaching consequences.
Here, we present some key observations.

First, Theorem~\ref{th:upperbound}
and its proof
in Eq.~\eqref{eq:proofEq1} of Methods~\cite{Paris2009,Paris2004}
directly
generalises to the
extended bound,
as follows.

\begin{lemma}[Extended DQFI upper-bounds maximum CFI]
\label{lemma:extdqfivalid}
For estimating quantum detectors,
the detector extension bound~$\mathcal{J}_{\mathrm{Ext},\theta}$
upper-bounds the maximum CFI over separable probe states,~${\CFI_\theta}_\mathrm{max}$, i.e.,~${\CFI_\theta}_\mathrm{max} \leq
\mathcal{J}_{\mathrm{Ext},\theta}$.
\end{lemma}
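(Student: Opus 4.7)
The plan is to replay the chain of inequalities that established Theorem~\ref{th:upperbound} in Methods (Eq.~\eqref{eq:proofEq1}), but with the non-Hermitian SLD operators~$\{{L'_j}_\theta\}$ from Eq.~\eqref{eq:nSLDMethods} in place of the Hermitian ones. The one ingredient I would re-verify before anything else is that the probability-derivative identity~$\partial_\theta p_\theta(j\vert \rho) = \Re[\Tr({\pi_j}_\theta \, {L'_j}_\theta \, \rho)]$ continues to hold. This follows by tracing both sides of Eq.~\eqref{eq:nSLDMethods} against~$\rho$: the two terms~$\Tr(\rho \, {L'_j}_\theta^\dagger \, {\pi_j}_\theta)$ and~$\Tr(\rho \, {\pi_j}_\theta \, {L'_j}_\theta)$ on the left are complex conjugates of each other (from cyclicity together with~$\overline{\Tr(X)} = \Tr(X^\dagger)$), so they sum to~$2\Re[\Tr({\pi_j}_\theta \, {L'_j}_\theta \, \rho)]$, matching the factor of~$2$ on the right. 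Crucially, Hermiticity of~${L'_j}_\theta$ is never invoked in this derivation.

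With this identity in hand, I would apply the same two bounding steps as in the Hermitian case: first~$\Re[z]^2 \leq \vert z \vert^2$ on each term of the CFI sum, and then the operator Cauchy--Schwarz inequality with~$A = \sqrt{{\pi_j}_\theta}\sqrt{\rho}$ and~$B = \sqrt{{\pi_j}_\theta}\, {L'_j}_\theta \sqrt{\rho}$. Only~$B$ differs from its Hermitian-case counterpart, since~${L'_j}_\theta$ is no longer self-adjoint, but Cauchy--Schwarz in the Hilbert--Schmidt inner product applies to arbitrary operators. Summing over outcomes~$j$ collapses the resulting bound to~$\Tr(\rho \, Q'_\theta)$, where~$Q'_\theta \coloneqq \sum_{j} {L'_j}_\theta^\dagger {\pi_j}_\theta {L'_j}_\theta$. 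This operator is manifestly positive semi-definite because each summand has the form~$X^\dagger {\pi_j}_\theta X$ with~${\pi_j}_\theta \succcurlyeq 0$, so its spectral radius equals its largest eigenvalue.

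The final two steps are routine. Maximising~$\Tr(\rho \, Q'_\theta)$ over~$\rho \in \mathcal{D}(\mathcal{H}_d)$ returns~$\Vert Q'_\theta \Vert_\mathrm{sp}^2$, giving~${\CFI_\theta}_\mathrm{max} \leq \Vert Q'_\theta \Vert_\mathrm{sp}^2$ for \emph{every} admissible choice of~$\{{L'_j}_\theta\}$ satisfying Eq.~\eqref{eq:nSLDMethods}. Since the left-hand side does not depend on this choice, I can take the infimum over~$\{{L'_j}_\theta\}$ on the right to obtain~${\CFI_\theta}_\mathrm{max} \leq \QFIext$, which is exactly the claim. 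I do not foresee a serious obstacle: the only places where non-Hermiticity could have caused trouble --- either in the probability-derivative identity or in the positive semi-definiteness of~$Q'_\theta$ --- are precisely the two short computations above, and both go through cleanly. The argument is therefore best viewed as a sanity check that replacing the Hermitian SLD by any solution of Eq.~\eqref{eq:nSLDMethods} preserves the operator-inequality scaffolding of Theorem~\ref{th:upperbound}, which is what makes the subsequent minimisation in~$\QFIext$ legitimate.
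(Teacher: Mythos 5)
Your proposal is correct and follows essentially the same route as the paper's own proof of Lemma~\ref{lemma:extdqfivalid}: the chain $\Re[z]^2\leq\vert z\vert^2$, operator Cauchy--Schwarz with the non-Hermitian ${L'_j}_\theta$, collapse to $\Tr(\rho\,Q'_\theta)$, maximisation over $\rho$, and finally the minimisation over admissible nSLD sets. Your explicit verification that $\partial_\theta p_\theta(j\vert\rho)=\Re[\Tr({\pi_j}_\theta{L'_j}_\theta\rho)]$ still holds without Hermiticity is a detail the paper leaves implicit, but it does not change the argument.
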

\begin{proof}
For any
valid set of operators~$\{{L'_j}_\theta\}_{j=1}^m$ satisfying
the constraints
${L'_j}_\theta^\dagger  {\pi_j}_\theta + {\pi_j}_\theta  {L'_j}_\theta = 2 \partial_\theta  {\pi_j}_\theta$,
it holds that
\begin{equation}
\label{eq:proofTightBoundEq1}
\begin{aligned}
    &\CFI_\theta[\rho,\Pi_\theta] = \sum_{j\in[m]} \frac{\left ( \Re \left [ \Tr({\pi_j}_\theta \, \rho \, {L'_j}_\theta^\dagger ) \right ] \right )^2}{\Tr(\rho \, {\pi_j}_\theta)}
    \leq  \sum_{j\in[m]} \left \vert \frac{ \Tr({\pi_j}_\theta \, \rho \, {L'_j}_\theta^\dagger ) }{\sqrt{\Tr(\rho \, {\pi_j}_\theta)}} \right \vert^2
    = \sum_{j\in[m]} \left \vert \Tr \left ( \frac{ \sqrt{{\pi_j}_\theta} \, \sqrt{\rho}}{\sqrt{\Tr(\rho \, {\pi_j}_\theta)}}  \; \;  \sqrt{\rho} \, {L'_j}_\theta^\dagger \sqrt{{\pi_j}_\theta} \right )  \right \vert^2\\
    \leq &\sum_{j\in[m]} \Tr \left ( \frac{\sqrt{{\pi_j}_\theta} \rho \sqrt{{\pi_j}_\theta}}{\Tr(\rho \, {\pi_j}_\theta)}\right ) \Tr \left ( \sqrt{\rho} {L'_j}_\theta^\dagger {\pi_j}_\theta {L'_j}_\theta \sqrt{\rho} \right )
    = \sum_{j\in[m]} \Tr \left ({L'_j}_\theta^\dagger {\pi_j}_\theta {L'_j}_\theta \rho \right )  \, ,
\end{aligned}
\end{equation}
and therefore
\begin{equation}
\label{eq:tightboundproofeq2}
{\CFI_\theta}_\mathrm{max} = \max_\rho \CFI_\theta[\rho,\Pi_\theta] \leq \min_{{L'_j}_\theta} \Big\Vert \sum_{j\in[m]} {L'_j}_\theta^\dagger {\pi_j}_\theta {L'_j}_\theta  \Big \Vert_\mathrm{sp}^2 \equiv \QFIext \, ,
\end{equation}
which proves the upper bound.
\end{proof}

Second, the attainability criteria~(1)--(3) for
the spectral DQFI
(Methods, Section~\textbf{Attainability Criteria of the DQFI}) directly
generalises to the extended DQFI,
as can be checked by inspecting the
three inequalities in Eqs.~\eqref{eq:proofTightBoundEq1}--\eqref{eq:tightboundproofeq2}:
\begin{enumerate}
    \item $\qtrace({\pi_j}_\theta \rho {L'_j}_\theta^\dagger)$ is real for all~$j \in [m]$
    and all~$\theta \in \Theta$,
    \item $ \rho {L'_j}_\theta^\dagger {\pi_j}_\theta \propto \rho {\pi_j}_\theta$ for all~$j\in[m]$ and all~$\theta \in \Theta$.
    \item $\sum_{j\in[m]} \left ({L'_j}_\theta^\dagger {\pi_j}_\theta {L'_j}_\theta \right ) \rho = \lambda^{\mathrm{max}} \rho$ where $\lambda^{\mathrm{max}} = \max \mathrm{eig}  \, \left [  \sum_{j\in[m]} \left ({L'_j}_\theta^\dagger {\pi_j}_\theta {L'_j}_\theta \right ) \right ]$
\end{enumerate}
These criteria can always be satisfied
for any single-parameter POVM~$\Pi_\theta$,
for the optimal probe state~$\rho = \rho^\mathrm{opt}$.
This optimal probe state
is simply the largest-eigenvalue
eigenvector of~$\sum_{j\in[m]} \left ({L'_j}_\theta^\dagger {\pi_j}_\theta {L'_j}_\theta \right )$.
This shows that even if
the SLD operators~${L_j}_\theta$ do not share a common eigenvector,
we can add non-Hermitian components to them such that
the modified nSLD operators~${L'_j}_\theta$
share a common eigenvector---this
eigenvector is always the largest-eigenvalue eigenvector of~$\sum_j  {L'_j}_\theta^\dagger
{\pi_j}_\theta {L'_j}_\theta$ and corresponds to the optimal probe state.
These facts follow from the proof of tightness
of the detector extension bound~\cite{Sarovar2006,Fujiwara2008,RDD2012},
i.e.,~$\mathcal{J}_{\mathrm{Ext},\theta} = {\CFI_\theta}_\mathrm{max}$,
presented below.

\subsection{Proof of Tightness for Detector Extension Bound}
\label{subsec:tightproof}

Now we prove that
in single-parameter detector estimation,
the extended DQFI~$\QFIext$ equals the attainable DQFI,
given by the maximum CFI~$\CFImax$ over separable probe states.
This is not generally true for channel estimation,
particularly for quantum-to-quantum channels,
where~$\QFIext \geq \CFImax$.
In fact, the bound on which~$\QFIext$
is based
(introduced in Ref.~\cite{Fujiwara2008})
is generally only tight for the extended channel,
which means that attaining the bound~$\QFIext$ requires
using probe-ancilla entangled input states
to the extended system~$\Pi_\theta \otimes \mathds{1}$,
as shown in Fig.~\ref{fig:extendedDetector}.
For instance, the use of ancilla-entangled probes
can produce more information than any
separable probe in certain channel estimation problems~\cite{DD14,Huang16}.
In contrast, for single-parameter detector estimation,
or the estimation of quantum-to-classical channels,
such entangled ancilla states are not beneficial,
and the extended bound can be achieved by a single
separable pure state---precisely~$\rho^\mathrm{opt}$
defined in main-text Eq.~\eqref{eq:maxCFIdef}.

\begin{figure}[htb]
    \centering
    \includegraphics[width=0.46\linewidth]{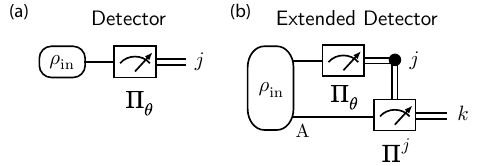}
    \caption{Comparison of detector estimation (a) and extended detector estimation (b). Estimating the extended detector requires
    an ancillary system (A) and conditional ancilla measurements ($\Pi^j$)~\cite{DAriano2004}.}
    \label{fig:extendedDetector}
\end{figure}

We now prove that the detector extension bound
is tight, meaning that~$\QFIext = \CFImax$.
First, we prove a lemma that allows us to write
the CFI~$\CFI_\theta[\rho, \Pi_\theta]$ for any pure state
in the form~$\Tr \left [\left( \sum_j {L'_j}^\dagger_\theta {\pi_j}_\theta
{L'_j}_\theta \right ) \rho \right ]$.

\begin{lemma}
\label{lemma:CFIDetExtTransform}
    For any pure state~$\rho = \ketbra{\psi}{\psi}$,
    and for the corresponding choice~${L'_j}^\mathrm{cand}_\theta
    \coloneqq  {\pi_j}_\theta^{-1} \left ( \partial_\theta {\pi_j}_\theta + S_j^\mathrm{cand} \right )$, where
    \begin{equation}
    \label{eq:someSjcand}
        S_j^\mathrm{cand} = \frac{\bra{\psi}\partial_\theta {{\pi_j}_\theta} \ket{\psi}}{\bra{\psi} {{\pi_j}_\theta} \ket{\psi}}  \left [ {\pi_j}_\theta, \, \rho \right] - \left [ \partial_\theta {\pi_j}_\theta , \, \rho \right ]  \, ,
    \end{equation}
    the quantity~$\CFI_\theta[\rho, \Pi_\theta] =
    \Tr \left [\left( \sum_j {{L'_j}^\mathrm{cand}_\theta}^\dagger {\pi_j}_\theta
    {L'_j}^\mathrm{cand}_\theta \right ) \rho \right ]$.
\end{lemma}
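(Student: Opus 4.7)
The plan is to recognise Lemma~\ref{lemma:CFIDetExtTransform} as the statement that the specific ansatz $S_j^{\mathrm{cand}}$ saturates the two inequalities in the chain~\eqref{eq:proofTightBoundEq1} of Lemma~\ref{lemma:extdqfivalid}, namely $\Re[z]^2 \leq |z|^2$ and the operator Cauchy--Schwarz inequality. Once both are equalities, the first and last lines of that display coincide, giving exactly the desired identity $\CFI_\theta[\rho,\Pi_\theta] = \Tr\bigl[\bigl(\sum_j {L'_j}^{\mathrm{cand}\,\dagger}_\theta {\pi_j}_\theta {L'_j}^{\mathrm{cand}}_\theta\bigr)\rho\bigr]$. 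So the task reduces to verifying two algebraic identities for the candidate operators when $\rho=\ketbra{\psi}{\psi}$.

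First I would do the bookkeeping: check that $S_j^{\mathrm{cand}}$ is indeed skew-Hermitian (which follows immediately from $[A,B]^\dagger = -[A,B]$ for Hermitian $A,B$), and that ${L'_j}^{\mathrm{cand}}_\theta \coloneqq {\pi_j}_\theta^{-1}(\partial_\theta {\pi_j}_\theta + S_j^{\mathrm{cand}})$ satisfies the constraint ${L'_j}^\dagger {\pi_j}_\theta + {\pi_j}_\theta L'_j = 2\partial_\theta {\pi_j}_\theta$, which is a one-line computation from the skew-Hermicity of $S_j^{\mathrm{cand}}$. This ensures ${L'_j}^{\mathrm{cand}}_\theta$ is an admissible nSLD operator in the sense of Eq.~\eqref{eq:nonHSLDops}.

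The core of the proof is the identity $\rho\, {L'_j}^{\mathrm{cand}\,\dagger}_\theta\,{\pi_j}_\theta = \alpha_j\, \rho\,{\pi_j}_\theta$, where $\alpha_j \coloneqq \partial_\theta p_j / p_j$ with $p_j = \bra{\psi}{\pi_j}_\theta\ket{\psi}$. Using ${L'_j}^\dagger {\pi_j}_\theta = \partial_\theta{\pi_j}_\theta - S_j^{\mathrm{cand}}$, I would expand $\rho S_j^{\mathrm{cand}} = \alpha_j\rho[{\pi_j}_\theta,\rho] - \rho[\partial_\theta {\pi_j}_\theta,\rho]$ and invoke the pure-state identities $\rho^2=\rho$ and $\rho A \rho = \langle A\rangle \rho$. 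The two non-commutator contributions involving $\rho A \rho$ combine into $(\partial_\theta p_j - \alpha_j p_j)\rho$, which vanishes by the definition of $\alpha_j$. What remains yields exactly $\rho L'_j{}^\dagger {\pi_j}_\theta = \alpha_j \rho {\pi_j}_\theta$, which is the operator-level saturation condition (2) of the Cauchy--Schwarz step listed after Eq.~\eqref{eq:tightboundproofeq2}. Taking the trace gives $\Tr({\pi_j}_\theta \rho {L'_j}^\dagger) = \alpha_j p_j = \partial_\theta p_j \in \mathbb{R}$, which is the reality condition (1).

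With both saturation conditions established for every $j$, each term in the sum over $j$ in Eq.~\eqref{eq:proofTightBoundEq1} is an equality, and summing yields $\CFI_\theta[\rho,\Pi_\theta] = \sum_j \Tr({L'_j}^{\mathrm{cand}\,\dagger}_\theta {\pi_j}_\theta {L'_j}^{\mathrm{cand}}_\theta \rho)$, which is the statement of the lemma. The main obstacle I anticipate is purely bookkeeping: making sure the commutators in $S_j^{\mathrm{cand}}$ collapse correctly under the sandwich $\rho(\cdot)$ once $\rho^2=\rho$ is applied, and that the linear-in-$\rho$ remainders reassemble into $\alpha_j \rho {\pi_j}_\theta$ rather than some spurious $\alpha_j {\pi_j}_\theta \rho$ (the asymmetry is resolved by the fact that the identity only needs to hold when multiplied on the left by $\rho$, not on both sides). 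No deeper idea beyond this careful pure-state algebra appears to be needed.
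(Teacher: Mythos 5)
Your proposal is correct and follows essentially the same route as the paper: both arguments hinge on the same pure-state computation showing that $\ket{\psi}$ is a common eigenvector of the candidate nSLD operators with real eigenvalue $\alpha_j = \partial_\theta p_j/p_j$ (your identity $\rho\,{L'_j}^{\mathrm{cand}\,\dagger}_\theta{\pi_j}_\theta=\alpha_j\rho\,{\pi_j}_\theta$ is just the adjoint of the paper's ${L'_j}^{\mathrm{cand}}_\theta\ketbra{\psi}{\psi}=\alpha_j\ketbra{\psi}{\psi}$). The paper then evaluates the trace directly rather than invoking saturation of the chain in Eq.~\eqref{eq:proofTightBoundEq1}, but it explicitly remarks that this is the same as observing Cauchy--Schwarz is tight for a common eigenstate, so the two presentations coincide.
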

\begin{proof}
    As~$S_j^\mathrm{cand}$ is a commutator between Hermitian operators,
    it is skew-symmetric, i.e.,~$S_j^\mathrm{cand}
    + S_j^{\mathrm{opt} \,\dagger} = 0$,
    and therefore leads to a feasible candidate~${L'_j}^\mathrm{cand}_\theta$
    for the
    minimisation in the extended DQFI definition
    (Eq.~\eqref{eq:tightsdpsecondform}).
    First, we calculate~$S_j^\mathrm{cand}\rho$
    to find
    \begin{align*}
        S_j^\mathrm{cand} \ketbra{\psi}{\psi} &= \frac{\bra{\psi}\partial_\theta {{\pi_j}_\theta} \ket{\psi}}{\bra{\psi} {{\pi_j}_\theta} \ket{\psi}} \left ( {\pi_j}_\theta \ketbra{\psi}{\psi} - \ketbra{\psi}{\psi} {\pi_j}_\theta \ketbra{\psi}{\psi} \right )
        + \ketbra{\psi}{\psi} \partial_\theta {\pi_j}_\theta \ketbra{\psi}{\psi} - \partial_\theta {\pi_j}_\theta \ketbra{\psi}{\psi}\\
        &= \left ( \frac{\bra{\psi}\partial_\theta {{\pi_j}_\theta} \ket{\psi}}{\bra{\psi} {{\pi_j}_\theta} \ket{\psi}} {\pi_j}_\theta - \partial_\theta {\pi_j}_\theta \right ) \ketbra{\psi}{\psi} \, ,
    \end{align*}
    which leads to
    \begin{equation}
        {L'_j}^\mathrm{cand}_\theta \ketbra{\psi}{\psi} = {\pi_j}_\theta^{-1}
        \left ( \partial_\theta {\pi_j}_\theta + S_j^\mathrm{cand} \right ) \ketbra{\psi}{\psi} = \frac{\bra{\psi}\partial_\theta {{\pi_j}_\theta} \ket{\psi}}{\bra{\psi} {{\pi_j}_\theta} \ket{\psi}} \ketbra{\psi}{\psi} \propto \ketbra{\psi}{\psi} \, ,
    \end{equation}
    meaning that~$\ket{\psi}$ is a common eigenstate to
    each nSLD operator~${L'_j}^\mathrm{cand}_\theta$
    with real eigenvalues. Notice that this proves
    that the attainability criteria~(1) \&~(2) for the extended DQFI
    (presented in the previous subsection) are satisfied
    for this choice of the nSLD~$\{{L'_j}_\theta\}$.
    Now it is straightforward to compute~$\Tr \left [\left( \sum_j {{L'_j}^\mathrm{cand}_\theta}^\dagger {\pi_j}_\theta
    {L'_j}^\mathrm{cand}_\theta \right ) \rho \right ]$,
    as below,
    \begin{equation}
        \begin{aligned}
            \Tr \bigg [  \sum_j {{L'_j}^\mathrm{cand}_\theta}^\dagger {\pi_j}_\theta
            {L'_j}^\mathrm{cand}_\theta  \rho \bigg ]
            &= \sum_j  \frac{\bra{\psi}\partial_\theta {{\pi_j}_\theta} \ket{\psi}}{\bra{\psi} {{\pi_j}_\theta} \ket{\psi}} \Tr \bigg [  {{L'_j}^\mathrm{cand}_\theta}^\dagger \,  {\pi_j}_\theta \,
            \rho \, \bigg ]
            = \sum_j  \frac{\bra{\psi}\partial_\theta {{\pi_j}_\theta} \ket{\psi}}{\bra{\psi} {{\pi_j}_\theta} \ket{\psi}} \Tr \bigg [    {\pi_j}_\theta \,
            \rho \, {{L'_j}^\mathrm{cand}_\theta}^\dagger \, \bigg ] \\
            &= \sum_j \left ( \frac{\bra{\psi}\partial_\theta {{\pi_j}_\theta} \ket{\psi}}{\bra{\psi} {{\pi_j}_\theta} \ket{\psi}} \right)^2 \Tr[ \,  {\pi_j}_\theta \, \rho \,  ]
            = \sum_j \frac{\left ( \bra{\psi}\partial_\theta {{\pi_j}_\theta} \ket{\psi}\right)^2}{\bra{\psi} {{\pi_j}_\theta} \ket{\psi}} = \CFI_\theta[\rho, \Pi_\theta] \, ,
        \end{aligned}
    \end{equation}
    which is simply restating that
    the Cauchy-Schwarz inequality
    is tight for a common eigenstate,
    thereby proving the lemma.
\end{proof}

\begin{theorem}[Extended DQFI equals Maximum CFI]
\label{th:extdqfitight}
For single-parameter detector estimation,
the extended DQFI~$\QFIext$ is tight,
meaning that~$\QFIext = \CFImax$.
\end{theorem}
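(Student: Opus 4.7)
The plan is to establish equality by sandwiching. The inequality $\CFImax \leq \QFIext$ is already given by Lemma~\ref{lemma:extdqfivalid}, so the entire task reduces to proving the reverse inequality $\QFIext \leq \CFImax$. I would do this by exhibiting a specific feasible nSLD family $\{{L'_j}_\theta\}$ for which the associated spectral radius does not exceed $\CFImax$, and verifying that all three attainability criteria listed below Lemma~\ref{lemma:extdqfivalid} hold for this choice.

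First I would invoke Lemma~\ref{lemma:singleoptimalprobe} to fix a pure optimal probe $\rho^\mathrm{opt} = \ketbra{\psi^\mathrm{opt}}{\psi^\mathrm{opt}}$ attaining $\CFI_\theta[\rho^\mathrm{opt}, \Pi_\theta] = \CFImax$, and then apply Lemma~\ref{lemma:CFIDetExtTransform} to this probe to obtain the candidate family $\{{L'_j}^\mathrm{cand}_\theta\}$. By construction this family is feasible (satisfying Eq.~\eqref{eq:nonHSLDops}) and immediately delivers attainability criterion~(1), that $\Tr({\pi_j}_\theta \rho^\mathrm{opt} {L'_j}^{\mathrm{cand}\,\dagger}_\theta)$ is real via the real eigenvalues $\lambda_j$, together with criterion~(2), the Cauchy-Schwarz saturation at $\ket{\psi^\mathrm{opt}}$, and the identity $\Tr[M^\mathrm{cand} \rho^\mathrm{opt}] = \CFImax$, where $M^\mathrm{cand} \coloneqq \sum_j {L'_j}^{\mathrm{cand}\,\dagger}_\theta {\pi_j}_\theta {L'_j}^\mathrm{cand}_\theta$.

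The next step is to verify attainability criterion~(3), that $\ket{\psi^\mathrm{opt}}$ is an eigenvector of $M^\mathrm{cand}$ with eigenvalue $\CFImax$. A direct gradient computation exploits first-order optimality of $\ket{\psi^\mathrm{opt}}$ as a maximizer of $\ket{\phi}\mapsto \CFI_\theta[\ketbra{\phi}{\phi}, \Pi_\theta]$ over the unit sphere. Differentiating with respect to $\bra{\phi}$ at $\ket{\phi}=\ket{\psi^\mathrm{opt}}$ and imposing stationarity yields $\sum_j (2\lambda_j \partial_\theta {\pi_j}_\theta - \lambda_j^2 {\pi_j}_\theta) \ket{\psi^\mathrm{opt}} = \CFImax \ket{\psi^\mathrm{opt}}$. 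Substituting the explicit form of $S_j^\mathrm{cand}$ from Eq.~\eqref{eq:someSjcand} into $M^\mathrm{cand}\ket{\psi^\mathrm{opt}}$ and using ${L'_j}^\mathrm{cand}_\theta\ket{\psi^\mathrm{opt}} = \lambda_j \ket{\psi^\mathrm{opt}}$ shows that the left-hand side is precisely $M^\mathrm{cand}\ket{\psi^\mathrm{opt}}$, so $\ket{\psi^\mathrm{opt}}$ is indeed an eigenvector with the desired eigenvalue.

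The hard part will be upgrading "an eigenvalue" to the \emph{largest} eigenvalue of $M^\mathrm{cand}$, since the first-order calculation alone leaves open the possibility of a strictly larger eigenvalue in a direction orthogonal to $\ket{\psi^\mathrm{opt}}$. The cleanest route is SDP strong duality: the bilinear functional $\Tr[M(\{{L'_j}_\theta\})\rho]$ is convex (quadratic PSD) in the nSLD variables and linear in $\rho$, and Slater's condition holds whenever the ${\pi_j}_\theta$ have the same support, so $\min_{\{{L'_j}_\theta\}}\max_\rho \Tr[M\rho] = \max_\rho \min_{\{{L'_j}_\theta\}} \Tr[M\rho]$; combining the inner equality $\min_{\{{L'_j}_\theta\}} \Tr[M\rho] = \CFI_\theta[\rho, \Pi_\theta]$ from Lemma~\ref{lemma:CFIDetExtTransform} (on pure $\rho$) with the concentration of the outer maximum on pure states then gives $\QFIext \leq \CFImax$. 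Alternatively, one may refine the candidate by adding correction terms $K_j$ with $K_j\ket{\psi^\mathrm{opt}}=0$ and $\pi_j K_j + K_j^\dagger \pi_j = 0$, which preserve criteria~(1)--(3) at $\ket{\psi^\mathrm{opt}}$ while allowing one to trim the spectrum of $M$ in orthogonal directions. Either route, combined with Lemma~\ref{lemma:extdqfivalid}, closes the proof and incidentally establishes that the optimal probe is separable, implying Corollary~\ref{corr:sepopt}.
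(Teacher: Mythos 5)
Your proposal is correct and, in its final form, coincides with the paper's own argument: the reverse inequality $\QFIext \leq \CFImax$ is obtained by the same minimax exchange $\min_{\{{L'_j}_\theta\}}\max_\rho = \max_\rho\min_{\{{L'_j}_\theta\}}$ (justified by convexity in the nSLD variables and linearity in $\rho$) followed by evaluating the inner minimisation through the candidate of Lemma~\ref{lemma:CFIDetExtTransform}, exactly as in the paper's proof. The intermediate detour through first-order stationarity of the optimal probe is a nice consistency check on attainability criterion~(3), but—as you yourself note—it cannot certify that $\CFImax$ is the \emph{largest} eigenvalue of $M^\mathrm{cand}$, and both you and the paper ultimately bypass it with the duality argument.
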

\begin{proof}
    From Lemma~\ref{lemma:extdqfivalid},
    we have~${\CFI_\theta}_\mathrm{max} \leq
    \mathcal{J}_{\mathrm{Ext},\theta}$.
    On the other hand,
    the detector extension bound
    can be written as
    \begin{equation}
    \begin{aligned}
        \QFIext &= \min_{ {L'_j}_\theta } \max \mathrm{eig}  \big[ \sum_j {{L'_j}_\theta}^\dagger {\pi_j}_\theta
            {L'_j}_\theta  \big ]
        = \min_{ {L'_j}_\theta } \max_\rho \,  \Tr \big[ \sum_j {{L'_j}_\theta}^\dagger {\pi_j}_\theta
            {L'_j}_\theta \, \rho \,  \big ] \\
            &= \max_\rho \min_{ {L'_j}_\theta} \Tr \big[ \sum_j {{L'_j}_\theta}^\dagger {\pi_j}_\theta
            {L'_j}_\theta \, \rho \,  \big ] \, ,
    \end{aligned}
    \end{equation}
    where in the first line
    we have used the operator norm definition,
    and in the second line we have flipped
    the order of the two optimisations.
    The operator norm definition implies
    that the optimal~$\rho$ can always
    be considered pure.
    Next, if we consider
    the minimisation~$\min_{ {L'_j}_\theta} \Tr [ \sum_j {{L'_j}_\theta}^\dagger {\pi_j}_\theta
    {L'_j}_\theta \, \rho \, ]$ for any fixed pure state~$\rho$,
    we know from Lemma~\ref{lemma:CFIDetExtTransform}
    that~${L'_j}^\mathrm{cand}_\theta$ defined using
    this state~$\rho$ provides an upper bound
    to this minimisation.
    Further, Lemma~\ref{lemma:CFIDetExtTransform}
    shows this upper bound to simply be
    the CFI~$\CFI_\theta[\rho, \Pi_\theta]$.
    All together, we have
    \begin{equation*}
        \CFImax \leq \QFIext = \max_\rho \min_{ {L'_j}_\theta} \Tr \big[ \sum_j {{L'_j}_\theta}^\dagger {\pi_j}_\theta
            {L'_j}_\theta \, \rho \,  \big ] \leq \max_\rho \Tr \big[ \sum_j {{L'_j}^\mathrm{cand}_\theta}^\dagger  \, {\pi_j}_\theta  \,
            {L'_j}^\mathrm{cand}_\theta \, \rho \,  \big ] = \max_\rho \CFI_\theta [\rho, \Pi_\theta] = \CFImax \, ,
    \end{equation*}
    which proves that~$\CFImax = \QFIext$.
\end{proof}

Theorem~\ref{th:extdqfitight}
implies that for single-parameter
detector estimation,
the extended bound based on Fujiwara's bound~\cite{Fujiwara2008}
can be attained by a single separable pure state probe,
without the need for entangled ancilla states
or conditional measurements.

\begin{corollary}
\label{corr:sepopt}
The extended DQFI~$\QFIext$ is always attainable using
a single separable pure probe state
in the original Hilbert space~$\mathcal{H}_d$.
\end{corollary}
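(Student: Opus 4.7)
The plan is to assemble the corollary from three already-proven ingredients rather than to derive anything substantively new. First I would observe that the quantity $\CFImax$ appearing in Theorem~\ref{th:extdqfitight} is, by its defining Eq.~\eqref{eq:maxCFIdef}, the maximum of $\CFI_\theta[\rho, \Pi_\theta]$ over $\rho \in \mathcal{D}(\mathcal{H}_d)$, i.e., over density operators on the \emph{unextended} system with no ancillary tensor factor. Since Theorem~\ref{th:extdqfitight} establishes $\QFIext = \CFImax$, separability of the attaining probe is encoded in the optimisation domain itself; no further argument is required to rule out ancilla-entangled probes or conditional ancilla measurements of the kind indicated in Fig.~\ref{fig:extendedDetector}.

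Next I would settle the single-state (as opposed to ensemble) claim by appealing to Lemma~\ref{lemma:singleoptimalprobe}, which already guarantees that in the single-parameter setting one probe state suffices to attain $\CFImax$. To certify purity of that state, I would invoke either of two routes. The cleaner one is to reuse the explicit construction from the proof of Theorem~\ref{th:extdqfitight}: its maximiser was identified, via the operator-norm rewriting $\max\mathrm{eig}[\sum_j {L'_j}_\theta^\dagger {\pi_j}_\theta {L'_j}_\theta] = \max_\rho \Tr[\sum_j {L'_j}_\theta^\dagger {\pi_j}_\theta {L'_j}_\theta \, \rho]$, as a top-eigenvalue eigenvector of $\sum_j {L'_j}_\theta^\dagger {\pi_j}_\theta {L'_j}_\theta$ (for the optimal nSLD operators), which is automatically a rank-one projector. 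Alternatively one can recycle the convexity argument used in the proof of Lemma~\ref{lemma:singleoptimalprobe}: spectrally decomposing any mixed maximiser $\rho^\mathrm{opt}$ and applying Eq.~\eqref{eq:additiveCFI} forces at least one pure-state component to match the optimal value.

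The main, and indeed only, obstacle is bookkeeping rather than substance: one must carefully verify that the $\CFImax$ appearing in the equality $\QFIext = \CFImax$ of Theorem~\ref{th:extdqfitight} is genuinely taken over ancilla-free states in $\mathcal{D}(\mathcal{H}_d)$, and that the specific maximiser produced in its proof lives in $\mathcal{H}_d$ rather than in some implicit extended space. Once this is flagged, the corollary becomes a two-line consequence: Theorem~\ref{th:extdqfitight} supplies some $\rho^\mathrm{opt}\in\mathcal{D}(\mathcal{H}_d)$ attaining $\QFIext$, Lemma~\ref{lemma:singleoptimalprobe} ensures a single such state is enough, and either the eigenvector characterisation or the convexity argument upgrades it to a pure state, yielding precisely the separable pure-state probe claimed.
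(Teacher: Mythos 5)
Your proposal is correct and follows essentially the same route as the paper, which treats the corollary as an immediate consequence of Theorem~\ref{th:extdqfitight}: since $\QFIext = \CFImax$ and $\CFImax$ is by definition a maximisation over $\mathcal{D}(\mathcal{H}_d)$ whose optimum is achieved by a pure state (via the operator-norm/top-eigenvector characterisation, equivalently the convexity of the CFI), the attaining probe is automatically a single separable pure state. Your careful flagging of the optimisation domain and the two interchangeable purity arguments match the paper's implicit reasoning; no gap.
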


This result aligns with expectations based on
previous literature addressing
single-parameter quantum-classical
channels~\cite{Sarovar2006,Fujiwara2008,Matsumoto2010,RDD2012}.
Although the multi-parameter setting could
lead to different conclusions regarding the utility
of entangled ancillae in detector estimation,
this remains the scope of future work.
Here, we have proven that in the single-parameter setting,
entanglement in the form of quantum correlations between probe
and ancillary systems does not offer any metrological advantage
in detector estimation.

\section{Application to noisy qubit measurements}
\label{sec:SuppNoteEgsApplication}

In this section, we provide applications
of the DQFI framework
to qubit measurements
relevant for quantum computing~\cite{Lundeen2008,LWH+21}
and quantum communication~\cite{Watanabe2008}.
A problem currently relevant
for qubit platforms is the characterisation of,
and mitigation against,
SPAM errors that combine the effects of noisy
preparation and noisy measurement~\cite{Chen2019,LWH+21}.
A noisy measurement may be modelled
as a noisy quantum channel~$\mathcal{N}_p$
followed by a noiseless or ideal projective measurement~$\Pi_\mathrm{ideal}$,
as in Fig.~\ref{fig:noisyqubitdetector}(a).
However, the problem of characterising
the noisy channel~$\mathcal{N}_p$,
by estimating unknown parameter~$p$,
is distinct from the problem of
channel parameter estimation:
the final measurement here is fixed to be~$\Pi_\mathrm{ideal}$
and cannot be tuned or optimised.
As a result, the state estimation toolbox,
which assumes that measurements can be optimised,
is not applicable here
and our DQFI fills this gap.
The most common noise processes
affecting qubit platforms are modelled
using bit-flip, dephasing,
depolarising, and amplitude damping
quantum channels~\cite{NC10}.
The bit-flip and dephasing channels
were discussed in
main-text Examples~\ref{eg:bitflipdet} \&~\ref{eg:example2dephasing}.
Here we
establish a general framework
for treating noisy projective measurements
for qubits
and then apply to specific
channels one-by-one.

\paragraph{Generic noisy-qubit projective measurement:}
Consider a noise process
modelled by a quantum channel~$\mathcal{N}$
with a Kraus representation~$\{\mathcal{K}_j\}_{j\in[K]}$,~$K$
denoting the number of Kraus operators.
A state~$\rho_\mathrm{in}$
input to this channel transforms into
the state~$\rho_\mathrm{out} \coloneqq \mathcal{N}(\rho_\mathrm{in}) = \sum_{j\in[K]} \mathcal{K}_j \, \rho_\mathrm{in} \, \mathcal{K}_j^\dagger$.
A PVM on this state
along the direction~$(\theta, \phi)$
then corresponds to computing overlaps
of~$\rho_\mathrm{out}$
with the projectors,
\begin{equation}
    \ket{\theta,\phi}_+ = \begin{bmatrix} \cos(\theta/2) \\ e^{i \phi} \sin(\theta/2) \end{bmatrix} \,  \quad \& \quad
    \ket{\theta,\phi}_- = \begin{bmatrix} \sin(\theta/2) \\ -e^{i \phi} \cos(\theta/2) \end{bmatrix} \, ,
\end{equation}
where~$\theta\in[0, \pi]$ and~$\phi\in[0, 2 \pi]$
are polar and azimuthal angles
of the projector on the Bloch sphere, respectively.
Denoting the POVM for
these projectors by~$\pi_{1,2} = \ketbra{\theta,\phi}_{\pm}$
and~$\Pi_\mathrm{ideal} \equiv \{\pi_1, \pi_2\}$,
the measurement outcome probabilities
can be alternatively obtained as
\begin{equation}
    \begin{split}
        p_1 &= \bra{\theta, \phi}_+ \mathcal{N}(\rho_\mathrm{in}) \ket{\theta,\phi}_+ =   \Tr[ \mathcal{N}(\rho_\mathrm{in}) \ketbra{\theta,\phi}_+]
        = \Tr \Bigr [ \rho_\mathrm{in}  \sum_{j\in[K]} \mathcal{K}^\dagger_j \pi_1\mathcal{K}_j \Bigr ]  \, , \\
        p_2 &= \bra{\theta, \phi}_- \mathcal{N}(\rho_\mathrm{in}) \ket{\theta,\phi}_- =   \Tr[ \mathcal{N}(\rho_\mathrm{in}) \ketbra{\theta,\phi}_-]
        = \Tr \Bigr [\rho_\mathrm{in}  \sum_{j\in[K]} \mathcal{K}^\dagger_j \pi_2 \mathcal{K}_j \Bigr ]  \, .
    \end{split}
\end{equation}
Clearly, we can construct
an effective POVM~$\Pi_\mathrm{eff} \equiv \{\pi^\mathrm{eff}_1, \pi^\mathrm{eff}_2\}$ by
reverse evolving
the PVM~$\Pi_\mathrm{ideal}$
through the
channel~$\mathcal{N}_p$,
\begin{equation}
\label{eq:effpovm}
    \pi^\mathrm{eff}_1 \coloneqq \sum_{j\in[K]} \mathcal{K}^\dagger_j \pi_{1} \mathcal{K}_j  \quad \& \quad \pi^\mathrm{eff}_2 \coloneqq \sum_{j\in[K]} \mathcal{K}^\dagger_j \pi_{2} \mathcal{K}_j \, ,
\end{equation}
allowing us to recover measurement
probabilities~$p_1$ and~$p_2$
directly from the input state~$\rho_\mathrm{in}$ via
\begin{equation}
    p_1 = \Tr \left [ \rho_\mathrm{in}  \pi^\mathrm{eff}_1 \right ] \quad \& \quad p_2 = \Tr \left [ \rho_\mathrm{in} \pi^\mathrm{eff}_2 \right ] \, .
\end{equation}
The DQFI framework
applied to POVM~$\Pi_\mathrm{eff}$
now quantifies the efficiency
in estimating the parameter~$p$ of
the noisy channel~$\mathcal{N}_p$
affecting the ideal measurement~$\Pi_\mathrm{ideal}$.

\paragraph{Attainability:}
Attainability of the DQFI
here is determined by whether
criteria (1)--(3) listed in Methods
hold for~$\Pi^\mathrm{eff}$.
For the channels considered
in the following subsection,
we find the DQFI to always be tight
except for the amplitude-damping channel,
where numerics show the discrepancy to be small.
Accordingly, for all other considered channels,
the DQFI-optimal states
attain the maximum CFI~${\CFI_p}_\mathrm{max}$
and constitute the most informative
probing strategies
when characterising
noisy qubit measurements.
Naturally, symmetries of the channel action
translate to invariance of the
DQFI under the corresponding transformation.
For example, the dephasing action is symmetric
under rotation about the~$Z$-axis,
so its DQFI depends only on
the polar angle~$\theta$ of~$\Pi_\mathrm{ideal}$
and is unaffected by its azimuthal angle~$\phi$.
The depolarising action is
symmetric in both azimuthal
and polar directions so
its DQFI depends neither
on~$\theta$ nor on~$\phi$.
Below, we present the optimal
probing strategies for characterising
dephased, depolarised and amplitude-damped
qubit detectors.

\begin{figure}[hbtp]
    \centering
    \includegraphics[width=0.9\columnwidth]{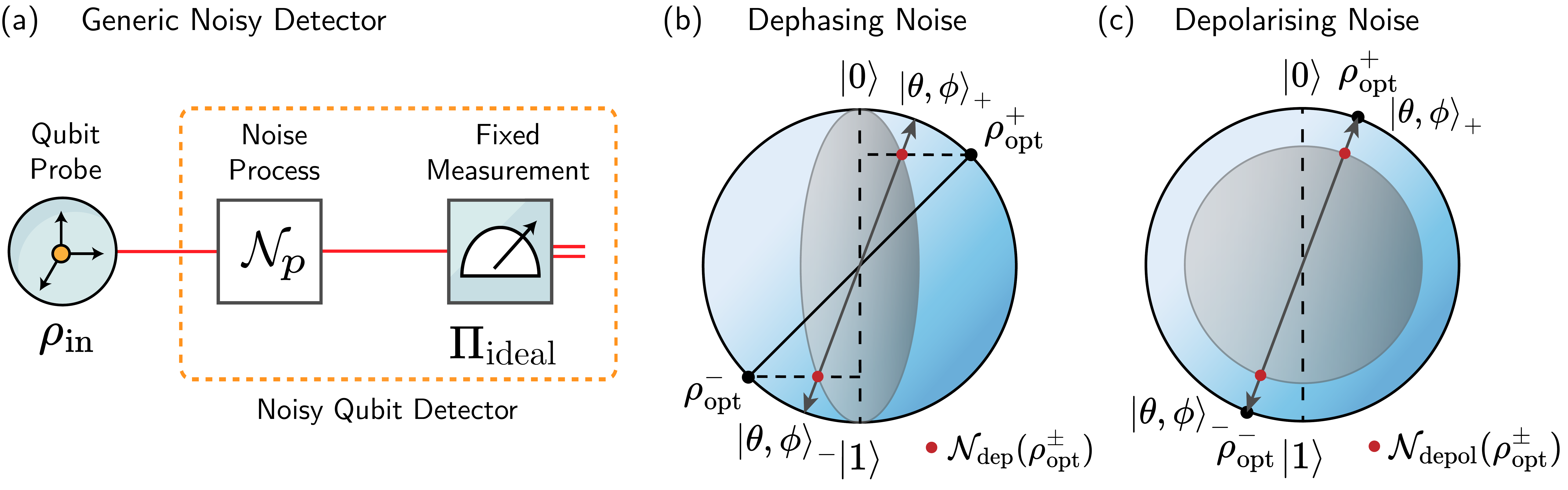}
    \caption{Estimating noise parameters
    of imperfect qubit detectors.
    (a) The noisy qubit detector
    is modelled as an ideal projection measurement
    along~$\ket{\theta,\phi}_\pm$ affected
    by noise process~$\mathcal{N}_p$
    with unknown parameter~$p$.
    (b), (c) The optimal probe states for characterising
    dephasing (b) and depolarisation (c)
    noise strengths are shown. The blue disk depicts a
    cross-section of the Bloch sphere at azimuthal angle~$\phi$, containing projectors~$\ket{\theta, \phi}_\pm$
    (arrowheads)
    and the optimal probe states~$\rho_\mathrm{opt}^{\pm}$
    (black dots).
    In (b),
    the dephasing action~$\mathcal{N}_\mathrm{dep}$
    contracts the blue disk horizontally
    (towards the vertical dashed line)
    by a factor of~$(1-2p)$
    to form the grey region.
    The red dots represent
    the dephased optimal states~$\mathcal{N}_\mathrm{dep}(\rho_\mathrm{opt}^{\pm})$;
    the optimal pure states are those for which
    the dephased state aligns with
    measurement~$\ket{\theta, \phi}_\pm$.
    In (c), the depolarising action shrinks
    the blue disk radially inward
    towards the maximally-mixed state
    at the centre to form the grey disk.
    The optimal probe states are simply
    the projectors~$\ket{\theta,\phi}_\pm$
    themselves. The red dots representing the
    corresponding depolarised states,~$\mathcal{N}_\mathrm{depol}(\rho_\mathrm{opt}^\pm)$, align
    with the measurement direction.}
    \label{fig:noisyqubitdetector}
\end{figure}

\subsection{Detector with Inherent Dephasing}

The dephasing channel~$\mathcal{N}_{\mathrm{dep}}$ erases
phase information from states,
shrinking the Bloch ball horizontally
inwards towards the~$Z$-axis
(see Fig.~\ref{fig:noisyqubitdetector}(b)).
A set of Kraus operators for~$\mathcal{N}_{\mathrm{dep}}$ are
\begin{equation}
    \mathcal{K}_1 = \sqrt{1-p} \, \mathds{1}_2 \, , \quad \mathcal{K}_2 = \sqrt{p} \,  Z \, ,
\end{equation}
where~$p$ represents the dephasing probability.
Considering dephasing to be
the leading noise mechanism
inside a detector,
we now consider the estimation
of dephasing probability~$p$
from the measurement outcomes
and find the corresponding optimal probe states.
Given input state~$\rho_\mathrm{in}$,
the channel-output state~$\mathcal{N}_{\mathrm{dep}}(\rho_\mathrm{in})$
is projectively measured
along~$\ket{\theta, \phi}_\pm$.
The effective POVM~$\Pi_\mathrm{eff}$ for this process,
computed as per
Eq.~\eqref{eq:effpovm},
yields the two effective-detector
SLD operators,~$L^\mathrm{eff}_1$ and~$L^\mathrm{eff}_2$
for~$\pi^\mathrm{eff}_1$ and~$\pi^\mathrm{eff}_2$,
respectively, as
\begin{equation}
\label{eq:SLDOpsDepDetQubit}
    L^\mathrm{eff}_1 = \frac{1}{2 p(1-p)} \begin{bmatrix}
    (1-2 p)(1-\cos\theta) & -e^{-i\phi} \sin\theta \\
    -e^{i\phi} \sin\theta & (1-2p)(1+\cos\theta)
    \end{bmatrix}
    \, \,  \& \, \,
    L^\mathrm{eff}_2 = \frac{1}{2 p(1-p)} \begin{bmatrix}
   (1-2p)(1+\cos\theta) & e^{-i\phi} \sin\theta \\
    e^{i\phi} \sin\theta & (1-2p)(1-\cos\theta)
    \end{bmatrix} \, .
\end{equation}
The matrix~$Q^\mathrm{eff}=\sum_{j\in[2]} L^\mathrm{eff}_j \pi^\mathrm{eff}_j L^\mathrm{eff}_j$
is then
\begin{equation}
    Q^\mathrm{eff} = \frac{1}{p(1-p)} \begin{bmatrix}
        \sin^2\theta & 0 \\
        0 & \sin^2\theta
    \end{bmatrix} \, ,
\end{equation}
from which, we
obtain the DQFIs
\begin{equation}
\label{eq:dephasingdetqfi}
        \mathcal{J}_{\Vert, p} = \frac{\sin^2\theta}{p(1-p)} \quad \mathrm{and} \quad
        \mathcal{J}_{\Tr, p} =  \frac{2 \sin^2\theta}{p(1-p)} \, .
\end{equation}
It is straightforward to check
that the SLD operators~$L^\mathrm{eff}_1$ and~$L^\mathrm{eff}_2$
commute, i.e.,~$[L^\mathrm{eff}_1\, ,\,  L^\mathrm{eff}_2]=0$,
so the spectral DQFI~$\QFIsp$
is attainable in this case
and stipulates the maximum CFI of measurement outcomes
over the input probe space.

However,
as the matrix~$Q^\mathrm{eff}$ has degenerate
eigenvalues,
its eigenvectors~$\ket{0}$ and~$\ket{1}$
do not correspond to the optimal state.
In fact, the optimal probe states
depend on both~$\theta$ and~$\phi$.
As discussed in the attainability
paragraph,
the optimal probe in this case
is a superposition of~$\ket{0}$ and~$\ket{1}$
that is a
common eigenstate of~$L^\mathrm{eff}_1$ and~$L^\mathrm{eff}_2$.
As~$L^\mathrm{eff}_1$ and~$L^\mathrm{eff}_2$ commute,
they share both their
eigenstates,
namely,
\begin{equation}
\label{eq:eigsL}
    \ket{\lambda_{\pm}} = \begin{bmatrix}
        (1-2p) \cos\theta \pm \sqrt{\sin^2\theta + (1-2p)^2\cos^2\theta} \\
         e^{i \phi} \sin\theta
    \end{bmatrix}
\end{equation}
or, equivalently,
\begin{equation}
\label{eq:eigsL2}
    \ket{\lambda_{+}} = \begin{bmatrix}
         \cot\left ( \arctan \left ( \frac{\tan\theta}{1-2p} \right ) /2 \right )  \\
         e^{i \phi}
    \end{bmatrix} \, ,
    \ket{\lambda_{-}} = \begin{bmatrix}
        \tan\left ( \arctan \left ( \frac{\tan\theta}{1-2p} \right ) /2 \right )  \\
         e^{i \phi}
    \end{bmatrix} \, ,
\end{equation}
where we have ignored normalisation.
The optimal probe states
are then
\begin{equation}
    \rho_\mathrm{opt}^{\pm} = \frac{\ketbra{\lambda_{\pm}}}{\Tr(\ketbra{\lambda_{\pm}})} \, ,
\end{equation}
which are pure states
lying on the circle connecting~$\ket{0},\ket{1}$
and~$\ket{\theta,\phi}_\pm$
on the surface of the Bloch sphere
(see Fig.~\ref{fig:noisyqubitdetector}(b)).
Clearly, from Eqs.~\eqref{eq:eigsL}~\&~\eqref{eq:eigsL2},
the optimal probe states
have the same phase~$\phi$
as the azimuthal angle of the PVM.
The projectors and optimal probes
thus lie on a circular cross-section
of the Bloch sphere boundary at azimuthal
angle~$\phi$, as shown in Fig.~\ref{fig:noisyqubitdetector}(b).
Figure~\ref{fig:noisyqubitdetector}(b)
provides an intuitive and geometric explanation
of the optimal probe---it is the pure state
that when dephased, aligns perfectly
with the measurement direction.
The polar angles~$\theta_{\pm}$ of
these optimal probes
are given by~$\tan\theta_{\pm} = \pm \tan\theta/(1-2p)$,
which can be deduced geometrically
from Fig.~\ref{fig:noisyqubitdetector}(b).

\subsection{Detector with Inherent Depolarisation}

The depolarisation channel~$\mathcal{N}_{\mathrm{depol}}$ uniformly erases
phase and amplitude information from states,
contracting the Bloch ball radially towards its centre,
the maximally-mixed state.
A set of Kraus operators for~$\mathcal{N}_{\mathrm{depol}}$ are
\begin{equation}
    \mathcal{K}_1 = \sqrt{1-\frac{3p}{4}} \mathds{1}_2 \, , \quad \mathcal{K}_2 = \sqrt{\frac{p}{4}} X \, , \quad \mathcal{K}_3 = \sqrt{\frac{p}{4}} Y \, ,
    \quad \mathcal{K}_4 = \sqrt{\frac{p}{4}} Z \, ,
\end{equation}
where~$p$ represents the depolarising probability
and~$X,Y,Z$ denote the three Pauli matrices.
Given input state~$\rho_\mathrm{in}$,
the channel-output state~$\mathcal{N}_{\mathrm{depol}}(\rho_\mathrm{in})$
is projectively measured
along~$\ket{\theta, \phi}_\pm$.
The effective POVM~$\Pi_\mathrm{eff}$ for this process,
computed as per
Eq.~\eqref{eq:effpovm},
yields the two effective-detector
SLD operators,~$L^\mathrm{eff}_1$ and~$L^\mathrm{eff}_2$
for~$\pi^\mathrm{eff}_1$ and~$\pi^\mathrm{eff}_2$,
respectively, as
\begin{equation}
    L^\mathrm{eff}_1 = \frac{1}{p(2-p)} \begin{bmatrix}
    1-p-\cos\theta & e^{-i\phi} \sin\theta \\
    e^{i\phi} \sin\theta & 1-p+\cos\theta
    \end{bmatrix}
    \, \,  \& \, \,
    L^\mathrm{eff}_2 = \frac{1}{p(2-p)} \begin{bmatrix}
   1-p+\cos\theta & e^{-i\phi} \sin\theta \\
    e^{i\phi} \sin\theta & 1-p-\cos\theta
    \end{bmatrix} \, .
\end{equation}
The matrix~$Q^\mathrm{eff}=\sum_{j\in[2]} L^\mathrm{eff}_j \pi^\mathrm{eff}_j L^\mathrm{eff}_j$
is then
\begin{equation}
    Q^\mathrm{eff} = \frac{1}{p(2-p)} \begin{bmatrix}
        1 & 0 \\
        0 & 1
    \end{bmatrix} \, ,
\end{equation}
from which, we
obtain the DQFIs
\begin{equation}
        \mathcal{J}_{\Vert, p} = \frac{1}{p(2-p)} \quad \mathrm{and} \quad
        \mathcal{J}_{\Tr, p} =  \frac{2}{p(2-p)} \, ,
\end{equation}
which do not depend on~$\theta$
or~$\phi$ at all.
Intuitively, this reflects that a PVM
along any direction is equally
informative about depolarised states.

Similar to the previous example
with dephasing noise,
matrix~$Q^\mathrm{eff}$ has degenerate
eigenvalues, so its eigenvectors
do not constitute optimal probe states.
Instead, the optimal probe states
are the common eigenstates
of~$L^\mathrm{eff}_1$ and~$L^\mathrm{eff}_2$,
which are precisely~$\ket{\theta,\phi}_+$
and~$\ket{\theta,\phi}_-$.
In other words,
the optimal probe states are
simply the pure states that
point along the PVM direction,
\begin{equation}
    \rho^{\pm}_{\mathrm{opt}} = \ketbra{\theta, \phi}_\pm \, ,
\end{equation}
as visualised in Fig.~\ref{fig:noisyqubitdetector}(c).

\subsection{Detector with Inherent Amplitude Damping}

Lastly,
we consider a qubit detector
with inherent amplitude damping (AD)
effects
and estimate its damping
probability~$p$.
The AD channel~$\mathcal{N}_{\mathrm{AD}}$ corresponds to
a partial thermalisation
of the input state,
treating~$\ket{0}$ as the ground state
and
transforming all other states
towards this ground state,
with parameter~$p$ quantifying
the extent of thermalisation.
A set of Kraus operators for~$\mathcal{N}_{\mathrm{AD}}$ are
\begin{equation}
    \mathcal{K}_1 = \ketbra{0} + \sqrt{1-p} \ketbra{1} \, , \quad \mathcal{K}_2 = \sqrt{p} \ket{0}\bra{1} \, ,
\end{equation}
and
given input state~$\rho_\mathrm{in}$,
the channel-output state~$\mathcal{N}_{\mathrm{AD}}(\rho_\mathrm{in})$
is projectively measured
along~$\ket{\theta, \phi}_{1,2}$.
The effective POVM~$\Pi_\mathrm{eff}$ for this process,
computed as per
Eq.~\eqref{eq:effpovm},
yields the two effective-detector
SLD operators,~$L^\mathrm{eff}_1$ and~$L^\mathrm{eff}_2$
for~$\pi^\mathrm{eff}_1$ and~$\pi^\mathrm{eff}_2$,
respectively, as
\begin{equation}
    L^\mathrm{eff}_1 = \frac{1}{2p(1+p \cos\theta)} \begin{bmatrix}
    1-\cos\theta & - \frac{e^{-i\phi} \sin\theta}{\sqrt{1-p}} \\
    - \frac{ e^{i\phi} \sin\theta}{\sqrt{1-p}} & 1+(1+2p)\cos\theta
    \end{bmatrix}
    \, \,  \& \, \,
    L^\mathrm{eff}_2 = \frac{1}{2p(1-p \cos\theta)} \begin{bmatrix}
    1+\cos\theta &  \frac{e^{-i\phi} \sin\theta}{\sqrt{1-p}} \\
     \frac{ e^{i\phi} \sin\theta}{\sqrt{1-p}} & 1-(1+2p)\cos\theta
    \end{bmatrix} \, .
\end{equation}
From the matrix~$Q^\mathrm{eff}=\sum_{j\in[2]} L^\mathrm{eff}_j \pi^\mathrm{eff}_j L^\mathrm{eff}_j$,
which we omit for brevity,
the two DQFIs
can now be calculated as
\begin{equation}
\begin{split}
\label{eq:ADchanndetQFI}
        \mathcal{J}_{\Vert, p} &= \frac{1}{4 p (1-p)}
        + \frac{(1+p)\cos^2\theta}{4 p (1-p^2\cos^2\theta)}
        + \frac{\cos\theta \sqrt{1-p \cos^2\theta} ((1-p^2)(1+p) \cos^2\theta + \sin^2\theta)}{2 p \sqrt{1-p} \left ( (1-p^2) \cos^2\theta + \sin^2\theta \right )^2 }\\
        \mathcal{J}_{\Tr, p} &=  \frac{1+(1-2p^2) \cos^2\theta}{2p(1-p)(1-p^2 \cos^2\theta)}  \, ,
\end{split}
\end{equation}
which depend
only on~$\theta$ and not on~$\phi$.
The matrix~$Q$ has non-degenerate or
distinct eigenvalues, so
the DQFI-optimal probe state is simply
the eigenvector of~$Q$ corresponding
to the eigenvalue given by~$\mathcal{J}_{\Vert, p}$.

However,
the two SLD operators~$L^\mathrm{eff}_1$
and~$L^\mathrm{eff}_2$ do not commute
in this case
and do not share any common eigenvectors.
Thus, the DQFI~$\mathcal{J}_{\Vert, p}$
is not attainable, and only upper-bounds
the maximum CFI over the input probe space
for this noise model.

\subsection{Comparison of State
and Detector Approaches to
Optimal Sensing of Channel Parameters}
\label{sec:ChannelQPT}

We now change perspective
from detector parameter estimation
to channel parameter estimation
and reconsider the noisy
qubit channels studied above.
In this scenario,
the measurement~$\Pi_\mathrm{ideal}$
from Fig.~\ref{fig:noisyqubitdetector}
is no longer intrinsically fixed
and, therefore, can be optimised
or tuned to attain higher
sensitivity to channel parameters.
The state estimation approach
to channel parameter estimation
involves maximising
the SQFI~$\mathcal{I}_p[\rho_{\mathrm{out},p}]$
of the channel-output state~$\rho_{\mathrm{out},p}$
over all input probe states~$\rho_\mathrm{in}$~\cite{Meyer2021}.
The detector estimation framework presented
herein
offers an alternative approach:
maximising the DQFI~$\mathcal{J}_{\Vert, p}[\Pi_{\mathrm{eff}, p}]$
of the effective measurement~$\Pi_{\mathrm{eff}}$
over all actual measurements~$\Pi$
(see main-text Fig.~\ref{fig:processestimation}).
We now illustrate the equivalence
of these two approaches
for the three noisy qubit channels considered above.

In the previous subsections, we had fixed
the PVM~$\Pi_\mathrm{ideal}$ to be along
projectors~$\ket{\theta,\phi}_{\pm}$.
For a comparison with the state approach,
we now
consider the probe state~$\rho_\mathrm{in}$
to be a pure state~$\ket{\theta_\mathrm{in}, \phi_\mathrm{in}}$
where~$\theta_\mathrm{in}$ and~$\phi_\mathrm{in}$ represent
the polar and azimuthal angles of this state
on the Bloch sphere.\footnote{For estimating a single parameter
of the channel, pure states are optimal
due to the convexity of the CFI;
see Eq.~\eqref{eq:additiveCFI} and discussion
in Methods.}

For the dephasing channel,
the output state~$\rho_{\mathrm{out}, p}$
corresponding to input~$\ket{\theta_\mathrm{in}, \phi_\mathrm{in}}$
attains
an SQFI~$\mathcal{I}_p[\rho_{\mathrm{out}, p}] = \frac{\sin^2\theta_\mathrm{in}}{p(1-p)}$.
Clearly, using~$ \mathcal{J}_{\Vert, p}[\Pi_{\mathrm{eff},p}]$
from Eq.~\eqref{eq:dephasingdetqfi},
\begin{equation}
    \max_{\theta_\mathrm{in}, \phi_\mathrm{in}} \mathcal{I}_p [\rho_{\mathrm{out}, p}] = \frac{1}{p(1-p)} = \max_{\theta, \phi} \mathcal{J}_{\Vert, p}[\Pi_{\mathrm{eff},p}].
\end{equation}

For the depolarising channel, due to its symmetry,
both the SQFI~$\mathcal{I}_p[\rho_{\mathrm{out}, p}]$
and the DQFI~$\mathcal{J}_{\Vert, p}[\Pi_{\mathrm{eff},p}]$
equal~$\frac{1}{p(2-p)}$ for any~$(\theta_\mathrm{in}, \phi_\mathrm{in})$ and any~$(\theta,\phi)$, respectively.
The equivalence~$\max_{\theta_\mathrm{in}, \phi_\mathrm{in}} \mathcal{I}_p [\rho_{\mathrm{out}, p}]= \max_{\theta, \phi} \mathcal{J}_{\Vert, p}[\Pi_{\mathrm{eff},p}]$ thus holds trivially.

For the amplitude damping channel,
the SQFI of the channel output is
\begin{equation}
    \mathcal{I}_p [\rho_{\mathrm{out}, p}] = \frac{((1+p)-(1-p)\cos\theta_\mathrm{in})\sin^2(\theta_\mathrm{in}/2)}{2 p(1-p)} \, ,
\end{equation}
which is maximised for~$\theta_\mathrm{in}=\pi$, i.e.,
for probing with the input state~$\ket{1}$.
This optimum probe attains a SQFI~$\mathcal{I}_p$ equal to~$\frac{1}{p(1-p)}$,
whereas from Eq.~\eqref{eq:ADchanndetQFI},
the maximum effective DQFI is
\begin{equation}
    \max_{\theta, \phi} \mathcal{J}_{\Vert, p}[\Pi_{\mathrm{eff},p}] = \frac{1}{p(1-p)}
\end{equation}
with maximum attained at~$\theta = 0$,
i.e., for measuring in the~$Z$-basis.
Hence,
the equality~$\max_{\theta_\mathrm{in}, \phi_\mathrm{in}} \mathcal{I}_p [\rho_{\mathrm{out}, p}]= \max_{\theta, \phi} \mathcal{J}_{\Vert, p}[\Pi_{\mathrm{eff},p}]$
holds.
Moreover,
the optimal measurement
for probing with state~$\ket{1}$
according to the SQFI
is~$\theta=0$,
meaning a~$Z$-basis measurement.
Simultaneously,
the optimal probe state
for a measurement along the~$Z$-basis,
as per the DQFI,
is the probe state with~$\theta_\mathrm{in}=\pi$,
or~$\ket{1}$.
In other words,
the two ways
of finding the optimal probe state
and the optimal measurement
for channel estimation
(left and right halves of main-text
Fig.~\ref{fig:processestimation})
agree with
each other,
in addition to the optimal QFI for both approaches being equal.

\section{Experimental detector estimation of dephasing on IBM platform}
\label{app:IBMExp}

For the experiment on IBM
quantum computers,
a set of~33 pure probe states
are chosen on the Bloch sphere,
each with phase or azimuthal angle~$\phi=0$
and with polar angle~$\theta_{\mathrm{in}}$
ranging from~$0.2$ to~$1$ radian.
For each choice of~$\theta_\mathrm{in}$,
the quantum circuit is executed
with~$10^5$
shots,
with true value of~$p$ assigned~$p^*=0.2$.
The quantum processor returns
probabilities~$p_0$ and~$p_1$
for obtaining~0 and~1 as the measurement outcome.
A comparison of the empirical~$p_0$ values
(purple points against left vertical axis) with
theoretically-expected values (blue curve) for all~33 probe states
is shown in Fig.~\ref{fig:IBMQProbzeroEstimatedDephasing}(a).

\begin{figure}[ht]
    \centering
    \includegraphics[width=\columnwidth]{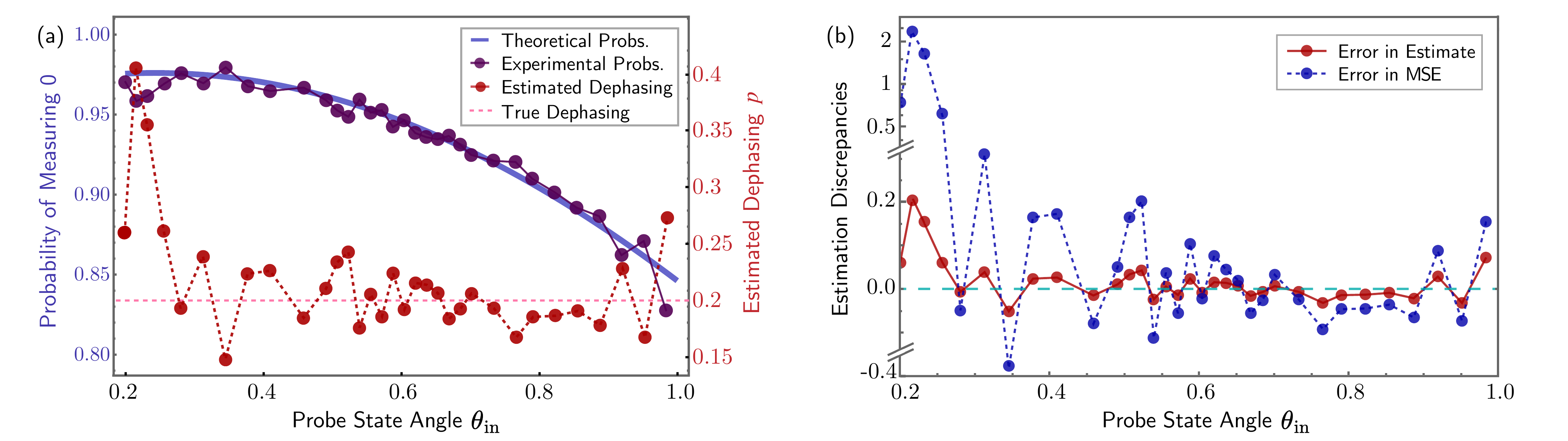}
    \caption{Experimental measurement
    probabilities,
    estimated dephasing
    and MSEs compared to theory
    for range of probe states.
    (a) The probability
    of measuring~0 on executing the circuit
    in main-text Fig.~\ref{fig:DephasedDetIBM}(c) is
    plotted along the left
    vertical axis (blue),
    comparing experimental points (purple dots)
    with theory (blue curve).
    The estimated dephasing
    is plotted along
    the right vertical axis (red)
    comparing
    experimental estimates (red dots)
    to the true value~$0.2$ (horizontal dashed pink line).
    (b) Discrepancy or bias in estimated dephasing (red dots)
    compared to the deviation of empirical MSE from theoretical MSE (blue dots).
    The bias and MSE deviations are correlated
    as expected.}
    \label{fig:IBMQProbzeroEstimatedDephasing}
\end{figure}

\begin{figure}[hbtp]
    \centering
    \includegraphics[width=\columnwidth]{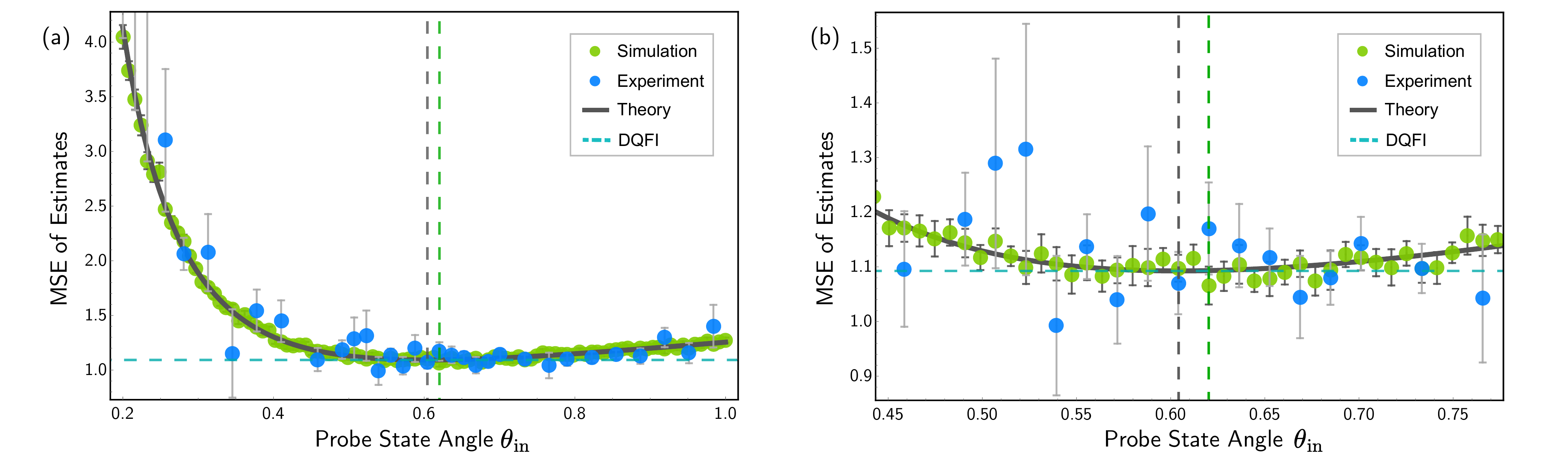}
    \caption{MSE of estimates for optimal and sub-optimal
    probe states. The empirical MSE of estimated dephasing
    (blue dots, light grey error bars)
    and the MSE for noiseless simulation
    (green dots, dark grey error bars)
    are compared to theoretical MSE from the CFI
    (dark grey curve)
    for probe states with polar angle~$\theta_\mathrm{in}$
    ranging from (a)~0.2 to~1 rad
    and (b)~0.45 to 0.78 rad.
    The dashed horizontal line (blue) corresponds to the
    DQFI lower bound~$1/\mathcal{J}_{\Vert, p}$
    whereas the dashed vertical lines (grey \& green)
    correspond to the optimal probe angles
    (theory \& simulation, respectively)
    based on raw data.}
    \label{fig:ExpSuppMatPlots}
\end{figure}

The theoretical values of~$p_0$
plotted in Fig.~\ref{fig:IBMQProbzeroEstimatedDephasing}(a)
are from
\begin{equation}
    p_{0,1} = \frac{1}{2} \left ( 1 \pm \cos(\pi/8) \cos\theta_\mathrm{in} \pm (1-2p) \sin(\pi/8) \sin\theta_\mathrm{in} \right ) \, ,
\end{equation}
combining which produces
an unbiased estimator~$\hat{p}$ for the dephasing strength~$p$~\cite{MLEQM01},
\begin{equation}
\label{eq:estimatepeq}
    \hat{p} = \frac{1}{2} + \frac{1}{2 \tan (\pi/8 ) \tan (\theta_\mathrm{in})}-\frac{p_0-p_1}{2 \sin (\pi/8 )  \sin \theta_\mathrm{in}} \equiv \alpha \,  p_0 + \beta \, p_1 \, ,
\end{equation}
where, in the last step,
we rewrite the constant terms as~$p_0 + p_1$
to get an expression linear in the probabilities.
Notably, the linear estimator~$\hat{p}$
could, in principle, yield unphysical estimates
if the true value~$p^*$ were close to~0 or~1
and the number of samples was insufficient.
For large numbers of samples or true values
inside the interior, this is
unlikely to occur due to the theory of large deviations~\cite{Varadhan2008}.
Nonetheless, for ensuring physical estimates at low sample numbers,
alternative methods like
constrained maximum-likelihood could be employed for unbiased
estimation, or region estimation could be used
instead of point estimation,
or Bhattacharyya bounds~\cite{Unbiased25} or Barankin bounds~\cite{BBound24}
could be used in place of Cram\'{e}r-Rao bounds.

The theoretically-predicted MSE in
estimating~$p$ is then given by~$V_\text{th} = \alpha^2 p_0 + \beta^2 p_1 - {p^*}^2$~\cite{Lorcan21},
which is plotted as the grey curve
in main-text Fig.~\ref{fig:DephasedDetIBM}(d)
and supplemental Figs.~\ref{fig:ExpSuppMatPlots}(a) and (b).
Experimentally, the empirical values
for~$p_0$ and~$p_1$
returned by the quantum processor
are used in Eq.~\eqref{eq:estimatepeq}
to obtain estimates of~$p$
for each
probe angle~$\theta_\mathrm{in}$,
which are plotted against
the true value in Fig.~\ref{fig:IBMQProbzeroEstimatedDephasing}(a)
(red points against right vertical axis).
Fig.~\ref{fig:IBMQProbzeroEstimatedDephasing}(a)
shows that the estimated values deviate
further from the true value for both very low
probe angles and
and very high probe angles,
in contrast to intermediate probe angles
where the deviations are smaller.

The empirical MSE of these estimates
across all shots is bootstrapped
from the experimental distribution and is plotted
in blue in main-text
Fig.~\ref{fig:DephasedDetIBM}(d)
and supplemental Fig.~\ref{fig:ExpSuppMatPlots}.
The simulation points in green
in main-text
Fig.~\ref{fig:DephasedDetIBM}(d)
and supplemental Fig.~\ref{fig:ExpSuppMatPlots}
are produced similarly
by processing values
for~$p_0$
and~$p_1$ returned by IBM's
noiseless quantum simulator.
In both cases,
the error bars for data points
are obtained by repeating
the entire process of computing empirical
MSEs~$50$ times
and calculating the standard deviation
of the~50 MSEs.
The MSEs and
and their standard deviations are then
compared to the spectral QCRB
for the same number of samples,
by instead plotting
the scaled-up MSEs
against~$1/\QFIsp$.

The discrepancies between theory
and experiment are summarised in Fig.~\ref{fig:IBMQProbzeroEstimatedDephasing}(b).
For each probe angle,
the difference between estimated dephasing
and its true value~$p^*$,
also known as bias,
is plotted in red (joined by continuous lines),
whereas the difference between the
empirically-found MSE and the theoretically-predicted
MSE is plotted in blue (joined by dashed lines).
From Fig.~\ref{fig:IBMQProbzeroEstimatedDephasing}(b),
it is clear that deviations of the experimental MSE
(blue points in Fig.~\ref{fig:ExpSuppMatPlots})
from theory (grey curve)
are perfectly correlated to systematic
bias in the estimated values,
attributable to platform noise including
state preparation, gate and measurement errors
at the circuit level.
The root-mean-squared bias across all probe states
is of the order of~$1\%$, and the total median error rate
for circuit errors on the IBM machine used are also of the order of~$1\%$. Moreover,
the cumulative median error rate combining state preparation
and gate errors on this machine are of the order of~$0.01\%$,
compared to a median readout or measurement error rate
of the order of~$1\%$.
This allows the relatively precise
preparation of optimal probe states
for the calibration of relatively noisier detectors,
precluding the cyclic requirement of well-characterised states
for precise calibration of detectors and well-characterised detectors
for the precise calibration of states.

\section{DQFI as Bures distance between nearby POVMs}
\label{supp:BuresDistance}

In this section,
we establish a connection between
the proposed DQFI~$\QFItr$
and the distance between nearby POVM measurements.
Specifically, we consider
the Bures metric defined on the space
of positive operators,
\begin{equation}
    \mathcal{D}_\mathrm{Bures}(A, B) \coloneqq \Tr(A) + \Tr(B) - 2 \mathcal{F}(A, B) \, ,
\end{equation}
where~$\mathcal{F}$ denotes the fidelity,
\begin{equation}
\label{eq:fiddef}
    \mathcal{F} \coloneqq \Tr \left ( \sqrt{\sqrt{A} B \sqrt{A}} \right )  \, ,
\end{equation}
and~$\sqrt{A}$ denotes the positive square-root of positive operator~$A$~\cite{SSW22}.
Noting that~$\mathcal{D}_\mathrm{Bures}(A,B) = \mathcal{D}_\mathrm{Bures}(B,A) > 0$ for~$A\neq B$,
which follows from~$\mathcal{D}_\mathrm{Bures}$ being a metric,
we define the total distance
\begin{equation}
    \mathcal{D}_\mathrm{total} \left ( \{\pi^{(1)}_j\}_{j\in[m]}, \{\pi^{(2)}_j\}_{j\in[m]} \right )
    \coloneqq \sum_{j\in [m]} \mathcal{D}_\mathrm{Bures} \left (\pi^{(1)}_j, \pi^{(2)}_j \right ) \, ,
\end{equation}
where it is apparent that the comparison can only
be made for POVMs with the same number of elements.
Also, due to~$\sum_{j\in[m]} \pi_j = \mathds{1}_d$
for any POVM, we can write
\begin{equation}
    \mathcal{D}_\mathrm{total}\left ( \{\pi^{(1)}_j\}_{j\in[m]}, \{\pi^{(2)}_j\}_{j\in[m]} \right ) = 2 d  - 2 \sum_{j\in[m]} \mathcal{F} \left (\pi^{(1)}_j, \pi^{(2)}_j \right )  \, .
\end{equation}
Now, given
a parametrised POVM,
we can now compute the total distance~$\mathcal{D}_\mathrm{total}\left(\{{\pi_j}_\theta\}_{j\in[m]}, \{{\pi_{\theta+\delta\theta}}_j\}_{j\in[m]} \right )$.
The calculation is lengthy, but our goal is to prove
that up to second order in~$\delta\theta$,
\begin{equation}
    \mathcal{D}_\mathrm{total}\left(\{{\pi_j}_\theta\}_{j\in[m]}, \{{\pi_{\theta+\delta\theta}}_j\}_{j\in[m]} \right ) = \frac{1}{4} \QFItr \, \delta\theta^2 + O(\delta\theta^3) \, .
\end{equation}
To ease readability,
the proof is divided into steps.

\step{1}
Let us start with the infinitesimal fidelity
for
a single POVM element~${\pi_j}_\theta$
($1\leq j \leq m$),
\begin{equation}
    \mathcal{F}({\pi_j}_\theta, {\pi_{\theta+\delta\theta}}_j) =  \mathcal{F}\left ({\pi_j}_\theta, {\pi_j}_\theta + \left (\partial_\theta {\pi_j}_\theta \right) \delta \theta + \left (\partial^2_\theta {\pi_j}_\theta \right ) \frac{\delta \theta^2}{2}  + O(\delta\theta^3) \right) \, ,
\end{equation}
where we have used the Taylor expansion of~${\pi_j}_\theta$
assuming that the parametrisation
of~${\pi_j}_\theta$ is at least
twice-differentiable.
The fidelity definition (Eq.~\eqref{eq:fiddef})
then reduces to
\begin{equation}
    \mathcal{F}({\pi_j}_\theta, {\pi_{\theta+\delta\theta}}_j) = \Tr \left ( \sqrt{ \sqrt{{\pi_j}_\theta} \left ( {\pi_j}_\theta + \left (\partial_\theta {\pi_j}_\theta \right) \delta \theta + \left (\partial^2_\theta {\pi_j}_\theta \right ) \frac{\delta \theta^2}{2}  + O(\delta\theta^3) \right ) \sqrt{{\pi_j}_\theta}}\right ) \, .
\end{equation}
Let us rewrite this last expression as~$\Tr\left ( \sqrt{X} \right )$,
with~$\sqrt{X}$ being the unique positive Hermitian square-root of~$X$.

\step{2}
By expanding~$X$, we get
\begin{equation}
\label{eq:Xexp}
\begin{split}
    X &= \sqrt{{\pi_j}_\theta} \left ( {\pi_j}_\theta + \left (\partial_\theta {\pi_j}_\theta \right) \delta \theta + \left (\partial^2_\theta {\pi_j}_\theta \right ) \frac{\delta \theta^2}{2}  + O(\delta\theta^3) \right ) \sqrt{{\pi_j}_\theta}  \\
    &=   {\pi_\theta}^2_j + \sqrt{{\pi_j}_\theta}  \left (\partial_\theta {\pi_j}_\theta \right) \sqrt{{\pi_j}_\theta}  \delta \theta + \sqrt{{\pi_j}_\theta}
    \left (\partial^2_\theta {\pi_j}_\theta \right ) \sqrt{{\pi_j}_\theta}  \frac{\delta   \theta^2}{2}  + O(\delta\theta^3)  \, ,
\end{split}
\end{equation}
whereas~$X$ (being a product of three positive
semi-definite matrices) is positive
as well as Hermitian.
Thus,~$X$ has a unique Hermitian and positive
square root.
Let us assume
this square-root is of
the form\footnote{It is clear that
the form in Eq.~\eqref{eq:Xguess}
is Hermitian if and only if~$\mathcal{B}$
and~$\mathcal{C}$ are Hermitian.
We defer the proof of
positivity
of this form to Lemma~\ref{lemma:pos}.}
\begin{equation}
    \label{eq:Xguess}
    \sqrt{X} = {\pi_j}_\theta + \mathcal{B} \,  \delta\theta + \mathcal{C} \, \delta\theta^2 \, ,
\end{equation}
for yet-to-be-determined Hermitian matrices~$\mathcal{B}$
and~$\mathcal{C}$.
To solve for~$\mathcal{B}$ and~$\mathcal{C}$,
we solve~$\sqrt{X} \sqrt{X}$ equal to~$X$
from Eq.~\eqref{eq:Xexp}.
The resulting equations for~$\mathcal{B}$
and~$\mathcal{C}$ are
\begin{align}
    \label{eq:B}
        {\pi_j}_\theta \, \mathcal{B} + \mathcal{B} \, {\pi_j}_\theta &= \sqrt{{\pi_j}_\theta}  \left ( \partial_\theta {\pi_j}_\theta \right ) \sqrt{{\pi_j}_\theta} \, \,  , \\
        \label{eq:C}
        {\pi_j}_\theta \, \mathcal{C}  + \mathcal{C} \, {\pi_j}_\theta + \mathcal{B}^2 &= \frac12 \sqrt{{\pi_j}_\theta} \left ( \partial^2_\theta {\pi_j}_\theta \right ) \sqrt{{\pi_j}_\theta} \, \, .
\end{align}

\step{3}
Comparing Eq.~\eqref{eq:B}
to~${\pi_j}_\theta {L_j}_\theta + {L_j}_\theta {\pi_j}_\theta = 2 \, \partial_\theta {\pi_j}_\theta$
or, equivalently, to
\begin{equation}
    {\pi_j}_\theta
    \sqrt{{\pi_j}_\theta}  {L_j}_\theta \sqrt{{\pi_j}_\theta}  + \sqrt{{\pi_j}_\theta} {L_j}_\theta \sqrt{{\pi_j}_\theta} {\pi_j}_\theta =  2  \sqrt{{\pi_j}_\theta}
    \left ( \partial_\theta {\pi_j}_\theta \right )
\sqrt{{\pi_j}_\theta} \, ,
\end{equation}
gives us~$\mathcal{B} = \frac12 \sqrt{{\pi_j}_\theta} {L_j}_\theta \sqrt{{\pi_j}_\theta}$.
This solution for~$\mathcal{B}$
implies
$$\Tr( \mathcal{B} ) = \frac12 \Tr ( {\pi_j}_\theta \, {L_j}_\theta ) = \frac12 \Tr ( \partial_\theta \, {\pi_j}_\theta \, ) \, , $$
and, upon
inserting
into Eq.~\eqref{eq:C}, results in
\begin{equation}
\label{eq:C2}
    {\pi_j}_\theta \, \mathcal{C}  + \mathcal{C} \, {\pi_j}_\theta = \frac12 \sqrt{{\pi_j}_\theta} \left ( \partial^2_\theta {\pi_j}_\theta - \frac12 {L_j}_\theta {\pi_j}_\theta {L_j}_\theta\right ) \sqrt{{\pi_j}_\theta} \, \, ,
\end{equation}
yet another Sylvester equation.

\step{4}
To solve Eq.~\eqref{eq:C2} for~$\mathcal{C}$,
we need to express
the second-derivative~$\partial^2_\theta \, {\pi_j}_\theta$
in terms of the SLD operators~${L_j}_\theta$.
Differentiating~${\pi_j}_\theta  \, {L_j}_\theta
+ {L_j}_\theta \,  {\pi_j}_\theta
= 2 \, \partial_\theta {\pi_j}_\theta$
with respect to~$\theta$ yields
\begin{equation}
\label{eq:secder}
    \partial^2_\theta \, {\pi_j}_\theta
    = \frac12 \left \{ {\pi_j}_\theta , \partial_\theta \, {L_j}_\theta \right \}
    + \frac14 \left \{  {\pi_j}_\theta \, , {L_j}_\theta^2 \, \right \}  + \frac12 {L_j}_\theta \, {\pi_j}_\theta \, {L_j}_\theta  \, ,
\end{equation}
where~$\{ \, , \, \}$ denotes the anti-commutator.
Eq.~\eqref{eq:C2} then becomes
\begin{equation}
\begin{split}
    {\pi_j}_\theta \, \mathcal{C}  + \mathcal{C} \, {\pi_j}_\theta &= \frac12 \sqrt{{\pi_j}_\theta} \left (  \frac12 \left \{ {\pi_j}_\theta , \partial_\theta \, {L_j}_\theta \right \}
    + \frac14 \left \{  {\pi_j}_\theta \, , {L_j}_\theta^2 \, \right \}  \right ) \sqrt{{\pi_j}_\theta} \, \, \\
    &= \frac14 \sqrt{{\pi_j}_\theta}   \left \{ {\pi_j}_\theta , \partial_\theta \, {L_j}_\theta + \frac12 \,  {L_j}_\theta^2 \,  \right \}
     \sqrt{{\pi_j}_\theta} \, \,  \\
    &=    \left \{ {\pi_j}_\theta , \frac14 \sqrt{{\pi_j}_\theta} \left (  \partial_\theta \, {L_j}_\theta + \frac12 \,  {L_j}_\theta^2 \right ) \,  \sqrt{{\pi_j}_\theta} \,  \right \} ,
\end{split}
\end{equation}
which immediately implies~$\mathcal{C}
=\frac14 \sqrt{{\pi_j}_\theta} \left (  \partial_\theta \, {L_j}_\theta
+ \frac12 \,  {L_j}_\theta^2 \right ) \,  \sqrt{{\pi_j}_\theta}$.

\step{5}
By direct computation, we have
$$\Tr(\mathcal{C})
= \frac14 \Tr\left ( {\pi_j}_\theta
\left (  \partial_\theta \, {L_j}_\theta + \frac12 \,  {L_j}_\theta^2 \right ) \right ) \, ,$$
whereas
from Eq.~\eqref{eq:secder},
$$\Tr \left ( \partial^2_\theta \, {\pi_j}_\theta\right ) = \Tr\left ( {\pi_j}_\theta
\left (  \partial_\theta \, {L_j}_\theta + \frac12 \,  {L_j}_\theta^2 \right )
+ \frac12{L_j}_\theta \, {\pi_j}_\theta \, {L_j}_\theta  \right ) \, ,$$
so that~$\Tr \left ( \partial^2_\theta \, {\pi_j}_\theta\right )
= 4 \Tr(\mathcal{C})
+ \frac12 \Tr( {L_j}_\theta \, {\pi_j}_\theta \, {L_j}_\theta )$.
Moreover, using the Taylor expansion of~${\pi_j}_\theta$, we have
$$\Tr({\pi_{\theta+\delta \theta}}_j)
=   \Tr( {\pi_j}_\theta )
+  \Tr \left (\partial_\theta {\pi_j}_\theta \right) \delta \theta +  \Tr \left (\partial^2_\theta {\pi_j}_\theta \right ) \frac{\delta \theta^2}{2}  =
\Tr( {\pi_j}_\theta )  + 2 \Tr(\mathcal{B}) \delta \theta + \left ( 2 \Tr(\mathcal{C}) + \frac14 \Tr ( {L_j}_\theta \, {\pi_j}_\theta \, {L_j}_\theta ) \right ) \delta \theta^2. $$

\step{6}
Using~$\mathcal{B}$ and~$\mathcal{C}$
to rewrite the infinitesimal fidelity
\begin{equation}
    \mathcal{F}({\pi_j}_\theta, {\pi_{\theta+\delta\theta}}_j) = \Tr(\sqrt{X}) = \Tr({\pi_j}_\theta) \, + \Tr(\mathcal{B}) \,  \delta \theta
    + \Tr(\mathcal{C}) \, \delta \theta^2  \, ,
\end{equation}
a straight-forward calculation
of~$\mathcal{D}_\mathrm{Bures}({\pi_j}_\theta \, , \, {\pi_j}_{\theta + \delta \theta} \,)$
leads to
\begin{equation}
\label{eq:distancelement}
    \mathcal{D}_\mathrm{Bures}({\pi_j}_\theta \, , \, {\pi_j}_{\theta + \delta \theta}) \, = \Tr({\pi_j}_\theta) + \Tr({\pi_j}_{\theta + \delta \theta}) - 2 \mathcal{F}({\pi_j}_\theta, {\pi_{\theta+\delta\theta}}_j) = \frac14 \Tr \left ( {L_j}_\theta \, {\pi_j}_\theta \, {L_j}_\theta \right )  \delta \theta^2 \, .
\end{equation}

\step{7}
Summing over~$j\in [m]$
in Eq.~\eqref{eq:distancelement}
then results in
\begin{equation}
\label{eq:totaldistance}
     \mathcal{D}_\mathrm{total}\left(\{{\pi_j}_\theta\}_{j\in[m]}, \{{\pi_{\theta+\delta\theta}}_j\}_{j\in[m]} \right ) = \frac14 \, \QFItr \,  \delta \theta^2 \, ,
\end{equation}
as required. \qed

\begin{lemma}
\label{lemma:pos}
    The following solutions for~$\mathcal{B}$ and~$\mathcal{C}$
    \begin{align*}
        \mathcal{B} &= \frac12 \sqrt{{\pi_j}_\theta} {L_j}_\theta \sqrt{{\pi_j}_\theta} \, , \\
        \mathcal{C} &=\frac14 \sqrt{{\pi_j}_\theta} \left (  \partial_\theta \, {L_j}_\theta
+ \frac12 \,  {L_j}_\theta^2 \right ) \,  \sqrt{{\pi_j}_\theta}
    \end{align*}
    guarantee that the form in
    Eq.~\eqref{eq:Xguess},
    i.e.,~$\sqrt{X} = {\pi_j}_\theta + \mathcal{B} \delta \theta + \mathcal{C} \delta\theta^2$,
    is Hermitian and positive semi-definite.
\end{lemma}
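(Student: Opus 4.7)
The plan is to verify Hermiticity by direct inspection of the two ansatzes, and then deduce positive semi-definiteness from the sandwich structure combined with a perturbation argument on the support of ${\pi_j}_\theta$.

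First I would settle Hermiticity. As ${\pi_j}_\theta$ is Hermitian (being a POVM element), its principal positive square root $\sqrt{{\pi_j}_\theta}$ is Hermitian. The SLD operator ${L_j}_\theta$ is Hermitian by the Lyapunov defining equation~\eqref{eq:SLDdetector}, and differentiating that equation in $\theta$ and taking the adjoint shows that $\partial_\theta {L_j}_\theta$ is Hermitian as well. Thus $\mathcal{B} = \tfrac{1}{2}\sqrt{{\pi_j}_\theta}\,{L_j}_\theta\,\sqrt{{\pi_j}_\theta}$ and $\mathcal{C} = \tfrac{1}{4}\sqrt{{\pi_j}_\theta}\,\bigl(\partial_\theta {L_j}_\theta + \tfrac{1}{2}{L_j}_\theta^2\bigr)\,\sqrt{{\pi_j}_\theta}$ are Hermitian operators sandwiched by Hermitian square roots, hence themselves Hermitian. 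This gives Hermiticity of ${\pi_j}_\theta + \mathcal{B}\,\delta\theta + \mathcal{C}\,\delta\theta^2$ immediately.

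Next I would establish positive semi-definiteness. The key structural observation is that both corrections have the form $\sqrt{{\pi_j}_\theta}\,(\cdot)\,\sqrt{{\pi_j}_\theta}$, so they annihilate the kernel of ${\pi_j}_\theta$ from both sides and are supported on $\mathrm{range}({\pi_j}_\theta)$. Decomposing the Hilbert space as $\mathrm{range}({\pi_j}_\theta) \oplus \ker({\pi_j}_\theta)$, the candidate is therefore block-diagonal: the off-diagonal blocks vanish, and the kernel block is identically zero. On the range block, ${\pi_j}_\theta$ is strictly positive with smallest eigenvalue $\lambda_{\min} > 0$; since $\mathcal{B}$ and $\mathcal{C}$ have finite operator norm, the standard Weyl eigenvalue-continuity bound for Hermitian perturbations gives that every eigenvalue stays above $\lambda_{\min}/2$, say, for all $\delta\theta$ below an explicit threshold. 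Combined with the vanishing on the kernel, the full expression is then positive semi-definite, and by uniqueness of the positive square root of the PSD operator $X$ it must coincide with $\sqrt{X}$ up to the $O(\delta\theta^3)$ remainder identified in Steps~3--4.

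The main obstacle I anticipate is precisely the degenerate case in which ${\pi_j}_\theta$ has a non-trivial kernel: without care, a perturbative square root near the boundary of the positive cone can acquire negative eigenvalues. The resolution hinges on the pseudo-inverse prescription for ${L_j}_\theta$ in Eq.~\eqref{eq:Ljeachpovm}, which ensures that $\mathcal{B}$ and $\mathcal{C}$ have no components connecting $\ker({\pi_j}_\theta)$ to itself or to $\mathrm{range}({\pi_j}_\theta)$, and so reduces the problem to the well-behaved range block where ${\pi_j}_\theta$ is invertible and ordinary analytic perturbation theory for the matrix square root applies.
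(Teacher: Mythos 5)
Your proof is correct, and the Hermiticity part is identical to the paper's. For positivity the two arguments share the same essential insight---that both corrections carry the sandwich structure $\sqrt{{\pi_j}_\theta}\,(\cdot)\,\sqrt{{\pi_j}_\theta}$, so the kernel of ${\pi_j}_\theta$ causes no trouble and everything reduces to a small perturbation of something strictly positive---but the packaging differs. The paper factors the \emph{entire} candidate at once as
\begin{equation*}
\sqrt{X} \;=\; \sqrt{{\pi_j}_\theta}\left(\mathds{1}_d + \tfrac12 {L_j}_\theta\,\delta\theta + \tfrac14(\partial_\theta {L_j}_\theta)\,\delta\theta^2 + \tfrac18 {L_j}_\theta^2\,\delta\theta^2\right)\sqrt{{\pi_j}_\theta}\,,
\end{equation*}
observes that the middle factor is a Hermitian $O(\delta\theta)$ perturbation of the identity and hence positive semi-definite for small $\delta\theta$, and concludes by conjugation-positivity; this avoids any explicit range/kernel decomposition. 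Your route instead splits $\mathcal{H}_d$ into $\mathrm{range}({\pi_j}_\theta)\oplus\ker({\pi_j}_\theta)$, notes the kernel block vanishes, and applies Weyl's eigenvalue bound on the range block where ${\pi_j}_\theta$ is strictly positive. Both are valid; the paper's factorisation is shorter, while yours makes the required smallness of $\delta\theta$ quantitative (a threshold in terms of $\lambda_{\min}$ and $\|\mathcal{B}\|,\|\mathcal{C}\|$). One small over-attribution on your part: the absence of components connecting $\ker({\pi_j}_\theta)$ to anything is guaranteed purely by the $\sqrt{{\pi_j}_\theta}(\cdot)\sqrt{{\pi_j}_\theta}$ sandwich, independently of the pseudo-inverse convention for ${L_j}_\theta$ (which is only needed for ${L_j}_\theta$ itself to be finite); and your closing appeal to uniqueness of the positive square root of $X$ is extra content beyond what the lemma asserts, though not incorrect.
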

\begin{proof}
    Clearly,~$\mathcal{B}$ and~$\mathcal{C}$
    are Hermitian, thus implying that~$\sqrt{X}$
    is also Hermitian.
    For positivity,
    note that we can rewrite~$\sqrt{X}$
    as
    \begin{equation}
        \sqrt{X} = \sqrt{{\pi_j}_\theta} \left ( \mathds{1}_d + \frac12 {L_j}_\theta
 \delta \theta + \frac14 \left ( \partial_\theta {L_j}_\theta \right ) \delta \theta^2 + \frac18 {L_j}_\theta^2 \delta \theta^2 \right ) \sqrt{{\pi_j}_\theta} \, .
    \end{equation}
    Here,~$\sqrt{{\pi_j}_\theta}$ is Hermitian and positive semi-definite,
    and for small enough~$\delta \theta$,
    the same is true of~$ \mathds{1}_d + \frac12 {L_j}_\theta
    \delta \theta + \frac14 \left ( \partial_\theta {L_j}_\theta \right ) \delta \theta^2 + \frac18 {L_j}_\theta^2 \delta \theta^2$.
    As the product of positive semi-definite matrices
    is also positive semi-definite,
    this proves our claim.
\end{proof}

\begin{corollary}
\label{corr:CorrQFIdecreasing}
    The DQFI~$\QFItr$ is non-increasing under any
    trace-preserving completely positive (TPCP) map applied
    to each POVM element.
\end{corollary}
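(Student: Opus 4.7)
The plan is to invoke the identification, just established in Eq.~\eqref{eq:totaldistance}, between $\QFItr$ and the leading-order coefficient in $\delta\theta$ of the total Bures distance between~$\{{\pi_j}_\theta\}$ and~$\{{\pi_j}_{\theta+\delta\theta}\}$. Since Bures distance is a well-known contraction under TPCP maps, this monotonicity should transfer directly to $\QFItr$ term-by-term.

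First, I would observe that applying a fixed TPCP map~$\Phi$ to each POVM element commutes with differentiation in~$\theta$, so~$\{\Phi({\pi_j}_\theta)\}$ is itself a smooth family of positive operators with derivative~$\Phi(\partial_\theta {\pi_j}_\theta)$; the trace-DQFI formula in Definition~\ref{def:QFI1} is defined for this family without requiring $\sum_j \Phi({\pi_j}_\theta) = \mathds{1}_d$. Next, I would apply Eq.~\eqref{eq:totaldistance} to both the original and transformed families, obtaining
\begin{equation*}
\tfrac{1}{4} \QFItr[\{{\pi_j}_\theta\}]\,\delta\theta^2 + O(\delta\theta^3) = \sum_{j\in[m]} \mathcal{D}_\mathrm{Bures}({\pi_j}_\theta, {\pi_j}_{\theta+\delta\theta}) ,
\end{equation*}
and the analogous identity for~$\{\Phi({\pi_j}_\theta)\}$. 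The corollary then reduces to the term-by-term contraction~$\mathcal{D}_\mathrm{Bures}(\Phi(A), \Phi(B)) \leq \mathcal{D}_\mathrm{Bures}(A, B)$ for positive~$A,B$.

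To establish this contraction, I would invoke Uhlmann's monotonicity of the fidelity~$\mathcal{F}$ under TPCP maps together with trace preservation. Since POVM elements are not unit-trace, I would use the homogeneity~$\mathcal{F}(\lambda A, \mu B) = \sqrt{\lambda \mu}\,\mathcal{F}(A/\lambda, B/\mu)$ to renormalise~$A, B$ to density matrices, apply the standard state-fidelity monotonicity, and restore the prefactors using~$\Tr(\Phi(A)) = \Tr(A)$. The trace terms in~$\mathcal{D}_\mathrm{Bures}$ are unchanged by~$\Phi$, while~$\mathcal{F}$ can only grow, yielding the required contraction. Summing over~$j$ and matching coefficients of~$\delta\theta^2$ in the limit~$\delta\theta \to 0$ gives~$\QFItr[\{\Phi({\pi_j}_\theta)\}] \leq \QFItr[\{{\pi_j}_\theta\}]$.

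The main obstacle is really a conceptual one: ensuring that the standard fidelity monotonicity, usually quoted for normalised states, carries over cleanly to unnormalised positive operators and that the argument does not require~$\Phi$ to be unital (so that~$\{\Phi({\pi_j}_\theta)\}$ need not itself be a POVM). Both concerns are resolved by the homogeneity of~$\mathcal{F}$ and by noting that Eq.~\eqref{eq:totaldistance} never invoked the completeness relation~$\sum_j {\pi_j}_\theta = \mathds{1}_d$, only positivity and twice-differentiability of each~${\pi_j}_\theta$.
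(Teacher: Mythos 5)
Your proposal is correct and follows essentially the same route as the paper's proof: both identify $\QFItr$ with the second-order coefficient of the total Bures distance via Eq.~\eqref{eq:totaldistance} and then invoke the term-by-term contractivity of the Bures metric under TPCP maps. You supply more detail than the paper on extending fidelity monotonicity to unnormalised positive operators (note only that your homogeneity identity should read $\mathcal{F}(\lambda A, \mu B) = \sqrt{\lambda\mu}\,\mathcal{F}(A,B)$, a harmless notational slip given the surrounding argument).
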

\begin{proof}
    This follows from the Bures metric
    on positive operators being
    non-increasing under any TPCP map~\cite{NC10,Wilde13}.
    Specifically, the distance element~$\mathcal{D}_\mathrm{Bures}$
    in Eq.~\eqref{eq:distancelement}
    is non-increasing under
    any TPCP map applied to both
    ${\pi_j}_\theta$ and~${\pi_j}_{\theta + \delta \theta}$,
    because such maps cannot enhance
    the distinguishability of~${\pi_j}_\theta$
    from~${\pi_j}_{\theta + \delta \theta}$.
    As the distance element~$\mathcal{D}_\mathrm{Bures}$
    is non-increasing for each POVM element~${\pi_j}_\theta$
    ($j \in [m]$),
    the total distance~$\mathcal{D}_\mathrm{total}$ in Eq.~\eqref{eq:totaldistance}
    is also non-increasing,
    as is the trace DQFI~$\QFItr$.
\end{proof}

\section{Convexity of DQFI}
\label{supp:Convex}

In this section,
we prove that the DQFI~$\QFItr$ is convex
in its argument.
So, for two~$m$-outcome POVMs~$\Pi_\theta\equiv\{{\pi_j}_\theta\}_{j\in[m]}$
and~$\Pi'_\theta\equiv\{{\pi'_j}_\theta\}_{j\in[m]}$,
and for~$0\leq p \leq 1$,
\begin{equation}
    \QFItr\left ( (1-p) \Pi_\theta + p \Pi'_\theta \right ) \leq (1-p) \QFItr \left ( \Pi_\theta \right ) + p \QFItr \left ( \Pi'_\theta \right ) \, .
\end{equation}

\begin{proof}
Consider the POVM~$\tilde{\Pi}_\theta \equiv \{ {{}\tilde{\pi}_j}_\theta \}_{j\in[m]}$ defined as
\begin{equation}
    \tilde{\Pi}_\theta \coloneqq (1-p) \Pi_\theta \oplus p \Pi'_\theta = \begin{bmatrix}
        (1-p)  \Pi_\theta & 0 \\ 0 & p \Pi'_\theta
    \end{bmatrix} \, .
\end{equation}
Clearly, $\tilde{\Pi}_\theta$ is a valid~$m$-outcome
POVM acting on a $2d$ dimensional Hilbert space~$\mathcal{H}_{2d}$,
which can be decomposed as~$\mathcal{H} \oplus \mathcal{H} = \mathbb{C}^2\otimes \mathcal{H}$.
Physically,~$\tilde{\Pi}_\theta$
represents a controlled measurement
on~$\mathcal{H}_d$, controlled
by a classical bit in
the state~0 with probability~$1-p$,
and the state~1 with probability~$p$
(see Fig.~\ref{fig:convexPOVM}).

\begin{figure}[hbtp]
    \centering
    \includegraphics[width=0.32\columnwidth]{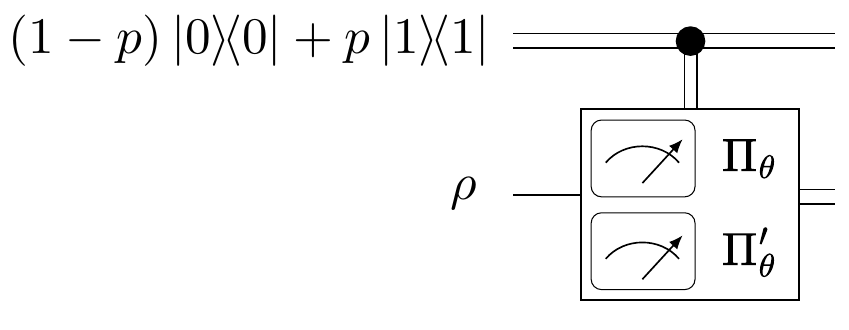}
    \caption{Physical interpretation of composite
    POVM~$\tilde{\Pi}_\theta$.}
    \label{fig:convexPOVM}
\end{figure}

The block-diagonal structure of~$\tilde{\Pi}_\theta$
implies that its detector SLD operators~$\tilde{L}_j$
are also block-diagonal,
i.e.,~$\tilde{L}_j = L_j \oplus L'_j$,
where~$L_j$ \&~$L'_j$ are the detector SLD operators of~$\Pi_\theta$
and~$\Pi'_\theta$, respectively.
The DQFI~$\QFItr$ is thus additive under direct sum,
\begin{equation}
    \QFItr \left [ \tilde{\Pi}_\theta \right ] = (1-p) \QFItr \left [ \Pi_\theta \right ] + p \QFItr \left [ \Pi'_\theta \right ]  \, .
\end{equation}
On the other hand,
if we ignore the first sub-system
and compute the DQFI
of only the second subsystem in Fig.~\ref{fig:convexPOVM},
which corresponds to partial-tracing
out the first sub-system,
the effective measurement becomes~$\Tr_{\mathbb{C}^2} \tilde{\Pi}_\theta = (1-p) \Pi_\theta + p \Pi'_\theta$.
Partial trace being a TPCP map
cannot increase DQFI~$\QFItr$ due
to Corollary~\ref{corr:CorrQFIdecreasing},
thus proving
\begin{equation}
    \QFItr \left [ (1-p) \Pi_\theta + p \Pi'_\theta \right ] = \QFItr \left [ \Tr_{\mathbb{C}^2} \tilde{\Pi}_\theta \right ]
    \leq \QFItr \left [ \tilde{\Pi}_\theta \right ] = (1-p) \QFItr \left [ \Pi_\theta \right ] + p \QFItr \left [ \Pi'_\theta \right ]  \, .
\end{equation}
\end{proof}

As the set of POVMs
is convex and compact,
the convexity of the DQFI~$\QFItr$
implies that extremal POVMs,
namely rank-one projection-valued measurements
maximise the DQFI~$\QFItr$.
Such measurements,
for example projective SIC POVMs
in certain dimensions,
are therefore the most
parameter-informative
and, thus, easiest to estimate parameters from.

\section{Multi-parameter detector estimation}
\label{supp:MultiPara}

Let~$\Pi_\theta\equiv\{{\pi_j}_\theta\}_{j\in[m]}$
be an~$m$-outcome POVM parametrised by~$n$
parameters~$\theta \equiv (\theta_1, \dots, \theta_n)$.
Say we probe this detector
with a single state~$\rho_\mathrm{in}$.
The CFI matrix (CFIM)~$\CFI_\theta[\rho_\mathrm{in}, \Pi_\theta] \in \mathbb{R}^{n\times  n}$
now
has elements
\begin{equation}
\label{eq:multiparaCFI}
    {(\CFI_\theta)}_{jk} = \sum_{l\in[m]} \frac1{p_\theta(l)} \frac{\partial p_\theta(l)}{\partial \theta_j} \frac{\partial p_\theta(l)}{\partial \theta_k} \, ,
\end{equation}
where~$p_\theta(l) = \Tr \left (\rho_\mathrm{in} \, {\pi_l}_\theta \right )$.
Now, define the multi-parameter
detector SLD operators~$\{L_j^{\theta_k}\}_{j\in[m], k\in[n]}$
for the~$j^\text{th}$ measurement outcome and the~$k^\text{th}$ parameter
to be
\begin{equation}
     L_j^{\theta_k} {\pi_j}_\theta + {\pi_j}_\theta L_j^{\theta_k} = 2 \partial_{\theta_k} {\pi_j}_\theta \, .
\end{equation}
Following steps similar to
main-text Eq.~\eqref{eq:shortproofEq1},
we can rewrite the CFIM elements as
\begin{equation}
\label{eq:CFImultidef}
    (\CFI_\theta)_{jk} = \sum_{l\in[m]} \frac{\Re \Tr \left ( \rho_\mathrm{in} {\pi_l}_\theta L_l^{\theta_j} \right ) \Re \Tr \left ( \rho_\mathrm{in} {\pi_l}_\theta L_l^{\theta_k} \right )}{\Tr \left ( \rho_\mathrm{in} {\pi_l}_\theta \right )} \, .
\end{equation}
We first introduce
an operator upper bound
to the CFIM
in the following theorem.

\begin{theorem}
\label{th:multipara}
For any probe state~$\rho_\mathrm{in}\in\mathcal{D}(\mathcal{H}_d)$,
the operator~$\tilde{Q}_\theta [\rho_\mathrm{in}, \Pi_\theta] \in \mathbb{R}^{n \times n}$
with elements defined as
\begin{equation}
\label{eq:Qthetadef}
    \left ({\tilde{Q}_\theta}\right)_{jk} \coloneqq \frac{1}{2} \Tr \left [ \sum_{l \in [m]} \left ( L_l^{\theta_j} {\pi_l}_\theta L_l^{\theta_k}  +  L_l^{\theta_k} {\pi_l}_\theta L_l^{\theta_j} \right ) \rho_\mathrm{in} \right ] \,
\end{equation}
satisfies~$z \, \CFI_\theta[\rho_\mathrm{in}, \Pi_\theta]  \, z^\top \leq z\, \tilde{Q}_\theta[\rho_\mathrm{in}, \Pi_\theta] \,  z^\top$
for any~$z \coloneqq [ z_1, \dots, z_n] \in \mathbb{R}^n$,
i.e.,
\begin{equation}
    \CFI_\theta[\rho_\mathrm{in}, \Pi_\theta] \preccurlyeq \tilde{Q}_\theta[\rho_\mathrm{in}, \Pi_\theta] \, .
\end{equation}
\end{theorem}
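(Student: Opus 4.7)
The plan is to reduce the multi-parameter claim to a scalar-valued inequality that is already handled by the single-parameter machinery of main-text Eq.~\eqref{eq:shortproofEq1}, then symmetrise the resulting quadratic form to recover the expression in Eq.~\eqref{eq:Qthetadef}. Concretely, for a fixed real vector $z=(z_1,\dots,z_n)$, I would introduce the outcome-wise linear combination
\begin{equation*}
    L_l^{(z)} \coloneqq \sum_{j\in[n]} z_j \, L_l^{\theta_j}, \qquad l\in[m],
\end{equation*}
which is Hermitian because each $L_l^{\theta_j}$ is, and satisfies $L_l^{(z)} {\pi_l}_\theta + {\pi_l}_\theta L_l^{(z)} = 2 \sum_j z_j \,\partial_{\theta_j} {\pi_l}_\theta$ by linearity of Eq.~\eqref{eq:SLDdetector}.

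Next, starting from the rewriting of the CFIM entries in Eq.~\eqref{eq:CFImultidef} and using bilinearity, I would express the quadratic form in a form that closely mirrors the single-parameter CFI,
\begin{equation*}
    z\,\CFI_\theta[\rho_\mathrm{in},\Pi_\theta]\,z^\top
    = \sum_{l\in[m]} \frac{\bigl(\Re\,\Tr(\rho_\mathrm{in}\,{\pi_l}_\theta\,L_l^{(z)})\bigr)^2}{\Tr(\rho_\mathrm{in}\,{\pi_l}_\theta)}.
\end{equation*}
This is precisely the object to which the chain of inequalities in Eq.~\eqref{eq:proofEq1} of Methods applies, outcome by outcome: first bound $(\Re z)^2 \le |z|^2$, then apply the operator Cauchy--Schwarz inequality exactly as in the single-parameter derivation. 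This yields
\begin{equation*}
    z\,\CFI_\theta\,z^\top \;\le\; \Tr\Big[\Big(\sum_{l\in[m]} L_l^{(z)}\,{\pi_l}_\theta\,L_l^{(z)}\Big)\rho_\mathrm{in}\Big].
\end{equation*}

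The final step is simply algebraic: expand $L_l^{(z)} {\pi_l}_\theta L_l^{(z)} = \sum_{j,k} z_j z_k\, L_l^{\theta_j} {\pi_l}_\theta L_l^{\theta_k}$ and symmetrise the summand under $j\leftrightarrow k$ (which is free because $z_j z_k=z_k z_j$). The symmetrised expression matches $\tfrac{1}{2}(L_l^{\theta_j}{\pi_l}_\theta L_l^{\theta_k}+L_l^{\theta_k}{\pi_l}_\theta L_l^{\theta_j})$, so the right-hand side becomes exactly $z\,\tilde{Q}_\theta[\rho_\mathrm{in},\Pi_\theta]\,z^\top$, establishing $\CFI_\theta \preccurlyeq \tilde{Q}_\theta$ since $z$ was arbitrary.

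I do not anticipate a genuine obstacle: the single-parameter CFI bound does all the analytic work, and the multi-parameter content is the trick of packaging a real linear combination of SLDs into one effective $L_l^{(z)}$. The only point that deserves care is ensuring the manifestly non-symmetric quantity $\sum_l L_l^{\theta_j}{\pi_l}_\theta L_l^{\theta_k}$ can legitimately be replaced by its Hermitian/index-symmetric part before taking traces against $\rho_\mathrm{in}$; this is automatic under the double sum over $j,k$ weighted by $z_j z_k$, but stating it explicitly clarifies why $\tilde{Q}_\theta$ in Eq.~\eqref{eq:Qthetadef} uses the symmetrised anticommutator-like combination rather than the one-sided product.
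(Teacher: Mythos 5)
Your proposal is correct and follows essentially the same route as the paper's own proof: the paper likewise defines the combined operator $\mathcal{L}_l^{(z)}=\sum_i z_i L_l^{\theta_i}$, rewrites $z\,\CFI_\theta\,z^\top$ as a single-parameter-style CFI, and applies the $\Re[\,\cdot\,]^2\leq|\cdot|^2$ and operator Cauchy--Schwarz inequalities from Eq.~\eqref{eq:proofEq1} before expanding the bilinear form into the symmetrised expression of Eq.~\eqref{eq:Qthetadef}. No gaps.
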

\begin{proof}
By straightforward computation
(suppressing the functional dependence
of~$\CFI_\theta$ and~$\tilde{Q}_\theta$
on~$\rho_\mathrm{in}$ and~$\Pi_\theta$
for brevity),
\begin{equation}
    z\, \CFI_\theta \,  z^\top  = \sum_{j,k\in[n]} z_j   \left ( \CFI_\theta \right )_{jk} z_k
    = \sum_{l\in[m]} \frac{\Re \Tr \left ( \rho_\mathrm{in} {\pi_l}_\theta \mathcal{L}_l^{(z)} \right ) \Re \Tr \left ( \rho_\mathrm{in} {\pi_l}_\theta \mathcal{L}_l^{(z)} \right )}{\Tr \left ( \rho_\mathrm{in} {\pi_l}_\theta \right )}  \, ,
\end{equation}
where we have defined Hermitian operators~$\mathcal{L}_l^{(z)} \coloneqq \sum_{i\in[n]} z_i L^{\theta_i}_l$.
Extending the approach in Eq.~\eqref{eq:proofEq1}
from Methods,
we get
\begin{equation}
\begin{aligned}
    z\, \CFI_\theta \,  z^\top &= \sum_{l\in[m]} \frac{\left ( \Re \Tr \left ( \rho_\mathrm{in} {\pi_l}_\theta \mathcal{L}_l^{(z)} \right ) \right )^2}{\Tr \left ( \rho_\mathrm{in} {\pi_l}_\theta \right )}
    \leq \sum_{l\in[m]} \left \vert \frac{ \Tr \left ( \rho_\mathrm{in} {\pi_l}_\theta \mathcal{L}_l^{(z)} \right )}{\sqrt{\Tr \left ( \rho_\mathrm{in} {\pi_l}_\theta \right )}} \right \vert^2\\
    &= \sum_{l\in[m]} \left \vert \Tr \left (
    \frac{\sqrt{\rho_\mathrm{in}} \sqrt{{\pi_l}_\theta}}{\sqrt{\Tr(\rho_\mathrm{in} \, {\pi_l}_\theta)}} 
    \sqrt{{\pi_l}_\theta} \mathcal{L}^{(z)}_l \sqrt{\rho_\mathrm{in}} 
    \right ) \right \vert^2\\
    &\leq \sum_{l\in[m]}  \Tr \left (
    \frac{\sqrt{\rho_\mathrm{in}} \sqrt{{\pi_l}_\theta} \sqrt{{\pi_l}_\theta} \sqrt{\rho_\mathrm{in}}}{\Tr(\rho_\mathrm{in} \, {\pi_l}_\theta)} \right )
    \Tr \left (
    \sqrt{{\pi_l}_\theta} \mathcal{L}^{(z)}_l \sqrt{\rho_\mathrm{in}} \sqrt{\rho_\mathrm{in}} \mathcal{L}^{(z)}_l \sqrt{{\pi_l}_\theta}  
    \right ) \\
    &= \sum_{l\in[m]} \Tr \left (  \mathcal{L}^{(z)}_l {\pi_l}_\theta \mathcal{L}^{(z)}_l \rho_\mathrm{in} \right )
    = \sum_{j,k\in[n]}  z_j   \frac12 \Tr \left (  \sum_{l\in[m]} \left ( L_l^{\theta_j} {\pi_l}_\theta L_l^{\theta_k}
    + L_l^{\theta_k} {\pi_l}_\theta L_l^{\theta_j} \right ) \rho_\mathrm{in} \right ) z_k = z \, \tilde{Q}_\theta \, z^\top \, .
\end{aligned}
\end{equation}
In the second inequality above,
we have used the operator Cauchy-Schwarz,~$\vert \Tr (A^\dagger B) \vert^2 \leq \Tr(A^\dagger A) \Tr(B^\dagger B)$.
\end{proof}

\begin{corollary}[Corollary to Theorem~\ref{th:multipara}]
\label{corr:multipara}
    For an ensemble of probes~$\{q_k, \rho_k\}_{k=1}^p$,
    the effective CFI is upper-bounded by~$\tilde{Q}_\theta$
    of the ensemble average state, i.e.,
    \begin{equation}
        \CFI_\theta\big[\{q_k, \rho_k\}_{k=1}^p, \Pi_\theta\big] \preccurlyeq \tilde{Q}_\theta\big[  \big( \sum_{k=1}^p q_k \rho_k  \big) , \Pi_\theta\big] \, .
    \end{equation}
\end{corollary}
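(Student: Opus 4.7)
The plan is to combine three ingredients already present in the paper: the convex-sum structure of the ensemble CFI matrix, the single-probe operator bound from Theorem~\ref{th:multipara}, and the linearity of $\tilde{Q}_\theta$ in its probe-state argument. The inequality to be proved is a positive-semi-definite matrix inequality on $\mathbb{R}^{n\times n}$, so it suffices to establish it after testing against an arbitrary vector $z\in\mathbb{R}^n$, or equivalently to chain PSD inequalities that are preserved under convex combinations.

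First, I would invoke the ensemble additivity of the CFI matrix, namely
\begin{equation}
\CFI_\theta\big[\{q_k,\rho_k\}_{k=1}^p,\Pi_\theta\big] = \sum_{k=1}^p q_k\,\CFI_\theta[\rho_k,\Pi_\theta],
\end{equation}
which is the multi-parameter analogue of Eq.~\eqref{eq:additiveCFI} (also stated as Eq.~\eqref{eq:CFImatconvexsum} in Methods) and follows from the fact that the joint outcome distribution factorises over the ensemble label, with the weights $q_k$ independent of $\theta$. Next, for each individual probe $\rho_k$, Theorem~\ref{th:multipara} yields the PSD bound $\CFI_\theta[\rho_k,\Pi_\theta]\preccurlyeq \tilde{Q}_\theta[\rho_k,\Pi_\theta]$. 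Taking the convex combination with the non-negative weights $q_k$ preserves the PSD ordering, giving
\begin{equation}
\sum_{k=1}^p q_k\,\CFI_\theta[\rho_k,\Pi_\theta] \;\preccurlyeq\; \sum_{k=1}^p q_k\,\tilde{Q}_\theta[\rho_k,\Pi_\theta].
\end{equation}

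The final step is to recognise that $\tilde{Q}_\theta[\,\cdot\,,\Pi_\theta]$ is \emph{linear} in its first argument: inspection of Eq.~\eqref{eq:Qthetadef} shows that each entry $(\tilde Q_\theta)_{jk}$ is of the form $\Tr(M_{jk}\,\rho_\mathrm{in})$ for an operator $M_{jk}$ built entirely from $\{{\pi_l}_\theta, L_l^{\theta_j}, L_l^{\theta_k}\}$, and the SLDs themselves depend only on $\Pi_\theta$ and not on the probe. Hence $\sum_k q_k \tilde Q_\theta[\rho_k,\Pi_\theta] = \tilde Q_\theta\big[\sum_k q_k \rho_k,\Pi_\theta\big]$, and chaining with the previous display yields the claimed bound.

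There is no substantial obstacle: the only step that requires a moment's care is verifying that the SLD operators ${L_l^{\theta_j}}$ appearing inside $\tilde{Q}_\theta$ are determined by $\Pi_\theta$ alone, independent of which ensemble member is used, so that the probe-dependence of $\tilde{Q}_\theta$ is genuinely linear. Once that is noted, the argument collapses to \emph{convex combination of Theorem~\ref{th:multipara}, then linearity in $\rho_\mathrm{in}$}. The result also has a natural interpretation, namely that no ensemble probing strategy can exceed the operator bound attainable by the single ``averaged'' probe $\bar\rho \coloneqq \sum_k q_k\rho_k$, which is consistent with the single-parameter convexity argument of Lemma~\ref{lemma:singleoptimalprobe} while still leaving room for ensemble probes to outperform any individual $\rho_k$ because of the matrix-inversion step hidden inside the scalar CCRB $\Tr(\CFI_\theta^{-1})$.
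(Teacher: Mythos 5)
Your proposal is correct and follows exactly the paper's own proof: decompose the ensemble CFI matrix as the convex sum $\sum_k q_k\,\CFI_\theta[\rho_k,\Pi_\theta]$, apply Theorem~\ref{th:multipara} term by term, and use the linearity of $\tilde{Q}_\theta$ in its state argument to re-assemble the right-hand side as $\tilde{Q}_\theta$ of the average state. Your additional remark that the matrix-inversion in the CCRB is why ensembles can still help is a nice observation consistent with the paper's discussion, but the core argument is identical.
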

\begin{proof}
    The proof follows directly from
    the convexity of the CFIM and
    the linearity of~$Q_\theta[\rho_\mathrm{in}, \Pi_\theta]$
    (defined in Eq.~\eqref{eq:Qthetadef})
    in its state argument~$\rho_\mathrm{in}$.
    In particular, for each probe state~$\rho_k$,
    it follows from Theorem~\ref{th:multipara} that
    \begin{equation}
        \CFI_\theta [\rho_k, \Pi_\theta] \preccurlyeq \tilde{Q}_\theta [\rho_k, \Pi_\theta] \, ,
    \end{equation}
    so that for the CFIM of the ensemble, we have
    \begin{equation}
        \CFI_\theta\big[\{q_k, \rho_k\}_{k=1}^p, \Pi_\theta\big]
        = \sum_{k=1}^p q_k \, \CFI_\theta [\rho_k, \Pi_\theta]
        \preccurlyeq  \sum_{k=1}^p q_k \tilde{Q}_\theta [\rho_k, \Pi_\theta] = \tilde{Q}_\theta\big[  \big( \sum_{k=1}^p q_k \rho_k  \big) , \Pi_\theta\big] \, ,
    \end{equation}
    where the first equality holds because the probabilities~$\{q_k\}$
    are independent of the parameters a priori~\cite{Pezze18}.
\end{proof}

\subsection{Multi-parameter DQFI Matrix}

Theorem~\ref{th:multipara}
and its corollary prove that
for estimating~$n$ parameters
using any ensemble,
the effective CFIM is
upper-bounded as
$\CFI_\theta \preccurlyeq \tilde{Q}_\theta$,
where~$\tilde{Q}_\theta$
is computed for the ensemble average state.
By simply noting that for any quantum state~$\rho$,
it holds that~$\rho \preccurlyeq \mathds{1}_d$,
we obtain the multi-parameter trace DQFI matrix,
\begin{equation}
    \tilde{Q}_\theta(\rho, \Pi_\theta) \preccurlyeq \tilde{Q}_\theta(\mathds{1}_d, \Pi_\theta) \eqqcolon \QFItr \left [ \Pi_\theta \right ] \, .
\end{equation}

\begin{definition}
Define the multi-parameter
trace DQFI~$\QFItr \in \mathbb{R}^{n \times n}$
of a POVM~$\Pi_\theta\equiv\{{\pi_j}_\theta\}_{j\in[m]}$
with respect to parameters~$\theta \equiv (\theta_1, \dots, \theta_n)$
as
\begin{equation}
    \left ( \QFItr \left [ \Pi_\theta \right ] \right )_{jk} \coloneqq
     \frac{1}{2} \Tr \left [ \sum_{l \in [m]} \left ( L_l^{\theta_j} {\pi_l}_\theta L_l^{\theta_k}  +  L_l^{\theta_k} {\pi_l}_\theta L_l^{\theta_j} \right )\right ]  \, .
\end{equation}
\label{def:QFImulti}
\end{definition}
\noindent
Clearly,~$\QFItr \left [ \Pi_\theta \right ]$
reduces to main-text Def.~\ref{def:QFI1}
in the single-parameter case.
On the other hand,
if we try to extend
the single-parameter spectral DQFI~$\QFIsp$
(main-text Def.~\ref{def:QFI2})
to multi-parameters by
defining~$\mathcal{Q}_\theta \in\mathbb{R}^{n \times n}$
as
\begin{equation}
        {\mathcal{Q}_\theta}_{jk} \coloneqq \frac{1}{2} \left \Vert \sum_{l \in [m]} \left ( L^{l}_{\theta_j} \pi_l L^{l}_{\theta_k}  +  L^l_{\theta_k} \pi_l L^l_{\theta_j} \right ) \right \Vert_{\mathrm{sp}}  \, ,
\end{equation}
then this implies
the element-wise
inequality~$\tilde{Q}_\theta
\leq_{\mathrm{elem}} \mathcal{Q}_\theta$
but does not guarantee~$ \tilde{Q}_\theta\preccurlyeq \mathcal{Q}_\theta$.

\subsection{Multi-parameter QCRB}

The multi-parameter trace DQFI
from Def.~\ref{def:QFImulti}
inherits useful properties
like convexity and connection to a distance metric
from its single-parameter counterpart in Def.~\ref{def:QFI1}
(we don't prove these mathematically),
but also shares the non-attainability
discussed below Def.~\ref{def:QFI1}.
A tighter or more attainable
precision bound can be constructed
by instead applying the QCRB approach~\cite{Helstrom1967,Helstrom1968,Helstrom1969, Helstrom1974}
directly to the operator~$\tilde{Q}_\theta$.
In this setting,
the lowest attainable
weighted sum of variances and covariances
is given by the CCRB minimised over probing strategies,
\begin{equation}
\label{eq:QCRBdef}
   \mathcal{C}^{\mathrm{CCRB}}_*[\Pi_\theta, W] \coloneqq \min_{\substack{V=V^\top \in \mathbb{R}^{n \times n} \, , \\
    \rho_k = \rho_k^\dagger  \in \mathbb{C}^{d \times d}, \, \Tr(\rho_k) = 1 \, ,\\ \rho_k \succcurlyeq 0 , \,  \sum q_k = 1}} \Tr (W V) \, {\big \vert} \,  V \succcurlyeq  \CFI_\theta\big(\{q_k, \rho_k\}_{k=1}^p \big)^{-1}  \, ,
\end{equation}
where~$W \in \mathbb{R}^{n \times n}$
is the symmetric, positive semi-definite weight matrix
and~$V$ is the covariance matrix
of parameter estimates.
Despite the minimisation
in Eq.~\eqref{eq:QCRBdef}
not being a semi-definite program,
we can leverage Theorem~\ref{th:multipara}
to obtain a lower bound to~$\mathcal{C}^{\mathrm{CCRB}}_*$
that is tighter than
the trace approach.
Note that the weighted trace QCRB
resulting from Def.~\ref{def:QFImulti},
denoted~$\mathcal{C}_{\Tr}^{\mathrm{QCRB}}$,
is
\begin{equation}
\label{eq:TrQCRBWtSupp}
    \mathcal{C}_{\Tr}^{\mathrm{QCRB}}[\Pi_\theta, W] \coloneqq \Tr\left ( W \mathcal{J}_{\Tr,\theta}^{-1} \right ) \, .
\end{equation}

From Theorem~\ref{th:multipara}
and Corollary~\ref{corr:multipara},
we have~$\CFI_\theta\big[\{q_k, \rho_k\}_{k=1}^p, \Pi_\theta\big] \preccurlyeq \tilde{Q}_\theta\big[  \big( \sum_{k=1}^p q_k \rho_k  \big) , \Pi_\theta\big]$,
leading to
\begin{equation}
    \tilde{Q}_\theta\big[  \big( \sum_{k=1}^p q_k \rho_k  \big) , \Pi_\theta\big]^{-1} \preccurlyeq \CFI_\theta\big[\{q_k, \rho_k\}_{k=1}^p, \Pi_\theta\big]^{-1}  \, .
\end{equation}
We can therefore define the weighted
spectral QCRB as
\begin{equation}
\label{eq:MultiQCRBdef}
    \mathcal{C}^{\mathrm{QCRB}}_\Vert[\Pi_\theta, W] \coloneqq \min_{\substack{V=V^\top \in \mathbb{R}^{n \times n} \, , \\
    \rho = \rho^\dagger  \in \mathbb{C}^{d \times d}, \, \Tr(\rho) = 1 \, ,\\ \rho \succcurlyeq 0}} \Tr (W V) \, \vert \,  V \succcurlyeq \left ( \tilde{Q}_\theta[\rho, \Pi_\theta] \right )^{-1}  \, .
\end{equation}
Clearly, any candidate~$V$ that is feasible for
the minimisation in Eq.~\eqref{eq:QCRBdef}
is also feasible for the minimisation
in Eq.~\eqref{eq:MultiQCRBdef},
because~$V \succcurlyeq  \CFI_\theta^{-1}
\succcurlyeq (\tilde{Q}_\theta)^{-1}$.
This proves that
\begin{equation}
     \mathcal{C}^{\mathrm{QCRB}}_\Vert[\Pi_\theta, W] \leq \mathcal{C}^{\mathrm{CCRB}}_*[\Pi_\theta, W]
\end{equation}
and that Eq.~\eqref{eq:MultiQCRBdef}
presents a valid lower bound
to the minimum attainable uncertainties of estimates.
Finally,
due to the linearity of~$\tilde{Q}_\theta$,
the minimisation
in Eq.~\eqref{eq:MultiQCRBdef}
can be recast into the semi-definite program,
\begin{equation}
\label{eq:wtspectralSDPSupp}
    \begin{array}{cl}
        \mathrm{minimise} \quad \quad &\Tr(W V) \\
        {\substack{V=V^\top \in \mathbb{R}^{n \times n} \, , \\
    \rho = \rho^\dagger  \in \mathbb{C}^{d \times d}, \, \Tr(\rho) = 1 \, ,\\ \rho \succcurlyeq 0}}\\
    \mathrm{subject\; to} \quad \quad &\begin{pmatrix}
        V & \mathds{1}_n \\
        \mathds{1}_n & \tilde{Q}_\theta[\rho, \Pi_\theta]
    \end{pmatrix} \succcurlyeq 0 \, \,  ,
    \end{array}
\end{equation}
which can be solved efficiently
using standard numerical solvers
like YALMIP or CVX~\cite{Lorcan21}.
In the single-parameter case,
the minimisation
in Eq.~\eqref{eq:MultiQCRBdef} reduces to
solving
\begin{equation}
    \min_{\rho} \frac{1}{\tilde{Q}_\theta[\rho, \Pi_\theta]} =
    \min_{\rho} \frac{1}{\Tr \left [ \sum_{l \in [m]} {L_l}_\theta {\pi_l}_\theta {L_l}_\theta  \rho \right ]} \,
    = \frac{1}{\max_{\rho} \Tr \big [ Q_\theta  \rho \big ]}
    = \frac1{\QFIsp\left [ \Pi_\theta\right ]} \, ,
\end{equation}
as expected.
Therefore, the multi-parameter QCRB~$\mathcal{C}^{\mathrm{QCRB}}_\Vert$
constitutes an extension of
the spectral DQFI~$\QFIsp$
to multiple parameters.

\begin{example}
\label{eg:bitphaseflip}
    Consider an imperfect projective measurement along
    polar angle~$\theta = \pi/4$
    and azimuthal angle~$\phi=0$ that is subject
    to independent bit-flip errors (rate~$p_1$)
    and phase-flip errors (rate~$p_2$).
    The noiseless measurement at~$p_1=p_2=0$
    is a projective~$(X+Z)/\sqrt{2}$ measurement.
    The POVM~$\Pi_p \equiv \{{\pi_1}_p, {\pi_2}_p\}$ parametrised
    by~$p \equiv (p_1, p_2)$ corresponding to this measurement is
    \begin{equation}
    \begin{split}
        {\pi_1}_p &= \frac{1}{2\sqrt{2}} \begin{pmatrix}
              \sqrt{2} + 1 - 2  p_1  & 1-2 p_2 \\
              1-2 p_2 & \sqrt{2} - 1 + 2 p_1
        \end{pmatrix} \\
        \text{and} \quad  {\pi_2}_p &= \frac{1}{2\sqrt{2}}  \begin{pmatrix}
              \sqrt{2} - 1 + 2 p_1  & -1+2 p_2 \\
              -1+2 p_2 &  \sqrt{2} + 1 - 2  p_1
        \end{pmatrix} \, .
    \end{split}
    \end{equation}
    The trace QCRB and the spectral QCRB for this problem are
    \begin{equation}
        \mathcal{C}^{\mathrm{QCRB}}_{\Tr} = \frac14 + \frac{(1-p_1)p_1}{2} + \frac{(1-p_2)(p_2)}{2} \quad \text{and} \quad \mathcal{C}^{\mathrm{QCRB}}_\Vert = \frac12 + (1-p_1)p_1 + (1-p_2)(p_2) = 2 \mathcal{C}^{\mathrm{QCRB}}_{\Tr} \,
    \end{equation}
    and are shown in main-text Fig.~\ref{fig:tightQFIexamples}(d)
    as blue and green surfaces, respectively.
    Main-text Fig.~\ref{fig:tightQFIexamples}(d) also depicts
    the true minimum of total MSE (golden surface),
    i.e., the tight bound,
    while demonstrating the
    trace and spectral QCRB to be unattainable
    (except at the extreme values of~$p_1$ and~$p_2$).
    The Gill-Massar QCRB~$\Tr\left (\sqrt{\mathcal{J}_{\Tr, p}^{-1}}\right)^2$
    for this problem is also unattainable,
    but the tight bound equals twice the Gill-Massar QCRB,
    i.e.,
    \begin{equation}
        \mathcal{C}^\mathrm{CCRB}_* = 2 \Tr\left (\sqrt{\mathcal{J}_{\Tr, p}^{-1}}\right)^2 \, .
    \end{equation}
    The simple strategy of probing with an
    (optimally weighed) ensemble of~$\ket{0}$
    and~$\ket{1}$ (grey surface) is highly effective,
    though sub-optimal.
\end{example}

\section{Comparison with Total QFI and Probe Incompatibility Effect}
\label{sec:SupMatMultiComparison}

In this section, we compare our multi-parameter QCRBs
to existing techniques
for multi-parameter channel estimation~\cite{Albarelli2022}.
The comparison reveals that the
detector-based approach is often more informative
than generic channel techniques, but also highlights
features of multi-parameter detector estimation
that require further exploration to be
fully understood.

In Example~\ref{eg:example2multi} of the main text,
the optimal probe states~($\ket{0},\ket{1}$)
for different parameters~($\theta_1,\theta_2$) were orthogonal,
so that probing with~$\ket{0}$
extracted no~$\theta_2$-information
and with~$\ket{1}$
extracted no~$\theta_1$-information.
More generally, the optimal probes for
different parameters~$\theta_j$
and~$\theta_k$ could be different,
leading to an incompatibility---termed probe
incompatibility~\cite{Albarelli2022}---in the multi-parameter case.
This suggests a straight-forward
but generally sub-optimal estimation strategy:
probing with an ensemble,~$\{p_j, \rho_j^*\}$,
of the single-parameter-optimal probes~$\rho_j^*$
with mixing probabilities~$\{p_j\}$.
More simply, we may split the total number
of detector uses into~$n$ and address the~$n$ single-parameter
problems separately, estimating only~$\theta_i$
from a fraction~$p_i$ of the samples
using optimal probe~$\rho_i^*$.
Let us call this the sequential scheme,
shortened to~$\mathrm{seq}$ below.
For this scheme,~$\mathrm{MSE}^{\mathrm{seq}}[\theta_j] \geq 1/(p_j \mathcal{J}_{j})$,
where~$\mathcal{J}_{j}$ denotes
the single-parameter DQFI for~$\theta_j$,
so the total MSE is bounded from below
by
\begin{equation}
    \sum_{j\in[n]} \mathrm{MSE}^\mathrm{seq}[\theta_j] \geq \sum_{j\in[n]} 1/(p_j \mathcal{J}_{j}) \, .
\end{equation}
If the QFI used is tight for
all single-parameter problems,
this lower bound is attainable.
Naturally, this lower bound
upper-bounds the total MSE of
the optimal simultaneous strategy,
\begin{equation}
\label{eq:simultsequentialcomp}
    \sum_{j\in[n]} \frac{1}{p_j \mathcal{J}_j} \geq
    \min \sum_{j\in [n]} \mathrm{MSE}[\theta_j]  \, .
\end{equation}
Therefore,
a comparison of the minimum
sequentially-achievable MSE
with the minimum simultaneously-achievable MSE
reveals the extent of probe incompatibility
in the problem~\cite{Albarelli2022}.

Interestingly, a generalisation of the quantity~$\sum_j p_j \mathcal{J}_j$ from probabilities~$p_j$
to positive weights~$w_j>0$ (not necessarily
normalised to~1) leads to the total QFI~$\mathcal{J}^\mathrm{tot} \coloneqq \sum_j w_j \mathcal{J}_j$~\cite{Albarelli2022},
which can yield a lower bound to the minimum
simultaneously-achievable MSE in
Eq.~\eqref{eq:simultsequentialcomp}. In Ref.~\cite{Albarelli2022},
the lower bound~$n^2/\mathcal{J}^\mathrm{tot}$
was introduced for $n$-parameter channel estimation
with an SDP solution (Appendix~F in~\cite{Albarelli2022}).
By adopting the channel representation of measurements
(discussed in Supplemental Material~\ref{sec:tightboundsdp}),
this bound, which we refer to as the total QFI QCRB,
may be applied to multi-parameter detector estimation.
For simplicity, here we restrict our comparisons to
total MSE, for which~$w_j=1$.
We compare the total QFI QCRB to our trace
and spectral detector QCRBs for 10,000 randomly generated
two-parameter, two-outcome, qubit detector models in Figs.~\ref{fig:multiparacomb}
\&~\ref{fig:multiparacomp}.
The comparison in Fig.~\ref{fig:multiparacomp} reveals
the trace and spectral QCRBs to often
be tighter than the channel-based total QFI QCRB~\cite{Albarelli2022},
though no absolute hierarchy exists.
The fact that the trace QCRB can outperform
the total QFI QCRB is also surprising,
especially given that the former is
analytically-solved in closed-form
whereas the latter requires numerical optimisation.

\begin{figure}[htb]
    \centering
    \includegraphics[width=0.92\linewidth]{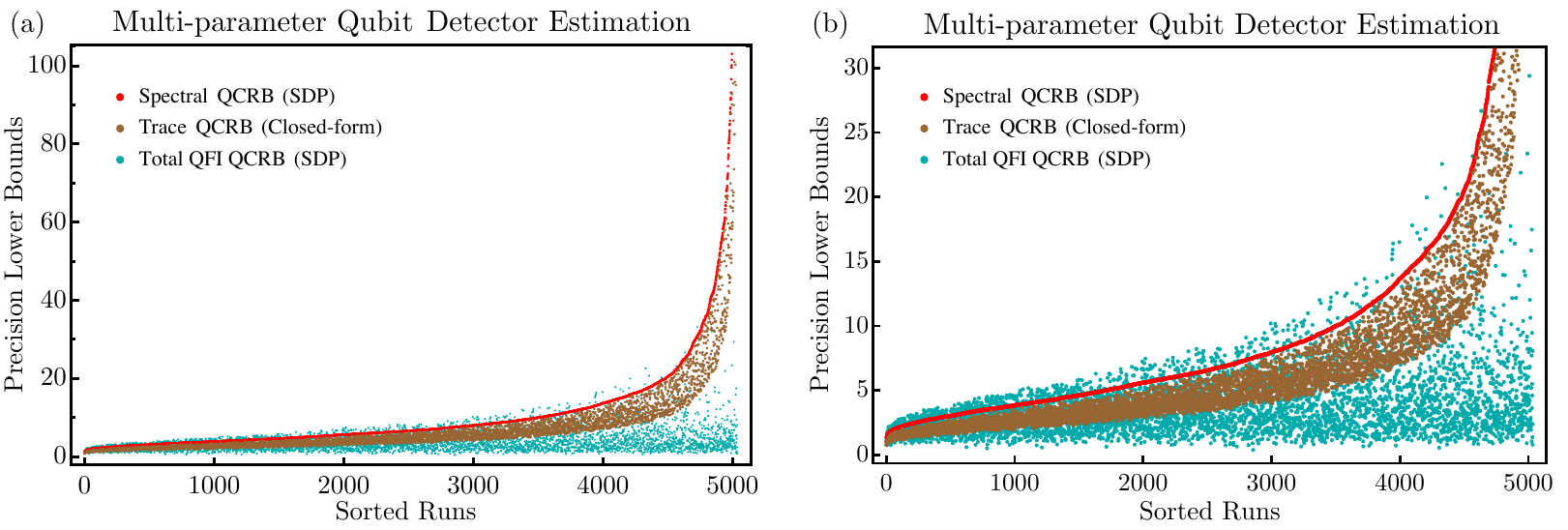}
    \caption{Comparison of
    the total QFI QCRB and the DQFI QCRBs
    for two-parameter, on-off,
    qubit detector estimation.
    (a) For two-parameter estimation across~10,000
    randomly-generated qubit measurements,
    we find that
    the spectral QCRB (red) is typically tighter than
    the total QFI QCRB (light blue), but not always.
    (b) Zoomed-in version of (a).
    The scatter points are sorted in the increasing order of
    the spectral QCRB.}
    \label{fig:multiparacomb}
\end{figure}
\begin{figure}[htb]
    \centering
    \includegraphics[width=0.92\linewidth]{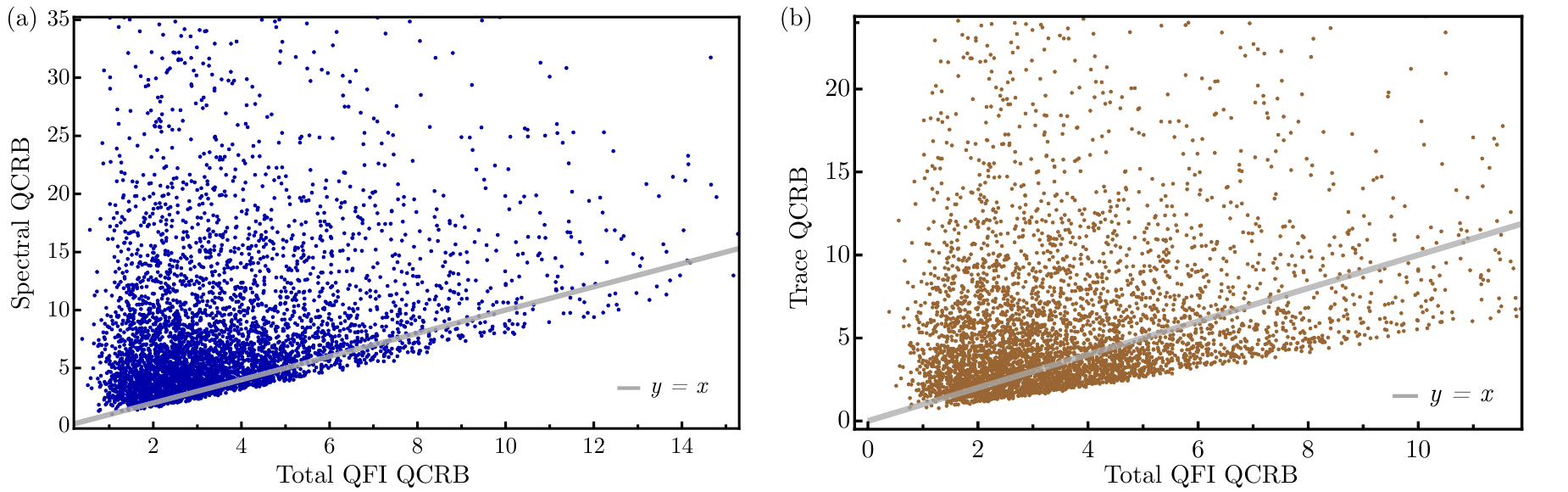}
    \caption{One-to-one comparison of
    the spectral QCRB (a)
    and the trace QCRB (b)
    versus the total QFI QCRB
    for two-parameter estimation from~10,000
    random, two-outcome, qubit detectors.
    (a) The spectral QCRB is often tighter
    than the total QFI QCRB, evidenced by a majority
    of the points lying above the~$y=x$ line.
    (b) The trace QCRB (closed-form) can also be
    more informative than
    the total QFI QCRB (SDP).}
    \label{fig:multiparacomp}
\end{figure}

The comparison with the total QFI QCRB
also sheds light on quantum aspects of detector estimation,
like probe incompatibility and measurement incompatibility.
These effects, which are well-studied in state~\cite{Candeloro2024}
and channel literature~\cite{Albarelli2022}, need to be
fully understood to assess the
extent and practicality of quantum
enhancement in detector estimation.
Notably, in multi-parameter detector estimation,
incompatibility may arise from a number of sources,
not only probe incompatibility. For instance,
the embedding of parameters across different measurement
outcomes can itself be a source of incompatibility,
even if the optimal probes are the same;
this is explored in the following
example.

\begin{example}
\label{eg:ProbeIncompat}
    Consider the following four-outcome qubit measurement
    parametrised by~$(\theta_1, \theta_2)$ (for~$0 \leq \theta_i \leq \nicefrac{1}{2}$):
    \begin{equation}
    \pi_1 = \begin{pmatrix} \theta_1 & 0 \\ 0 & \frac12-\theta_1\end{pmatrix} \, , \, \,
    \pi_2 = \begin{pmatrix} \frac12-\theta_1 & 0 \\ 0 & \theta_1\end{pmatrix} \, , \, \,
    \pi_3 = \begin{pmatrix} \theta_2 & 0 \\ 0 & \theta_2\end{pmatrix} \, , \, \,
    \pi_4 = \begin{pmatrix} \frac12-\theta_2 & 0 \\ 0 & \frac12-\theta_2\end{pmatrix} \, .
    \end{equation}
    The spectral QCRB for this problem
    equals~$\mathcal{C}^{\mathrm{QCRB}}_\Vert = \sum_j \theta_j(1-2\theta_j) = 2 \, \mathcal{C}^{\mathrm{QCRB}}_{\Tr}$.
    The spectral bound is tight and is attained
    by optimal probes~$\ket{0}$ or~$\ket{1}$.
    However, note that~$\pi_1$ and~$\pi_2$ ($\pi_3$ and~$\pi_4$) depend solely on~$\theta_1$ ($\theta_2$),
    and~$\pi_1+\pi_2 = \pi_3 + \pi_4 = \nicefrac{1}{2} \mathds{1}_2$. This means that
    regardless of the input probe,
    outcomes~1 \&~2 click with probability half,
    as do outcomes~3 \&~4.
    Therefore, any sequential estimation strategy
    that uses a fixed fraction of samples to
    estimate~$\theta_1$ and~$\theta_2$ separately
    will be suboptimal, because it wastes half the number
    of samples. On the other hand,
    consider
    the two single-parameter problems
    obtained by fixing one of the parameters:
    they
    are both equivalent to main-text Example~\ref{eg:bitflipdet}
    and share the same
    single-parameter optimal probes,
    either~$\ket{0}$ or~$\ket{1}$, for both parameters.
    Therefore, there is no probe incompatibility in this problem.
    As expected, the channel-based SDP, which accounts for
    probe incompatibility, is generally
    less tight than the spectral QCRB for this example,
    as shown in Fig.~\ref{fig:multiparacombExample} below.
\end{example}

In fact, Fig.~\ref{fig:multiparacombExample}
reveals two interesting features worth further investigation.
First, in Fig.~\ref{fig:multiparacombExample}(a),
the total QFI QCRB can be smaller than the trace QCRB,
meaning it is more than a factor of 2 away from the
tight bound. In contrast, in Ref.~\cite{Albarelli2022},
it was reported that
measurement incompatibility in channel estimation
can at most double the attainable MSE
predicted by probe incompatibility alone.
This prompts a
deeper analysis of various incompatibility sources
in detector estimation.
Second, Fig.~\ref{fig:multiparacombExample}(b) seems to
imply that the total QFI QCRB can be zero, which typically
indicates that
estimation is not feasible, whereas the simultaneous optimal
strategy can still estimate both parameters.
This is understood as follows:
whenever one of the parameters (say~$\theta_j$)
is close (but not equal) to
either~0 or~1/2,
the corresponding single-parameter DQFI~$\QFI_{\Vert,\,j} = 1/(\theta_j(1-2\theta_j))$
can be arbitrarily large,
making the total QFI~$\mathcal{J}^\mathrm{tot} = \QFI_{\Vert,\,1}
+\QFI_{\Vert,\,2}$ arbitrarily large
and the precision bound~$4/\mathcal{J}^\mathrm{tot}$
arbitrarily small. However, for simultaneous
multi-parameter estimation,
the bound~$1/\QFI_{\Vert,\,1}+1/\QFI_{\Vert,\,2}$
is more relevant
and tighter (Eq.~(6)~in Ref.~\cite{Albarelli2022}), and this bound
can be non-vanishing even if one of the parameters
is close to~0 or~1/2 as long as the other
parameter is not.

The above observations highlight
the need for further exploration of
the fundamental aspects of multi-parameter detector estimation.
In this work, however, we do not go into
further depth
on the various incompatibility effects in
detector estimation.
Nonetheless, we believe that this topic
should be studied in future work,
as it furthers our understanding of the merits
and limitations of simultaneous
estimation for multi-parameter detector models.

\begin{figure}[htb]
    \centering
    \includegraphics[width=0.91\linewidth]{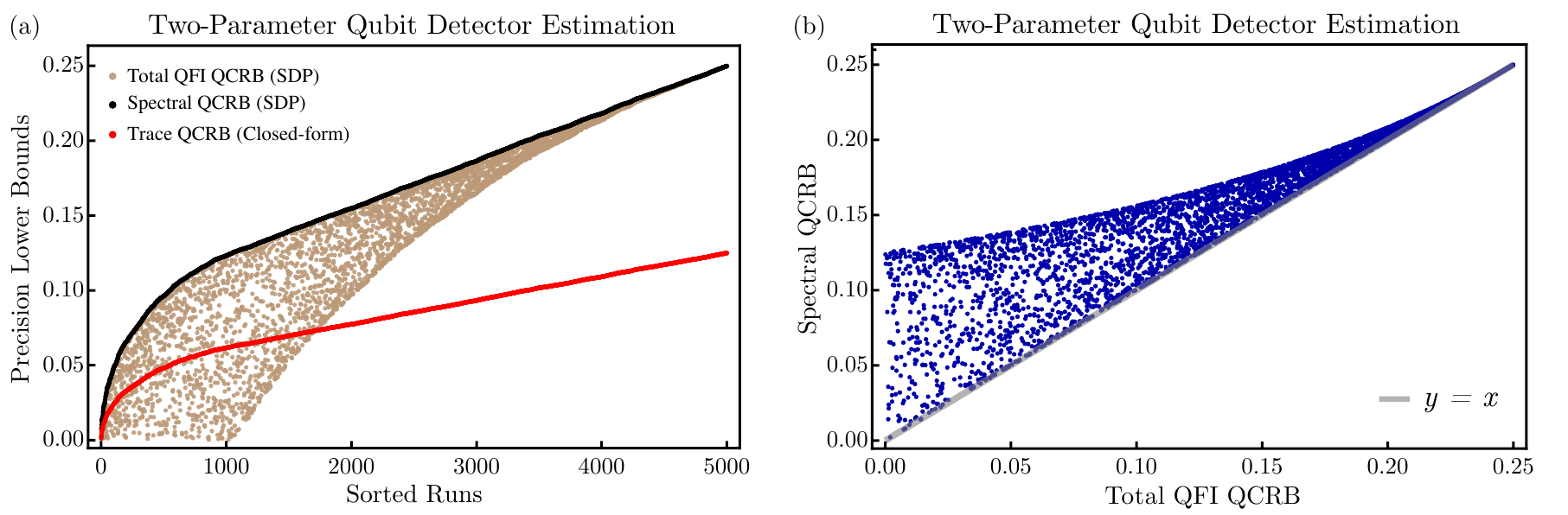}
    \caption{Comparison of
    the total QFI QCRB and the DQFI QCRBs
    for a two-parameter detector model
    without probe incompatibility
    (Example~\ref{eg:ProbeIncompat}).
    (a)~For estimating two parameters~$0 < \theta_j < 1/2$,
    we find that the spectral QCRB (black) is always tight
    and equals twice the trace QCRB (red). The total QFI QCRB
    (light brown) is generally less tight
    than the spectral QCRB, and sometimes less tight than the trace QCRB as well.
    (b)~A direct comparison of the spectral QCRB versus
    the total QFI QCRB shows the former to
    be generally tighter (all the points lie
    above the~$y=x$ line), whereas
    the latter could be close to zero
    even if simultaneous estimation is feasible.
    In~(a), the scatter points are sorted
    in increasing order of the spectral QCRB.}
    \label{fig:multiparacombExample}
\end{figure}

\section{Non-additivity of DQFI \& scaling with copies}
\label{supp:nonadditive}

Suppose we have two identical copies
of the same detector.
In Eq.~\eqref{eq:additiveCFI} in Methods,
we showed that the CFI of detector outcomes
is additive under tensoring,
from which it follows
\begin{equation}
    \CFI_\theta[ \Pi \otimes \Pi \vert \rho_1 \otimes \rho_2] = \CFI_\theta[\Pi \vert \rho_1 ] + \CFI_\theta[\Pi \vert \rho_2 ] \, ,
\end{equation}
implying~$\max_{\rho_1, \,  \rho_2} \CFI_\theta [\Pi^{\otimes 2} \vert \rho_1 \otimes \rho_2]= 2 {\CFI_\theta}_\mathrm{max}[\Pi]$.
However,~${\CFI_\theta}_\mathrm{max}[\Pi^{\otimes 2}]$
is not additive and is greater
than~$2 {\CFI_\theta}_\mathrm{max}[\Pi]$
in general.
And, to attain the maximum two-copy
CFI, the optimal probe state
is in general
a bipartite entangled state.
Importantly,
it is not necessary to have two identical
copies of the detector---the same detector
can be used twice by time-delaying one arm
of the entangled state.

For~$m$-outcome measurements,
the two-copy measurement
can be written as an~$m^2$-outcome measurement
\begin{equation}
     \Pi^{\otimes 2} \equiv \left \{ \pi_j \otimes \pi_k \right \}_{j,k\in[m]} \, ,
\end{equation}
and, clearly,
the SLD operators for~$\pi_j\otimes \pi_k$
are~$\tilde{L}_{jk} = L_j \otimes \mathds{1}_d + \mathds{1}_d \otimes L_k$.
The two-copy DQFIs~${\mathcal{J}_{\mathrm{Tr},\theta}^{(2)}}$ \&~${\mathcal{J}_{\vert \vert, \theta}^{(2)}}$
are computed by
first
evaluating~$\sum_{j,k} \tilde{L}_{jk} \left ( \pi_j \otimes \pi_k \right ) \tilde{L}_{jk}$.
By explicit calculation,
\begin{equation}
\label{eq:Qmatrix2copy}
    Q^{(2)} \coloneqq \sum_{j,k} \tilde{L}_{jk} \left ( \pi_j \otimes \pi_k \right ) \tilde{L}_{jk}
    = Q \otimes \mathds{1}_d + \mathds{1}_d \otimes Q + A \otimes A^\dagger + A^\dagger \otimes A   \, ,
\end{equation}
where~$Q = \sum_{j} L_j \pi_j L_j$
and~$A = \sum_{j} L_j \pi_j$.
The two DQFIs,~${\mathcal{J}_{\mathrm{Tr},\theta}^{(2)}}$ \&~${\mathcal{J}_{\vert \vert, \theta}^{(2)}}$,
for the two-copy measurement
are then~$\Tr[Q^{(2)}]$
and~$\Vert Q^{(2)} \Vert_\mathrm{sp}^2$,
respectively.
By straight-forward
computation,
we find
\begin{equation}
\label{eq:twocopytrQFI}
    {\mathcal{J}_{\mathrm{Tr},\theta}^{(2)}} = 2 d \, \Tr[Q] =2 d \, \QFItr
\end{equation}
because~$\Tr[A \otimes A^\dagger + A^\dagger \otimes A] = 2 \vert \Tr[A]\vert^2$
whereas~$\Tr[A] = \sum_j \Tr[\partial_\theta \pi_j]= \partial_\theta \Tr[\sum_j \pi_j] = 0$.
As~$d\geq 2$, it is clear that
the trace DQFI
satisfies~${\mathcal{J}_{\mathrm{Tr},\theta}^{(2)}} \geq 2 \QFItr$
and is thus not additive.
The same relationship holds for
the spectral DQFI,
as we prove in the lemma below.
Thus the spectral DQFI is not additive either,
representing the superiority
of an entangled probing process.

\begin{lemma}
\label{lemm:twoQFInonadd}
For the two-copy spectral DQFI~${\mathcal{J}_{\vert \vert, \theta}^{(2)}}$
defined above,
\begin{equation}
    {\mathcal{J}_{\vert \vert, \theta}^{(2)}} \geq 2 \, \QFIsp  \, .
\end{equation}
\end{lemma}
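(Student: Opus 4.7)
The plan is to exhibit a probe state on the two-copy Hilbert space whose expectation value against $Q^{(2)}$ is at least $2\QFIsp$, and then invoke the variational characterisation of the spectral radius. Since $Q^{(2)}$ is positive semi-definite (by the same Cauchy-Schwarz style argument used in the single-copy setting), we have $\|Q^{(2)}\|_{\mathrm{sp}}^2 = \max_{\ket{\Psi}} \bra{\Psi} Q^{(2)} \ket{\Psi}$, so it suffices to construct one feasible $\ket{\Psi}$ achieving this lower bound.

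The natural ansatz is the product state $\ket{\Psi} = \ket{\psi}\otimes\ket{\psi}$, where $\ket{\psi}$ is the top eigenvector of the single-copy operator $Q$, so that $\bra{\psi} Q \ket{\psi} = \QFIsp$. Substituting into Eq.~\eqref{eq:Qmatrix2copy} and using the tensor-product factorisation of expectation values, the four contributions split as
\begin{equation*}
\bra{\Psi} Q^{(2)} \ket{\Psi} = 2\, \bra{\psi} Q \ket{\psi} + \bra{\psi} A \ket{\psi}\bra{\psi} A^\dagger \ket{\psi} + \bra{\psi} A^\dagger \ket{\psi}\bra{\psi} A \ket{\psi} = 2\,\QFIsp + 2\,\bigl|\bra{\psi} A \ket{\psi}\bigr|^2 .
\end{equation*}
The cross term $2|\bra{\psi}A\ket{\psi}|^2$ is manifestly non-negative, so $\bra{\Psi} Q^{(2)} \ket{\Psi} \geq 2\,\QFIsp$, and combining with the variational bound gives $\mathcal{J}^{(2)}_{\Vert,\theta} \geq 2\,\QFIsp$, as claimed.

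There is essentially no genuine obstacle in this direction: the ansatz $\ket{\psi}^{\otimes 2}$ already witnesses the inequality, and positivity of $Q^{(2)}$ keeps the spectral radius aligned with the largest diagonal expectation value. The more delicate question, which this proof does not address, is when the inequality is \emph{strict} -- i.e., whether the cross term $|\bra{\psi} A \ket{\psi}|^2$ is nonzero for the product ansatz, or, more importantly, whether an entangled two-copy probe can push the expectation value strictly beyond $2\QFIsp + 2|\bra{\psi}A\ket{\psi}|^2$. That would be the genuinely interesting step for establishing an entanglement advantage and presumably requires constructing explicit entangled eigenstates of $Q^{(2)}$, but it is outside the scope of the stated lemma.
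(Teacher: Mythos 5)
Your proof is correct and follows essentially the same route as the paper: both take the product ansatz $\ket{\psi}\otimes\ket{\psi}$ built from the top eigenvector of $Q$, compute $\bra{\psi}^{\otimes 2} Q^{(2)} \ket{\psi}^{\otimes 2} = 2\,\QFIsp + 2\,\vert\bra{\psi}A\ket{\psi}\vert^2 \geq 2\,\QFIsp$, and conclude via the variational characterisation of the largest eigenvalue. No gaps.
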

\begin{proof}
    Call a normalised eigenvector of~$Q$
    corresponding to its
    largest eigenvalue~$\ket{\psi}$.
    For this pure state,~$\bra{\psi} Q \ket{\psi} = \QFIsp$ by definition.
    Now, by computing~$\bra{\psi}^{\otimes 2} Q^{(2)} \ket{\psi}^{\otimes 2}$,
    we find
    \begin{equation}
        \begin{split}
        \bra{\psi}^{\otimes 2} Q^{(2)} \ket{\psi}^{\otimes 2} &= 2 \QFIsp + \bra{\psi}^{\otimes 2} (A \otimes A^\dagger + A^\dagger \otimes A) \ket{\psi}^{\otimes 2}\\
        &= 2 \QFIsp + 2 \bra{\psi} A \ket{\psi} \bra{\psi} A^\dagger \ket{\psi}\\
        &= 2 \QFIsp + 2 \vert \bra{\psi} A \ket{\psi}\vert^2 \geq 2 \QFIsp   \, .
        \end{split}
    \end{equation}
    And, by definition,
    \begin{equation}
        {\mathcal{J}_{\vert \vert, \theta}^{(2)}} = \max_{\substack{\ket{\phi'}\in\mathbb{C}^4,\\
        \vert\braket{\phi'}\vert^2=1}}
        \bra{\phi'} Q^{(2)} \ket{\phi'} \geq \bra{\psi}^{\otimes 2} Q^{(2)} \ket{\psi}^{\otimes 2} \, ,
    \end{equation}
    thus proving~${\mathcal{J}_{\vert \vert, \theta}^{(2)}} \geq 2 \QFIsp$
    as required.
\end{proof}

We also provide an upper
bound to the non-additivity
of the spectral DQFI
through the following
sequence of inequalities:
\begin{equation}
\label{eq:twocopyspecQFI}
\begin{split}
    {\mathcal{J}_{\vert \vert, \theta}^{(2)}} &= \max \mathrm{eig} \left [ Q \otimes \mathds{1}_d + \mathds{1}_d \otimes Q + A \otimes A^\dagger + A^\dagger \otimes A \right ]  \\
    & \leq  \max \mathrm{eig} \left [ Q \otimes \mathds{1}_d + \mathds{1}_d \otimes Q \right ] + \max \mathrm{eig} \left [  A \otimes A^\dagger + A^\dagger \otimes A \right ] \\
    & \leq 2 \max \mathrm{eig} \left [ Q \right ] + 2 \max \mathrm{eig} \left [ A^\dagger A \right ] \\
    &= 2 \,  \QFIsp + 2 \max \mathrm{eig} [ \, \sum_{j,k} \pi_k L_k L_j \pi_j  ] \, .
\end{split}
\end{equation}
We can thus sandwich
the two-copy spectral DQFI as
\begin{equation}
\label{eq:QFIspsandwich}
     2  \, \QFIsp   \leq {\mathcal{J}_{\vert \vert, \theta}^{(2)}} \leq 2 \,  \QFIsp + 2 \max \mathrm{eig}  [ \, \sum_{j,k} \pi_k L_k L_j \pi_j  ] \, .
\end{equation}
Extending main-text Eq.~\eqref{eq:ordering},
we find the ordering relation
between the two-copy DQFIs
to be
\begin{equation}
\label{eq:orderingtwocopy}
    \frac{1}{d^2} \, {\mathcal{J}_{\mathrm{Tr},\theta}^{(2)}} =
    \frac{2}{d} \, \QFItr
    \leq {\mathcal{J}_{\vert \vert, \theta}^{(2)}} \leq 2 d \QFItr = {\mathcal{J}_{\mathrm{Tr},\theta}^{(2)}} \leq d^2 {\mathcal{J}_{\vert \vert, \theta}^{(2)}} \, .
\end{equation}

Although the general
non-additivity
of the DQFI reflects the
added utility entanglement can contribute
to the estimation process,
there are specific cases where
entangled probes are no more
precise than separable ones.
In particular, for phase-insensitive
detectors corresponding to
diagonal measurement operators,
there is no advantage to entangled probes.
Physically, this is because the DQFI here
is additive with respect to number of copies,
and the following lemma
establishes this formally.

\begin{lemma}
\label{lemma:nocollectiveadvantagediagonalmeasurement}
    For estimating a phase-insensitive measurement,
    the two-copy spectral DQFI is additive, i.e.,
    \begin{equation}
        {\mathcal{J}_{\vert \vert, \theta}^{(2)}} = 2 \,  \QFIsp \, ,
    \end{equation}
    and entangled probes offer no precision advantage.
\end{lemma}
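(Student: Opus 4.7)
The plan is to show that for a phase-insensitive (diagonal) measurement, the ``cross-term'' operator $A=\sum_j L_j \pi_j$ in Eq.~\eqref{eq:Qmatrix2copy} actually vanishes, which collapses $Q^{(2)}$ to a simple additive form. First I would note that for a diagonal POVM, each ${\pi_j}_\theta$ and its SLD ${L_j}_\theta$ commute: the Lyapunov equation~\eqref{eq:SLDdetector} restricted to a basis in which ${\pi_j}_\theta$ is diagonal forces ${L_j}_\theta$ to be diagonal in the same basis (as in the proof of Theorem~\ref{th:diagonaltheorem}). This commutation reduces the SLD equation to the single relation ${L_j}_\theta \, {\pi_j}_\theta = \partial_\theta {\pi_j}_\theta$ for every outcome $j$.

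Summing over $j$ then immediately gives
\begin{equation*}
    A \;=\; \sum_{j\in[m]} {L_j}_\theta \, {\pi_j}_\theta \;=\; \sum_{j\in[m]} \partial_\theta {\pi_j}_\theta \;=\; \partial_\theta \Big( \sum_{j\in[m]} {\pi_j}_\theta \Big) \;=\; \partial_\theta \mathds{1}_d \;=\; 0 \, .
\end{equation*}
Substituting $A=0$ into Eq.~\eqref{eq:Qmatrix2copy} collapses the two-copy $Q$-operator to $Q^{(2)} = Q_\theta \otimes \mathds{1}_d + \mathds{1}_d \otimes Q_\theta$. Since the spectrum of such a Kronecker sum is $\{\lambda_i+\lambda_j\}$ with $\{\lambda_i\}$ the eigenvalues of $Q_\theta$, its largest eigenvalue is $2 \max\mathrm{eig}(Q_\theta) = 2\,\QFIsp$. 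Combining with the lower bound $\mathcal{J}_{\vert \vert, \theta}^{(2)} \geq 2\,\QFIsp$ from Lemma~\ref{lemm:twoQFInonadd} yields the claimed equality, with optimality attained by the product probe $\ket{\psi}\otimes\ket{\psi}$ where $\ket{\psi}$ is the single-copy optimal eigenvector of $Q_\theta$.

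The only non-routine step is the reduction ${L_j}_\theta {\pi_j}_\theta = \partial_\theta {\pi_j}_\theta$, which depends on the measurement being diagonalisable in a \emph{parameter-independent} basis; if the diagonalising basis depends on $\theta$, the SLD definition would pick up additional off-diagonal terms and $A$ would no longer vanish. I would therefore be explicit that ``phase-insensitive'' here means simultaneously diagonalisable in a fixed basis (as already invoked in Supp.~Mat.~\ref{supp:CFIoptdiag} for Theorem~\ref{th:diagonaltheorem}). Beyond this, the rest of the argument is a direct spectral computation and does not require any further inequalities.
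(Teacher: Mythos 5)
Your proposal is correct and follows essentially the same route as the paper: the key step in both is that diagonality forces $L_j\pi_j=\partial_\theta\pi_j$, hence $A=\sum_j L_j\pi_j=\partial_\theta\mathds{1}_d=0$, after which the paper closes via the sandwich $2\,\QFIsp\leq{\mathcal{J}_{\vert\vert,\theta}^{(2)}}\leq 2\,\QFIsp$ from Eq.~\eqref{eq:QFIspsandwich} while you equivalently read off the largest eigenvalue of the Kronecker sum $Q_\theta\otimes\mathds{1}_d+\mathds{1}_d\otimes Q_\theta$ directly. Your explicit caveat that the diagonalising basis must be parameter-independent is a worthwhile clarification the paper leaves implicit, but it does not change the argument.
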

\begin{proof}
Diagonal measurement operators~$\{\pi_j\}_{j\in[m]}$
lead to diagonal SLD operators~$\{L_j\}_{j\in[m]}$,
so that the measurement and its derivative commute,
\begin{equation}
    \pi_j L_j = L_j \pi_j = \partial_\theta \pi_j \, .
\end{equation}
The operator~$A = \sum_{j} L_j \pi_j$ thus
equals~$\sum_{j} \partial_\theta \pi_j = \partial_\theta \sum_{j}  \pi_j = 0$.
As~$A=0$, its eigenvalues are~0,
and the inequality in Eq.~\eqref{eq:twocopyspecQFI}
becomes~$2 \QFIsp \leq {\mathcal{J}_{\vert \vert, \theta}^{(2)}} \leq 2 \QFIsp$,
from which the lemma follows.
\end{proof}

The above results
for two-copy estimation
can be directly generalised
to simultaneous~$n$-copy estimation.
For the measurement~$\Pi^{\otimes n}$
with POVM elements~$\pi_{j_1} \otimes
\pi_{j_2} \otimes \dots \otimes \pi_{j_n}$,
the SLD operators
are~$L_{j_1} \otimes \mathds{1}_d^{\otimes n-1} + \mathds{1}_d \otimes L_{j_2} \otimes  \mathds{1}_d^{\otimes n-2} + \dots + \mathds{1}_d^{\otimes n-1} L_{j_n}$
where~$\{j_1, \dots, j_n\} \in [m]^{n}$.
And for~$n>2$,~$Q^{(n)}$ is given by
\begin{equation}
\label{eq:Qmatrixncopy}
\begin{split}
    Q^{(n)} &= \tilde{Q}^{(n)} + \tilde{A}^{(n)} +  \tilde{A}^{(n)\dagger} \\
    \tilde{Q}^{(n)} &= \left ( Q \otimes \mathds{1}_d^{\otimes (n-1)} + \mathds{1}_d \otimes Q \otimes \mathds{1}_d^{\otimes (n-2)} + \dots + \mathds{1}_d^{\otimes (n-1)} \otimes Q \right ) \\
    \tilde{A}^{(n)} &= A \otimes A^\dagger \otimes \mathds{1}_d^{\otimes (n-2)} + A \otimes \mathds{1}_d \otimes A^\dagger  \otimes \mathds{1}_d^{\otimes (n-3)}
    + \dots + A \otimes \mathds{1}_d^{\otimes (n-2)} \otimes A^\dagger \\
    &+ \mathds{1}_d \otimes A \otimes A^\dagger \otimes \mathds{1}_d^{\otimes (n-3)} + \dots + \mathds{1}_d^{\otimes (n-2)} A \otimes A^\dagger \, .
\end{split}
\end{equation}
Here~$\tilde{A}^{(n)} + {{\tilde{A}}^{(n)\dagger}}$
corresponds to a sum over all possible terms of the form~$\bigotimes_{j\in[n]} X_j$
with exactly one~$X_j = A$,
exactly one~$X_j = A^\dagger$
and all other~$X_j = \mathds{1}_d$.
There are~$2! \binom{n}{2}$ such terms.
Then~${\mathcal{J}_{\mathrm{Tr},\theta}^{(n)}} = \Tr[Q^{(n)}] =  n d^{n-1} \QFItr$
and~${\mathcal{J}_{\vert \vert, \theta}^{(n)}} = \Vert Q^{(n)} \Vert^2_{\mathrm{sp}}$.
A straightforward
extension of Lemma~\ref{lemm:twoQFInonadd}
and Eq.~\eqref{eq:twocopyspecQFI}
then yields
the~$n$-copy version of Eq.~\eqref{eq:QFIspsandwich},
\begin{equation}
\label{eq:QFIncopySandwich}
    n \QFIsp \leq {\mathcal{J}_{\vert \vert, \theta}^{(n)}} \leq n \QFIsp + n(n-1) \Vert A^\dagger A \Vert_\mathrm{sp}^2 = n \QFIsp + n(n-1) \bigg \Vert  \, \sum_{j,k} \pi_k L_k L_j \pi_j   \bigg \Vert_\mathrm{sp}^2 \, ,
\end{equation}
whereas the ordering relation from Eq.~\eqref{eq:orderingtwocopy}
becomes
\begin{equation}
\label{eq:orderingncopy}
    \frac{1}{d^n} \, {\mathcal{J}_{\mathrm{Tr},\theta}^{(n)}} =
    \frac{n}{d} \, \QFItr
    \leq {\mathcal{J}_{\vert \vert, \theta}^{(n)}} \leq n d^{n-1} \QFItr = {\mathcal{J}_{\mathrm{Tr},\theta}^{(n)}} \leq d^n {\mathcal{J}_{\vert \vert, \theta}^{(n)}} \, .
\end{equation}

From Eqs.~\eqref{eq:orderingtwocopy}
and~\eqref{eq:orderingncopy},
it is evident that the maximum mismatch
between the spectral DQFI
and the trace DQFI for
two-copy detector estimation
is a factor of~$d^2$,
and, more generally, a factor of~$d^n$
for~$n$-copy simultaneous estimation.
This large disagreement
can be attributed to the poor scaling
of the trace DQFI~$\QFItr$ with number of copies~$n$.
This poor scaling is, in turn, explained
by the fact that~$\Pi$ is not
a unit-trace operator, unlike quantum states,
and~$\Tr(\Pi^{\otimes n}) = d^n$ contributes to
the exponential-in-$n$ scaling of
the trace DQFI.
On the other hand,
the spectral DQFI scales
reasonably with~$n$
and is attainable in
a wider range of cases.
Furthermore,
Eqs.~\eqref{eq:orderingtwocopy}
and~\eqref{eq:orderingncopy}
clearly indicate
the spectral DQFI
to be a tighter bound
than the trace DQFI regardless
of the number of copies
of the detector being probed
simultaneously.

\begin{example}
\label{eg:heisenbergscaling}
    Consider estimating parameter~$p$
    (where~$0 \leq p \leq \frac{k^2}{k^2+1}$
    and~$k>0$) from the POVM~$\Pi_p \equiv \{ \pi_{1,p}, \pi_{2,p}\}$
    with elements
    \begin{equation}
        \pi_{1,p} = \begin{pmatrix}
        p & \nicefrac{p}{k} \\ \nicefrac{p}{k} & 1-p
        \end{pmatrix}
        \quad \& \quad
        \pi_{2,p} = \begin{pmatrix}
        1-p & -\nicefrac{p}{k} \\ -\nicefrac{p}{k} & p
        \end{pmatrix}\, .
    \end{equation}
    The two one-copy DQFIs are
    \begin{equation}
        \mathcal{J}_{\Vert, p} = \frac{\mathcal{J}_{\Tr, p}}{2} = \frac{k^2}{p(k^2(1-p)-p)} \, ,
    \end{equation}
    whereas from Eqs.~\eqref{eq:Qmatrix2copy} \&~\eqref{eq:Qmatrixncopy}, the~$n$-copy spectral DQFI
    is
    \begin{equation}
        \mathcal{J}_{\Vert, p}^{(n)} = n \left ( \frac{k^2}{p(k^2(1-p)-p)} -\frac{4}{k^2}\right ) + n^2 \frac{4}{k^2} = n \, \mathcal{J}_{\Vert, p}  + n(n-1) \Vert A^\dagger A\Vert_\mathrm{sp}^2 \, ,
    \end{equation}
    thus saturating the upper-bound in Eq.~\eqref{eq:QFIncopySandwich}
    and achieving a quadratic (Heisenberg) scaling in
    number of copies~$n$.
\end{example}

As for attainability
of the multi-copy DQFIs,
note that whenever~$\Pi$ is diagonal (in some basis),
so is~$\Pi^{\otimes 2}$,
and whenever~$\{L_j \}_{j\in[m]}$ share a common
eigenvector (say~$\ket{\psi}$),
so do~$\{\tilde{L}_{jk} \}_{j,k\in[m]}$
(given by~$\ket{\psi}^{\otimes 2}$).
Thus, whenever the single-copy DQFI~$\QFIsp$
is attainable,
so is the two-copy DQFI~${\mathcal{J}_{\vert \vert, \theta}^{(2)}}$,
and more generally,
the multi-copy DQFI~${\mathcal{J}_{\vert \vert, \theta}^{(n)}}$.
In these cases, the non-additivity
of~$\QFIsp$ mimics that of~${\CFI_\theta}_\mathrm{max}$.
And despite
the relation~${\mathcal{J}_{\vert \vert, \theta}^{(n)}} \leq {\mathcal{J}_{\mathrm{Tr},\theta}^{(n)}}$
holding for any~$n$,
the exponential term~$d^{n-1}$
in~${\mathcal{J}_{\mathrm{Tr},\theta}^{(n)}}$
leads to a much worse
scaling for the trace DQFI
compared to the spectral DQFI.
In conclusion,
the spectral DQFI~$\QFIsp$
provides the truly attainable
maximum $n$-copy Fisher information
and through its non-additivity,
directly reflects the added utility
of entangled probe states,
in other words,
a collective quantum advantage~\cite{Lorcan21,LorcanGap,Conlon2023,Conlon2023b,Conlon2023c,Das2024}.
Below, in Table~\ref{table:SEvsDE},
we summarise the similarities and differences
that have emerged between single-parameter detector estimation
and single- and multi-parameter state estimation.

\renewcommand{\arraystretch}{1.4}
\begin{table}[h!]
\centering
\begin{tabular}{|l|c|c|c|}
\hline
\textbf{} & \textbf{Single-parameter SE} & \textbf{Single-parameter DE} & \textbf{Multi-parameter SE} \\ \hline
\textbf{Information Measure} & SQFI~$\sQFI$ (scalar) & DQFI~$\QFIsp$ (scalar) & SQFI~$\sQFI$ (operator) \\ \hline
\textbf{Attainable} & Always & Subject to compatibility & Subject to compatibility \\ \hline
\textbf{Multi-copy Scaling} & Additive & Non-Additive & Non-Additive \\ \hline
\textbf{Collective Advantage} & No & Yes & Yes \\ \hline
\textbf{Heisenberg Scaling} & Yes & Yes & Yes \\ \hline
\end{tabular}
\caption{Summary comparing State Estimation (SE) and Detector Estimation (DE)
in the local estimation setting.}
\label{table:SEvsDE}
\end{table}
\renewcommand{\arraystretch}{1}

\section{General measurements producing classical and quantum outputs}
\label{supp:QPT}

In this section,
we consider measurements
that produce a classical outcome
as well as a quantum state
in each trial of characterisation.
This includes
mid-circuit measurements~\cite{RRG+22},
weak measurements~\cite{LSP+11,Aharonov1988,Ritchie1991,Pryde2005,Hosten2008,KBR+11} and
non-demolition measurements.
Whereas the POVM formalism
(that is central to our approach)
cannot capture post-measurement
states, the process representation
is valid for such measurements,
now mapping input states to a composite output
space containing both classical outcomes and corresponding
post-measurement states.
The combined larger-dimensional state
can be analysed using state estimation tools
whereas the smaller classical outcome space,
if treated on its own, reduces to the DQFI.
However, the DQFI disregards crucial parameter information
contained in the output state,
and therefore underestimates the true information content.
Nonetheless, the total amount of information
here can be bounded by using a combination
of the SQFI and the DQFI.

Below we first summarise the process
representation of a quantum measurement,
and refer to the optimal channel QFI---the
SQFI of the channel-output state
maximised over separable channel inputs---as
the process QFI~$\QFI_\mathrm{QPT}$.
Then we calculate the process QFI
by following the QPT approach.
Next, we prove that for such measurements,
the process approach is more informative
than the detector approach,
because it extracts parameter information
from not just the measurement outcomes
but also the post-measurement states.
However, we also prove that for some measurements
that do not imprint any
\textit{extra} information
on the post-measurement states
(beyond the information
present in the measurement outcome
distribution),
the DQFI~$\QFItr$
coincides with the process QFI~$\QFI_\mathrm{QPT}$.
Lastly,
we show that the information content
of generalised measurements
can only be accurately characterised by considering
the classical and the quantum outputs
as a whole, and present upper and lower bounds
to this information content.

\subsection{Channel or Process Representation of Measurements}
Our analysis proceeds by treating
the measurement described by~$\Pi_\theta \equiv \{{\pi_j}_\theta\}_{j\in [m]}$
as a quantum channel~$\mathcal{N}^{\Pi_\theta}$,
\begin{equation}
    \mathcal{N}^{\Pi_\theta} \colon \mathcal{H}_d \mapsto \mathbb{C}^m \otimes \mathcal{H}_d  \, , \quad
    \mathcal{N}^{\Pi_\theta} \left ( \rho_\mathrm{in} \right ) =  \sum_{j\in[m]} \ketbra{j}  \otimes \sqrt{{\pi_j}_\theta} \, \rho_\mathrm{in} \, \sqrt{{\pi_j}_\theta} \,
    \equiv   \rho_{\mathrm{out},\theta} \, .
\end{equation}
This channel,~$\mathcal{N}^{\Pi_\theta}$,
maps input state~$\rho_\mathrm{in}\in\mathcal{H}_d$
to~$\rho_{\mathrm{out},\theta} \in \mathbb{C}^m \otimes  \mathcal{H}_d$
with~$\mathbb{C}^m$ the~$m$-dimensional
classical space
representing measurement outcomes.
Note that
the choice of~$\sqrt{{\pi_j}_\theta} \, \rho_\mathrm{in} \, \sqrt{{\pi_j}_\theta} \,$
as the post-measurement state
is not unique---it is
specified by the dilated projection-valued measurement (PVM)
realising the measurement,
rather than the POVM itself.
Also,
the post-measurement
states~$\sqrt{{\pi_j}_\theta} \, \rho_\mathrm{in} \, \sqrt{{\pi_j}_\theta} \,$
are trace sub-normalised to~$p_j = \Tr(\rho_\mathrm{in} {\pi_j}_\theta)$.
By defining their trace-normalised counterparts
as
\begin{equation}
    \rho_{\mathrm{out},\theta}^{(j)} \coloneqq \left ( \sqrt{{\pi_j}_\theta} \, \rho_\mathrm{in} \, \sqrt{{\pi_j}_\theta} \right ) / p_j \, ,
\end{equation}
we can rewrite the channel action
as
\begin{equation}
    \mathcal{N}^{\Pi_\theta} \left ( \rho_\mathrm{in} \right )
    = \rho_{\mathrm{out},\theta} \,
    = \sum_{j\in[m]} p_j \ketbra{j}  \otimes \rho_{\mathrm{out},\theta}^{(j)} \, .
\end{equation}

\subsection{Maximum QFI for Process Estimation: Process QFI}
The channel output state~$\rho_{\mathrm{out},\theta}$
can be represented block-diagonally,
\begin{equation}
    \rho_{\mathrm{out},\theta} \, = \begin{bmatrix} \sqrt{{\pi_1}_\theta} \, \rho_\mathrm{in} \, \sqrt{{\pi_1}_\theta} & 0 & \cdots \\ 0 & \sqrt{{\pi_2}_\theta} \, \rho_\mathrm{in} \, \sqrt{{\pi_2}_\theta}  & \cdots \\
    \vdots & \vdots & \ddots \end{bmatrix} \,  \equiv \bigoplus_{j\in[m]} p_j {\rho_\mathrm{out}^{(j)}}_\theta \, ,
\end{equation}
and thus has
block-diagonal SLD operators,~$\mathcal{L}_\theta
= \oplus_{j\in[m]} {\mathcal{L}_j}_\theta$.
The SQFI of~$\rho_{\mathrm{out},\theta}$ is then
\begin{equation}
\label{eq:QPTQFIrho}
    \sQFI[\rho_{\mathrm{out},\theta}] = \Tr(\rho_{\mathrm{out},\theta} \mathcal{L}_\theta^2)
    = \sum_{j\in [m]} \Tr( p_j \rho_{\mathrm{out},\theta}^{(j)} {\mathcal{L}_j}_\theta^2)
    = \Tr(\rho_\mathrm{in} \sum_{j\in [m]} \,  \sqrt{{\pi_j}_\theta} {\mathcal{L}_j}_\theta^2  \sqrt{{\pi_j}_\theta} )
\end{equation}
and can be
maximised over all~$\rho_\mathrm{in}$.
Similar to main-text Eq.~\eqref{eq:shortproofEq1},
this maximum can
be upper-bounded by~$\Vert \sum_{j\in [m]} \,  \sqrt{{\pi_j}_\theta} {\mathcal{L}_j}_\theta^2  \sqrt{{\pi_j}_\theta} \Vert_\mathrm{sp}^2 = \Vert \sum_{j\in [m]} \, {\mathcal{L}_j}_\theta \,   {\pi_j}_\theta \, {\mathcal{L}_j}_\theta   \Vert_\mathrm{sp}^2 $
but this expression is still
dependent on~$\rho_\mathrm{in}$ via~${\mathcal{L}_j}_\theta$.
Thus, without a general closed form,
we need to compute the ultimate QPT precision bound using
the process QFI, which we define as
\begin{equation}
\label{eq:QPTstateestSuppNote}
    \QFI_\mathrm{QPT}\left [ \Pi_\theta \right ]  \coloneqq \max_{\rho_\mathrm{in}\in \mathcal{D}(\mathcal{H}_d)} \sQFI [\rho_{\mathrm{out},\theta}] \, .
\end{equation}

\subsection{Process QFI Larger than DQFI}
In general,
the process QFI~$\QFI_\mathrm{QPT}\left [ \Pi_\theta \right ]$
is larger than our DQFIs~$\QFItr \left [ \Pi_\theta \right ]$
and~$\QFIsp \left [ \Pi_\theta \right ]$,
because access to the post-measurement states
can only be more informative than simply
estimating from the measurement outcomes.
We provide a short proof
of this claim.
Note that the reduced
state on the first sub-system
of~$\rho_{\mathrm{out},\theta}$
is the classical state
\begin{equation}
    \Tr_{\mathcal{H}_d} \left [ \rho_{\mathrm{out},\theta} \right ] = \sum_{j\in[m]} p_j \ketbra{j} \, ,
\end{equation}
where~$\Tr_{\mathcal{H}_d}$ denotes
partial-tracing out the second sub-system.
The SQFI of this classical state
is simply the CFI of the distribution~$\{p_j\}_{j\in [m]}$.
As partial-tracing is a completely-positive
trace-preserving operation, the SQFI of a state
cannot increase under partial-tracing~\cite{Wilde13}.
This proves that for any given probe state~$\rho_\mathrm{in}$,
\begin{equation}
\label{eq:CFIlessQFI}
    \CFI_\theta \left [ \{ p_j \}_{j \in [m]} \right ] \leq \, \sQFI \left [ \rho_{\mathrm{out},\theta} \right ].
\end{equation}
We now need to maximise Eq.~\eqref{eq:CFIlessQFI}
over all probe states~$\rho_\mathrm{in}\in \mathcal{D}(\mathcal{H}_d)$,
but the optimal probe state might be different for~$\CFI_\theta$
and for~$\sQFI$.
Instead, we first note that the inequality holds for
the~$\CFI_\theta$-maximising input state~$\rho^\mathrm{opt}$,
as defined in Eq.~\eqref{eq:maxCFIdef}.
Further, the output state SQFI~$\sQFI [\rho_{\mathrm{out},\theta}]$
when probing with~$\rho^\mathrm{opt}$
is at most equal to the maximum~$\sQFI$
over all states~$\rho_\mathrm{in}\in\mathcal{D}(\mathcal{H}_d)$.
Therefore, we have
$    {\CFI_\theta}_\mathrm{max}
    \leq \QFI_\mathrm{QPT}$.
For the attainable families
of measurements, this means that
the DQFI presents a tighter bound
than the process approach
if
post-measurement states
are inaccessible
for characterisation.

\subsection{Process QFI Equal to DQFI}
For a class of measurements,
the process QFI~$\QFI_\mathrm{QPT}$ and the DQFI~$\QFIsp$
agree on
the parameter information
content of measurements.
These are measurements
for which
the CFI of measurement outcomes
coincides with the SQFI of the post-measurement
state,
at least for the optimal probe state.
As a special case,
we now prove that for measurements
parametrised such that~${\pi_j}_\theta$
and~$\partial_\theta {\pi_j}_\theta$
commute,
the two approaches yield the same QFI.
The proof is broken
up into three steps for readability.

\step{1}
Note that the detector
SLD operators were defined in the main-text as
\begin{equation}
    {L_j}_\theta {\pi_j}_\theta + {\pi_j}_\theta {L_j}_\theta = 2 \partial_\theta {\pi_j}_\theta \, ,
\end{equation}
whereas the QPT SLD operators
are defined via
\begin{equation}
\label{eq:defsldQPT}
    {\mathcal{L}_j}_\theta \sqrt{{\pi_j}_\theta} {\rho_\mathrm{in}} \sqrt{{\pi_j}_\theta} + \sqrt{{\pi_j}_\theta} {\rho_\mathrm{in}} \sqrt{{\pi_j}_\theta} {\mathcal{L}_j}_\theta
    = 2 \partial_\theta  ( \sqrt{{\pi_j}_\theta} {\rho_\mathrm{in}} \sqrt{{\pi_j}_\theta} ) \, .
\end{equation}
Clearly, for~$\rho_\mathrm{in} = \mathds{1}_d/d$,
i.e., the maximally-mixed probe state,
the two definitions are identical
meaning~$\mathcal{L}_j = L_j$.

\step{2}
As~${\pi_j}_\theta$
commutes with~$\partial_\theta {\pi_j}_\theta$,
we can write
$$\partial_\theta \sqrt{{\pi_j}_\theta}
= \frac12 \, {\pi_j}_\theta^{-\frac12} \left ( \partial_\theta {\pi_j}_\theta \right)
=  \frac12  \,  \left (  \partial_\theta {\pi_j}_\theta \right) {\pi_j}_\theta^{-\frac12} \, ,$$
where~${\pi_j}_\theta^{-\frac12}$ denotes the
inverse (or pseudo-inverse) of
the unique positive, Hermitian square-root~$\sqrt{{\pi_j}_\theta}$
of~${\pi_j}_\theta$.
Thus, the right-hand-side of Eq.~\eqref{eq:defsldQPT}
becomes
\begin{equation}
    \begin{gathered}
        2 \partial_\theta (\sqrt{{\pi_j}_\theta} \,\rho_\mathrm{in} \, \sqrt{{\pi_j}_\theta})  \\
        =
         {\pi_j}_\theta^{-\frac12} \, \partial_\theta {\pi_j}_\theta \, \rho_\mathrm{in} \, {\pi_j}_\theta^{\frac12}
         + {\pi_j}_\theta^{\frac12} \, \rho_\mathrm{in}  \, \partial_\theta  {\pi_j}_\theta \, {\pi_j}_\theta^{-\frac12} \\
         =  {\pi_j}_\theta^{-\frac12} \,  \partial_\theta {\pi_j}_\theta  {\pi_j}_\theta^{-\frac12} \sqrt{{\pi_j}_\theta} {\rho_\mathrm{in}} \sqrt{{\pi_j}_\theta}
         + \sqrt{{\pi_j}_\theta} {\rho_\mathrm{in}} \sqrt{{\pi_j}_\theta} {\pi_j}_\theta^{-\frac12} \,  \partial_\theta {\pi_j}_\theta  {\pi_j}_\theta^{-\frac12} \, ,
    \end{gathered}
\end{equation}
comparing which with the left-hand-side of Eq.~\eqref{eq:defsldQPT},
we conclude~${\mathcal{L}_j}_\theta = {\pi_j}_\theta^{-\frac12} \partial_\theta {\pi_j}_\theta  {\pi_j}_\theta^{-\frac12} $.
Moreover, as~${\pi_j}_\theta$ and~$\partial_\theta {\pi_j}_\theta$ commute,
\begin{equation}
\label{eq:term4eq1}
    2 ( {\pi_j}_\theta^{-1} \partial_\theta {\pi_j}_\theta - \partial_\theta {\pi_j}_\theta {\pi_j}_\theta^{-1}) = 0 \implies
    {\pi_j}_\theta^{-1} {L_j}_\theta {\pi_j}_\theta ={\pi_j}_\theta {L_j}_\theta {\pi_j}_\theta^{-1} \, ,
\end{equation}
where~${\pi_j}_\theta^{-1}$ represents
the inverse (or pseudo-inverse)
of~${\pi_j}_\theta$.

\step{3}
We can now rewrite the term inside the trace
in the process QFI from Eq.~\eqref{eq:QPTQFIrho}
in terms of the detector SLD operators as
\begin{gather*}
\sqrt{{\pi_j}_\theta} {\mathcal{L}_j}_\theta^2  \sqrt{{\pi_j}_\theta}
= \partial_\theta {\pi_j}_\theta \, {\pi_j}_\theta^{-1} \, \partial_\theta {\pi_j}_\theta \\
= \frac{1}{4} \left ( ({L_j}_\theta {\pi_j}_\theta + {\pi_j}_\theta {L_j}_\theta) {\pi_j}_\theta^{-1} ({L_j}_\theta {\pi_j}_\theta + {\pi_j}_\theta {L_j}_\theta) \right ) \\
= \frac14 (  {\pi_j}_\theta {L_j}_\theta^2 +  {L_j}_\theta^2 {\pi_j}_\theta  + {L_j}_\theta {\pi_j}_\theta {L_j}_\theta +  {\pi_j}_\theta {L_j}_\theta  {\pi_j}_\theta^{-1} {L_j}_\theta  {\pi_j}_\theta ) \, .
\end{gather*}
Thus, the maximum of
the SLD QFI~$\Tr(\rho_{\mathrm{out},\theta} \mathcal{L}_\theta^2)$
over probe states~$\rho_\mathrm{in}$
is the largest eigenvalue
of the matrix
\begin{equation}
\label{eq:bigmatrixqpt}
    \frac14 \sum_{j \in [m]} (  {\pi_j}_\theta {L_j}_\theta^2 +  {L_j}_\theta^2 {\pi_j}_\theta  + {L_j}_\theta {\pi_j}_\theta {L_j}_\theta +  {\pi_j}_\theta {L_j}_\theta  {\pi_j}_\theta^{-1} {L_j}_\theta  {\pi_j}_\theta ) \, .
\end{equation}
Using the fact that
the eigenvalues of~$AB$ and~$BA$
are the same for any matrices~$A$ and~$B$,
we conclude that the largest eigenvalue of~$ {\pi_j}_\theta {L_j}_\theta^2$
equals that of~${L_j}_\theta^2 {\pi_j}_\theta$
and that of~${L_j}_\theta {\pi_j}_\theta {L_j}_\theta$.
For the last term in Eq.~\eqref{eq:bigmatrixqpt},
we use Eq.~\eqref{eq:term4eq1}
to write
\begin{equation}
    {\pi_j}_\theta {L_j}_\theta  {\pi_j}_\theta^{-1} {L_j}_\theta  {\pi_j}_\theta = {\pi_j}_\theta {L_j}_\theta  {\pi_j}_\theta {L_j}_\theta  {\pi_j}_\theta^{-1}
\end{equation}
and then argue that the eigenvalues
of~$ ( {\pi_j}_\theta {L_j}_\theta  ) \, ( {\pi_j}_\theta {L_j}_\theta  {\pi_j}_\theta^{-1}) $
should equal those
of~$ ( {\pi_j}_\theta {L_j}_\theta  {\pi_j}^{-1}) \, ( {\pi_j}_\theta {L_j}_\theta  )
=   {\pi_j}_\theta {L_j}_\theta {L_j}_\theta$
and hence equal those of~$ {L_j}_\theta {\pi_j}_\theta {L_j}_\theta$.
Thus, the maximum eigenvalue of
the matrix in Eq.~\eqref{eq:bigmatrixqpt}
is equal to the
maximum eigenvalue of
$\sum_{j\in [m]} {L_j}_\theta {\pi_j}_\theta {L_j}_\theta$,
which is exactly the spectral DQFI~$\QFIsp$.
This proves that for measurements
with~${\pi_j}_\theta$
and~$\partial_\theta {\pi_j}_\theta$
commuting,
\begin{equation}
        \QFI_\mathrm{QPT}\left [ \Pi_\theta \right ] = \max_{\rho_\mathrm{in}} \sQFI [\rho_{\mathrm{out},\theta}] = \QFIsp \left [ \Pi_\theta \right ] \, .
\end{equation}
The optimal probe state according
to both approaches also agree---it
is the eigenvector of~$\sum_{j\in [m]} {L_j}_\theta {\pi_j}_\theta {L_j}_\theta$
corresponding to its largest eigenvalue.

\subsection{Information Content of General Measurements}

\begin{figure}[hbtp]
    \centering
    \includegraphics[width=0.5\columnwidth]{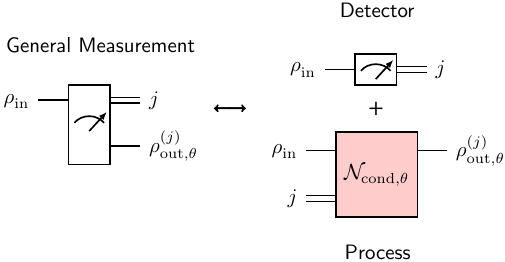}
    \caption{Information content of output from general measurements.
    The SQFI of the combined channel output state equals the sum of
    the CFI of the outcome distribution and the expected SQFI
    of the conditional output states.}
    \label{fig:generalmeasurements}
\end{figure}

The defining equation~\eqref{eq:defsldQPT} for
the process SLD operators~${\mathcal{L}_j}_\theta$
can be rewritten, to explicitly include the probabilities
of measurement and the normalised conditional output states,
as
\begin{equation}
    {\mathcal{L}_j}_\theta \rho_{\mathrm{out},\theta}^{(j)} +  \rho_{\mathrm{out},\theta}^{(j)} {\mathcal{L}_j}_\theta
    = 2 \frac{\partial_\theta  ( {p_j}_\theta \rho_{\mathrm{out},\theta}^{(j)} )}{{p_j}_\theta}
    = 2 \frac{\partial_\theta  ( {p_j}_\theta  )}{{p_j}_\theta } \rho_{\mathrm{out},\theta}^{(j)}
    + 2 \partial_\theta  (  \rho_{\mathrm{out},\theta}^{(j)} )\, .
\end{equation}
The first and the second terms on the right hand side
represent the change of the process
reflected in the change of output probabilities
and in the change of conditional output states,
respectively.
The classical part of the anti-commutator (or Lyapunov) equation
for~${\mathcal{L}_j}_\theta$
is solved by~$\nicefrac{\partial_\theta {p_j}_\theta}{{p_j}_\theta} \mathds{1}_d$.
We may therefore assume~${\mathcal{L}_j}_\theta \coloneqq \nicefrac{\partial_\theta {p_j}_\theta}{{p_j}_\theta} \mathds{1}_d + {\mathcal{L}'_j}_\theta$,
which makes~${\mathcal{L}'_j}_\theta$ the SLD operator
for~$\rho_{\mathrm{out},\theta}^{(j)}$, the quantum output state
conditioned on the~$j^\text{th}$ classical outcome.
As a result, the output SQFI equals
\begin{equation}
    \sQFI[\rho_{\mathrm{out},\theta}]
    = \sum_{j\in [m]} {p_j}_\theta  \Tr( \rho_{\mathrm{out},\theta}^{(j)} {\mathcal{L}_j}_\theta^2)
    = \CFI_\theta[\{ {p_j}_\theta \}] + \sum_{j\in[m]} {p_j}_\theta \sQFI[\rho_{\mathrm{out},\theta}^{(j)}] \, .
\end{equation}
It is specifically the second term
here, the convex combination of the output state SQFIs,
with respect to the distribution of measurement outcomes,
that the detector approach ignores.
The spectral DQFI~$\QFIsp$ still upper-bounds~$\CFI_\theta[\{ {p_j}_\theta \}]$
but the second term
may be maximised for some probe state other
than the CFI-optimal one.
Nonetheless,
an upper bound to~$\QFI_\mathrm{QPT} \equiv \max_{\rho_\mathrm{in}} \sQFI[\rho_{\mathrm{out},\theta}]$
can be obtained by treating the two terms independently.
The first term is upper-bounded by~${\CFI_\theta}_\mathrm{max}$,
which itself is upper-bounded by the DQFI of
the effective POVM.
The second term can be interpreted
as the expected SQFI of the conditional
output states. Being a convex sum, this term is therefore upper-bounded
by the largest of the~$m$ different conditional-output SQFIs, i.e.,~$\max_{j\in[m]} \sQFI[\rho_{\mathrm{out},\theta}^{(j)}]$.
Computing the maximum of this quantity
over input probes~$\rho_\mathrm{in}$
is therefore equivalent
to using the standard approach to
optimal process estimation for each of the~$m$
different conditional channels
(see Fig.~\ref{fig:generalmeasurements}),
computing the process QFI for each conditional channel,
and then choosing the maximum out of these.
Efficient characterisation here would require
measuring the quantum output conditioned on the
classical outcomes, while probing with the optimal
states~\cite{CN97,Wilde13}.


\end{document}